\pgfplotsset{compat=newest}
\DeclareMathOperator \Esp{E}
\DeclareMathOperator \p{P}
\definecolor{refcolor}{rgb}{0, 0, 0.5}
\newtheorem{proposition}{Proposition}\newtheorem{lemma}{Lemma}\theoremstyle{definition}
\newtheorem{assumption}{Assumption}
\date{{\large January 25, 2022}}
\begin{document}
\title{Setting Interim Deadlines to Persuade\thanks{Thanks to Pavel Kocourek, Yiman Sun, Jan Zápal, Inés Moreno de Barreda, Jeff Ely, Jean Tirole, Péter
Esö, Margaret Meyer, Colin Stewart, Francesc Dilmé, Ansgar Walther, Ole Jann, Maxim Ivanov, Egor Starkov, Ludvig Sinander, Maxim Bakhtin, Rastislav Rehák,
 Vladimír Novák, Arseniy Samsonov, and the audiences at OLIGO 2022, SING17, CMiD2022, EEA-ESEM Congress 2022, and EWMES 2022 for helpful comments.}}
\author{Maxim Senkov \thanks{Email: msenkov@cerge-ei.cz. CERGE-EI, a joint workplace of Center for Economic Research and Graduate
Education, Charles University and the Economics Institute of the Czech
Academy of Sciences, Politických vězňů 7, P.O. Box 882, 111 21 Prague
1, Czech Republic.} \thanks{The author acknowledges financial support from the Lumina Quaeruntur fellowship (LQ300852101) of the Czech Academy of Sciences and from Czech Science Foundation (GACR) under grant no. 20-27630S.}} 
\maketitle
\begin{abstract}
{A principal funds a multistage project and retains the right to cut the funding if it stagnates at some point. An agent wants to convince the principal to fund the project as long as possible, and can design the flow of information about the progress of the project in order to persuade the principal. If the project is sufficiently promising ex ante, then the agent commits to providing only the good news that the project is accomplished. If the project is not promising enough ex ante, the agent persuades the principal to start the funding by committing to provide not only good news but also the bad news that a project milestone has not been reached by an interim deadline. I demonstrate that the outlined structure of optimal information disclosure holds irrespective of the agent’s profit share, benefit from the flow of funding, and the common discount rate.}\\
 {\footnotesize{}{}{}{}{}{} }\textbf{\footnotesize{}{}{}{}{}{}Keywords:}{\footnotesize{}{}{}{}{}{}
dynamic information design, informational incentives, interim deadline, multistage project.}\\
 {\footnotesize{}{}{}{}{}{} }\textbf{\footnotesize{}{}{}{}{}{}JEL
Classification Numbers:}{\footnotesize{}{}{}{}{}{} D82, D83,
G24, 031. }{\footnotesize\par}

\end{abstract}

\newpage
\section{Introduction}

The development of any innovation requires investment of both time and capital, while the outcome of this investment is inherently stochastic.
Usually, the investor, being the principal, retains the option to stop funding the innovative project if at some point it proves unsuccessful.
It is widely documented that the agent running the project tends to prefer the principal to postpone the stopping of the funding to enjoy either the extra funds or an additional chance to turn her research idea into a success story.\footnote{Agency conflict in which the agent prefers the principal to postpone abandoning the project that the agent is working on is studied in \cite{admati1994robust,gompers1995,bergemann1998venture,bergemann2005financing,cornelli2003stage}.} In such an agent-principal relationship, the agent's technological expertise and the quality of her innovative idea often allow her to manipulate the principal by designing how and when the outcomes of the research and development process are announced.

Recently, venture capital firms have started to pour billions into startups focused on the development of quantum computers, which are known for their technological complexity and difficulty of construction. The economic viability
of quantum computing is questioned by a number of experts; however, the startups promise the investors a completed product in the foreseeable future.\footnote{''The Quantum Computing Bubble.'' \emph{Financial Times}, August 25,
2022.} For instance, a quantum startup PsiQuantum announced to potential investors that it hopes to develop a commercially-viable quantum computer within five years and managed to raise more than $\$200$ million in 2019.\footnote{''Bristol Professor's Secretive Quantum Computing Start-Up Raises £179m.'' \emph{The Telegraph}, November 16, 2019.}

This paper studies the implications of the agent's control of information during the progress of a research and development project when the agent and the principal disagree about the timing of when to abandon the research
idea. I ask: What is the degree of transparency to which an agent should commit before starting to work on an innovative project? In particular, which terms for self-reporting on the progress of the project should a startup propose while discussing the term sheet with a venture capitalist? As I show, depending on the properties of the project, the startup would strategically choose both the timing for the disclosure of updates on the progress of the project and the type of news it discloses - either good or bad.

I study a game between a startup and an investor. The startup controls the information on the progress of the project and has the power to propose the terms for self-reporting on it to the venture capitalist.\footnote{I discuss the reasoning behind this assumption in Section \ref{sect:disc_assumpt}.}
The startup has an
intertemporal commitment power and commits to a dynamic information policy, which can be
interpreted as designing the terms of the contract specifying how the information on the progress of the project is disclosed over time as the project unfolds. In return, the investor continuously provides funds and chooses when
to stop funding the project.

The project has two stages and evolves stochastically over time toward completion, conditional on continuous investment in it. The completion of each of the stages of project occurs according to a Poisson process. The completion of the first stage serves as a milestone, such as the development of a prototype, while completion of the second stage achieves the project. The investor gets a lump-sum project completion profit if and only if he stops investing after the project is completed and before an exogenous project completion deadline, and the startup prefers the principal to postpone stopping the funding.\footnote{I discuss the reasons for the presence of the project completion deadline in Section \ref{sect:setup}.}

As the investor receives the reward only after a prolonged period of investment, he initially invests without being able to see if the investment is worthwhile. Hence, it is individually rational for the investor to start investing only if he is sufficiently optimistic regarding the future of the project. An important feature of the setting that I consider is that the \emph{information is symmetric at the outset}: not only the investor, but also the startup is unable to find out if the project will bring profit to the investor or not - this can be inferred only as time goes on and some evidence is accumulated. The only tool that the startup has for persuading the investor to start investing is the promise of future reports on the progress of the project. 

Clearly, the good news about the completion of the project is valuable to the investor as it helps him to stop investing in a timely manner. Further, as evidence regarding the project accumulates over time, failure to pass the milestone in a reasonable time makes the project unlikely to be accomplished in time - and the investor prefers to stop investing after observing such bad news. When designing the information policy, the startup chooses optimally between the provision of these two types of evidence in order to postpone the investor's stopping decision for as long as possible.

I show that the startup's choice of information policy depends on the ex ante attractiveness of the project for the investor. The attractiveness is captured by the \emph{flow cost-benefit ratio of the project}.
Thus, a project is relatively more attractive ex ante to the investor when its flow investment cost is lower, its project completion profit is higher, or the Poisson rate, at which completion of one stage of the project occurs, is higher.

When the project is sufficiently attractive ex ante to the investor, promises to provide information only on the completion of the project serve as a sufficiently strong incentive device to motivate the investor to start the funding at the outset. Further, the future news on the completion of the project does not harm the total expected surplus generated by the interaction of the startup and investor, while the future news on the project being poor decreases the surplus that the startup can potentially extract from the investor. Accordingly, the startup commits to providing only the good news, but not the bad news on the project in the future: it \emph{discloses the completion of the project and postpones the disclosure} in order to ensure the extraction of as much surplus as possible from the investor. In the context of quantum computing, the startup optimally chooses and announces to the venture capitalist the date by which it plans to have a fully developed quantum computer. When the date comes, the startup reports completion if the quantum computer has been completed; if not, the startup reports the completion as soon as it occurs. 

The situation changes when the project does not look promising to the investor ex ante. In that case, if the startup commits to disclosing only the completion of the project, the investor will not have the sufficient motivation to start investing in it. Thus, the startup extends the information policy to encompass not only the good news but also the bad. As in the case of the promising project, the startup discloses the project's completion and does so without any postponement, thereby fully exploiting its preferred incentive tool. In addition, the startup sets a date at which the bad news is released if the milestone of the project has not yet been reached - this date is \emph{the interim reporting deadline}.

Setting the interim deadline, the startup chooses a deterministic date, which it optimally postpones. As the startup prefers the investor to postpone stopping the funding, it prefers the interim deadline to be at a later expected date. Further, the completion of the stages of project according to a Poisson process makes both the startup and the investor risk-averse with respect to the date of the interim deadline. Thus, the startup prefers to set the interim deadline at a deterministic date and to postpone it as late in time as possible in order to extract all the surplus from the investor. In the context of quantum computing, the startup optimally chooses and announces a fixed date by which it plans to have a prototype of the quantum computer. When the date comes, reporting having the prototype at hand convinces the investor to continue the funding, and reporting not having the prototype leads to termination of the project.

Finally, I demonstrate that the outlined structure of the optimal information disclosure holds for a broad class of preferences of the startup and the investor. I allow for profit-sharing between the startup and the investor, varying degrees of the startup's benefit from the flow of funding, and exponential discounting, and show that the startup prefers not to set any interim deadlines whenever the project is sufficiently promising to the investor. The future disclosure of the completion of the project promises investor profit in exchange for a prolonged investment, while the disclosure of the stagnation of the project at the interim deadline promises investor only saved costs, as further investment stops. Thus, when the project is attractive, the startup can make the funding and the beneficial experimentation relatively longer by setting no interim deadlines.


\section{Related literature}

My paper is related to the \emph{literature on dynamic information design}. The closest paper in this strand of literature is by \citet{ely2020moving}. Similarly to my paper, they study the optimal persuasion of a receiver facing a lump-sum payoff to incur costly effort for a longer time. In my model, as in theirs, the sender is concerned to satisfy the beginning-of-the-game individual rationality constraint of the receiver when choosing the information policy. Further, the ``leading on'' information policy in \citet{ely2020moving} has a similar flavor to the ``postponed disclosure of completion'' information policy in my paper: promises of news on completion of the project serve as an incentive device sufficient to satisfy the receiver's individual rationality constraint.

However, there are several substantial differences between \citet{ely2020moving} and my paper. While in their model the state of the world is static and drawn at the beginning of the game, in my model it evolves endogenously over time, given the receiver's investment. As a result, the initial disclosure used in the “moving goalposts” policy in \citet{ely2020moving} cannot provide additional incentives for the receiver in my model. The sender in my model uses another incentive device to incentivize the receiver to opt in at the initial period: she commits to an interim deadline at which she discloses that the first stage of the project is not completed.

Another closely related paper is by \citet{orlov2020persuading}. The
main similarity to my paper lies in the sender's incentive to postpone
the receiver's irreversible stopping decision. The sender in their
paper prefers to backload the information provision, which is in
line with the properties of the optimal information policy in my paper. However, there are a
number of substantial differences between our papers. In \citet{orlov2020persuading},
the sender does not have the intertemporal commitment power. Further,
the receiver obtains a payoff at each moment of time, and thus the sender does
not need to persuade the receiver to opt in at the beginning of the
game.

\citet{ely2017beeps,renault2017optimal,ball2019dynamic} also analyze
dynamic information design models. However, their papers focus on
persuading a receiver who chooses an action and receives a payoff at each moment
of time, whereas in my paper the receiver takes
an irreversible action and receives a lump-sum project completion payoff. \citet{henry2019research} consider a dynamic
Bayesian persuasion model in which, similarly to my model, the receiver
needs to take an irreversible decision. However, the incentives of
the sender and receiver differ from my model: the receiver wants to
match the static state of the world and the sender is concerned with
both the receiver's action choice and with the timing of that choice. \citet{basak2020panics} study dynamic information design in a regime change game. The optimal information policy in their model resembles the interim deadline policy in my model: at a fixed date, the principal sends the recommendation to attack if the regime is substantially weak by that time.

My paper is also related to the \emph{literature on the dynamic provision
of incentives for experimentation} \citep{bergemann1998venture,bergemann2005financing,curello2020screening,madsen2020designing}.
The closest papers in this strand of literature are by \citet{green2016breakthroughs} and \citet{wolf2017informative}.
Similarly to my model, both papers consider design of a contract regarding a two-stage project, in which the completion of stages arrives at a Poisson rate. In \citet{green2016breakthroughs}, there is no project completion deadline and the quality of the project is known to be good, while in \citet{wolf2017informative} the quality of the project is uncertain. In contrast to my paper, both papers focus on a canonical moral-hazard problem and give the power to design the terms of the contract to the investor (principal) rather than the startup (agent). In particular, the contract in \citet{green2016breakthroughs} specifies the terms for the agent's reporting on the completion of the first stage of the project.
Similarly to my model, the optimal
reporting takes the form of a deterministic interim deadline: at a principal-chosen date, the agent truthfully reports if she has already completed the first stage, which determines the further funding of the project.\footnote{In a broad sense, my paper also relates to the small strand of theoretical
literature on dynamic startup-investor and startup-worker relations
under information asymmetry \citep{kaya2020paying,ekmekci2020learning}.
However, while these papers focus on the signaling of
the type of startup, I study the provision of information by the startup on
the progress of the project.}

\section{The model}

\subsection{Setup}\label{sect:setup}
I consider a game between an agent (she, sender) and a principal (he,
receiver). Time is continuous and there is a publicly observable deadline
$T$, $t\in\left[0,T\right]$.\footnote{The results for the setting without a deadline are easily obtained
by considering $T\rightarrow\infty$. They are presented in Appendix
\ref{section_Tinfty}.} 
For each $t$, \emph{the principal} chooses sequentially to invest
in the project $\left(a_{t}=1\right)$ or not $\left(a_{t}=0\right)$.
The flow cost of the investment is constant and given by $c$. The
choice of $a_{t}=0$ at some $t$ is irreversible and induces the
end of the game.\footnote{The absence of the principal's commitment to an investment policy and the irreversibility of the stopping decision capture the venture capitalist's option to abandon the project, e.g., in the case of its negative net present value.}

The assumption that the project needs to be completed in finite time is natural in many economic settings. The main interpretation for $T$ is an expected change in market conditions that renders the project unprofitable. In the context of a research and development project, $T$ could stand for the date at which the competitor's innovative product is expected to enter the market, or the date at which the competitor is expected to get a patent on the competing innovation.

The \emph{state of the world} at time $t$ is captured by the number
of stages of the project completed by $t$, $x_{t}$, and the project
has two stages, $x_{t}\in\{0,1,2\}$. The state process begins at
the state $x_{0}=0$ and, conditional on the continuation of the investment
by the principal, it increases at a Poisson rate $\lambda>0$. 
Information on the number of stages completed is controlled by \emph{the
agent}. Thus, when making investment decisions, the principal relies
on the information provided by the agent.

The project brings the profit $v$ if and only if the \emph{second stage of the project}
has been completed by the time of stopping, and a payoff of $0$,
otherwise. 
I assume that all of the profit goes to the principal. This assumption simplifies the exposition without affecting the main results of the paper. I relax this assumption and consider the profit-sharing between the agent and the principal in Section \ref{sec:prof_share}.

There is a \emph{conflict of interest} between the agent and the principal as the agent benefits from using the funds provided by the principal for running the project, possibly diverting them for her benefit. Thus, the agent faces the flow payoff of
$c$ and wants the principal to postpone his irreversible decision to stop as long as possible.

I study the agent's choice of information provision to the principal.
The agent chooses an information policy to maximize her expected long-run
payoff. I assume that the agent has the power to announce and commit
to a policy. An \emph{information policy} $\sigma$ is a rule that for each
date $t$ and for each past history $h\left(t\right)$ specifies
a probability distribution on the set of messages $M$. The history includes all past and current realizations of the process and all past message draws and principal's action choices. 

When choosing
an information policy, the agent faces a rich strategy space. First,
she can choose if the information on the completion of the first,
or second, stage of the project will be disclosed by the policy. Second,
she can choose how to disclose the completion of a stage of the project:
for instance, to do so immediately or to postpone the disclosure.

The timing of the game is as follows. First, at $t=0$, the agent
publicly commits to an information policy $\sigma$. Second, at each
$t$ the principal observes the message generated by the information
policy and makes her investment decision. The game ends when the principal
chooses to stop investing or at $T$, if he keeps investing until
$T$. I assume that whenever indifferent about investing or not, the
principal chooses to invest, and whenever indifferent about disclosing
information or not, the agent chooses not to disclose.

Throughout the paper, I use the following intuitive notational convention: for any two dates at
which the principal stops investing, $S$ and $\tau$,

\begin{equation*}
\begin{aligned}
 S\wedge\tau & \coloneqq\min\left\{S,\tau\right\},\\
S\lor\tau & \coloneqq\max\left\{S,\tau\right\}. 
\end{aligned}    
\end{equation*}

\subsection{Discussion of assumptions}\label{sect:disc_assumpt}
The main interpretation of the considered dynamic information design
problem is the contracting between the agent (startup) and the principal (investor) on
the terms of reporting on the completion of stages of the project
that are not publicly observed. The terms could take the form of a
proposed formal reporting schedule or a schedule 
of meetings with the investor. Non-observability of the stage completions
stems from the fact that, while the technology is being developed in the
lab, the principal either does not have sufficient expertise
to assess the progress or the full access to the
lab.

I assume that the principal does not have the power to propose the
terms for reporting to the agent and, e.g., make her fully disclose
the progress achieved in the lab. The most natural interpretation of such
an asymmetry in the bargaining power is the asymmetry in the market for private equity: there are sufficiently many
investors willing to invest in a particular technology or sufficiently few startups working on the technology.\footnote{In the alternative interpretation of the model, contracting concerns internal corporate
research and development and takes place between the leading researcher
and the headquarters of a company. The leading researcher's bargaining
power in proposing the terms for disclosure again stems from the market
asymmetry: the specialists having the desired level of expertise might
be in a short supply.} For instance, investors' interest in quantum computing has grown markedly in recent years, while there are reports of a shortage of human capital in this industry.\footnote{''The Quantum Computing Bubble.'' \emph{Financial Times}, August 25,
2022.}\footnote{``Quantum Computing Funding Remains Strong, but Talent Gap Raises Concern'', a report by McKinsey Digital, https://www.mckinsey.com/business-functions/mckinsey-digital/our-insights/quantum-computing-funding-remains-strong-but-talent-gap-raises-concern/.} Another example is the communication software industry, which has recently experienced increased investment activity.\footnote{''This Is Insanity: Start-Ups End Year in a Deal Frenzy.'' \emph{Best Daily Times}, December 07,
2020.}

As the agent enjoys the power of full control over the information
on the progress of the project, she is completely free to offer what
is disclosed and when. In particular, the contract between the agent
and the principal can specify that the completion of the second stage
of the project is disclosed with a delay rather than immediately.
The agent who has an advantage in expertise over the principal can
rationalize such a condition by saying that before the success is
reported to the principal, it is worth re-checking the data, which
takes time.

Even though the principal can not dictate to the agent which information
and how she should disclose, the principal can potentially hire an
external monitor who would visit the lab and prepare an additional
report on the progress of the project. In that case, the contract
signed between the agent and the principal will account for both free
information that the agent promised to provide and additional costly
information which the principal obtains with the help of a monitor.
In the baseline version of the model, I assume that the principal
can not use the help of a monitor. This can be rationalized by the
shortage of experts in the field, which makes hiring a monitor prohibitively
costly. Alternative interpretation is that the agent restricts the principal's access
to additional information on the progress of the project by stating
that a potential information leak would put the technology being developed
at risk.\footnote{In particular, this rationale was used to restrict the investors'
access to information on the progress of the project in the case of
Theranos, see ''What Red Flags? Elizabeth Holmes Trial Exposes Investors’ Carelessness.'' \emph{The New York Times}, November 04,
2021.}

The information policy relies upon the agent's commitment power, which
holds not only within each date but also between the dates. The agent's
commitment within each date follows from prohibitively high legal
costs of cooking up the evidence. The agent's intertemporal commitment
stems from the rigidity of terms and form of reporting fixed in the
contract that the agent and the principal sign at the outset of the game.

\section{No-information and full-information benchmarks}

\subsection{No-information benchmark}

\label{section_no_inf}

First, I consider the simple case when the information policy is given
by $\sigma^{NI}$: the same message $m$ is sent for all histories
$h\left(t\right)$ and all dates $t$. Thus, the agent provides
no information regarding the progress of the project. As I demonstrate,
in this case the principal starts investing in the project if and
only if it is sufficiently promising for the principal from the ex
ante perspective and invests until a deterministic interior date.

As no news arrives, the principal bases his decision about when to
stop investing on his unconditional belief regarding the completion
of the second stage of the project. I denote the unconditional belief
that $n$ stages of the project were completed by $t$, by $p_{n}\left(t\right)$,
i.e., $p_{n}\left(t\right)\coloneqq\p\left(x_{t}=n\right)$. The state
of the world is fully determined by $p\left(t\right)$ given by 
\[
\begin{aligned}p_{0}\left(t\right) & =e^{-\lambda t},\\
p_{1}\left(t\right) & =\lambda te^{-\lambda t},\\
p_{2}\left(t\right) & =1-e^{-\lambda t}-\lambda te^{-\lambda t}.
\end{aligned}
\]

The principal's sequential choice of $a_{t}\in\left\{ 0,1\right\} $
can be restated equivalently as the choice of deterministic stopping
time $S^{NI}\in\left[0,T\right]$ chosen at $t=0$.\footnote{Note that the dynamic belief system that he faces is deterministic
in a sense of being fully specified from $t=0$ perspective.} Given the principal's continuous investment, the probability of completion
of the second stage of the project, $p_{2}\left(t\right)$, increases
monotonously over time, making obtaining the payoff $v$ more likely.
However, postponing the stopping is costly.

To decide on $S^{NI}$, the principal trades off the flow benefits
and flow costs of postponing the stopping decision, while keeping
the individual rationality constraint in mind. The flow cost of postponing
the stopping for $\Delta t$ is given by $c\cdot\Delta t$ and the
flow benefit is given by $v\cdot p_{1}\left(t\right)\lambda\Delta t$.\footnote{To observe this, note that the probability of the completing both
the first and second stages within a very short time $\Delta t$ is
negligibly small; thus, during some $\Delta t$, the principal receives
the project completion payoff $v$ iff the first stage has already
been completed.} Thus, the necessary condition for the principal's stopping at some
interior moment of time $\left(0<S<T\right)$ is given by

\begin{equation}
v\cdot p_{1}\left(S\right)\lambda=c.\label{FOC_no_info}
\end{equation}
Let 
\[
\kappa\coloneqq\frac{c}{v\lambda},
\]
the ratio of the flow cost of investment $c$ to the gross project
payoff $v$ normalized using $\lambda$, the rate at which a project
stage is completed in expectation. The intuitive interpretation of
$\kappa$ is the \emph{flow cost-benefit ratio of the project}.
$\kappa$ is an inverse measure of how ex ante promising the project
is for the principal. (\ref{FOC_no_info}) is equivalently given by\footnote{Here I WLOG express the flow benefits and flow costs of investing
for the principal in different units of measurement.}

\begin{equation}
\underbrace{p_{1}\left(S\right)}_{\text{flow benefit of waiting}}=\underbrace{\kappa}_{\text{flow cost of waiting}}\label{no_feedb_necess_cond}
\end{equation}
and presented graphically in Figure \ref{noinfo_figure}. As the state
process transitions monotonously from $0$ to $1$ and then to $2$,
$p_{1}\left(t\right)$ first increases and after some time starts
to decrease. Thus, the principal has two candidate interior stopping
times satisfying (\ref{no_feedb_necess_cond}), $\bar{S}$ and $\bar{S}^{NI}$.
The principal prefers to postpone stopping to $\bar{S}^{NI}$, as during
$\left(\bar{S},\bar{S}^{NI}\right)$ the flow benefits are higher than the
flow costs.

\begin{figure}[htb]
\captionsetup{justification=centering} \centering %
\noindent\begin{minipage}[c]{0.51\textwidth}%
\centering \begin{tikzpicture}[xscale = 1.9,yscale = 5.2,domain=0:3.2]
\draw[thin,->] (0,0) node[below]{$0$} -- (3.5,0) node[below]{$t$}; 
\draw[thin,->] (0,0) -- (0,0.98); 
\draw[thin] (-0.02,0.9) -- (0.02,0.9) node[left]{$1$};
\draw[thick,red!50!black] plot (\x, {1-exp(-\x)-\x*exp(-\x)}); 
\draw[thick,red!50!black,-stealth] (1.5,0.66) node[above]{
\begin{tabular}{cc}
$p_{2}(t)$, probability of completion \\
of 2nd stage of project \\
\end{tabular}
} -> (1.5,0.5);
\end{tikzpicture} %
\end{minipage}%
\noindent\begin{minipage}[c]{0.51\textwidth}%
\centering \begin{tikzpicture}[xscale = 1.9,yscale = 10,domain=0:3.2,samples = 200]
\draw[thick,->] (0,0) node[below]{$0$} -- (3.5,0) node[below]{$t$}; 
\draw[thick,->] (0,0) -- (0,0.48); 
\draw[thick] (-0.03,1/e) node[left]{$\frac 1 e$} -- (0.03,1/e);
\draw[thick,red!50!black,-stealth] (1.45,0.37) node[above]{
\begin{tabular}{cc}
$p_{1}(t)$, flow benefit \\
of waiting \\
\end{tabular}
} -> (1.6,0.34);
\draw[thick,black,-stealth] (1.4,0.18) node[below]{
\begin{tabular}{cc}
$\frac{c}{v \lambda}$, flow cost \\
of waiting \\
\end{tabular}
} -> (1.5,0.21);
\draw[thick,black,-stealth] (2.8,0.25) node[above]{
\begin{tabular}{cc}
optimal \\
choice \\
\end{tabular}
} -> (2.375,0.228);
\draw[thick,red!50!black] plot (\x,{\x * exp(-\x)}); 
\draw[thin] (0,0.22) -- (3.5,0.22);
\draw[ultra thin,dashed] (0.295692,0) node[below]{$\bar{S}$} -- +(0,0.45);
\draw[ultra thin,,dashed] (2.38212,0) node[below]{$\bar{S}^{NI}$} -- +(0,0.45);
\end{tikzpicture} %
\end{minipage}\caption{Principal's choice under no information:\protect \protect \\
 \textbf{left plot:} postponing stopping increases the chance of getting
a project payoff $v$;\protect \protect \\
 \textbf{right plot:} principal trades off costs and benefits and
optimally stops at $\bar{S}^{NI}$.}
\label{noinfo_figure} 
\end{figure}

The forward-looking principal can guarantee himself a payoff of $0$
if he does not start investing at $t=0$. Thus, he will choose to
start investing at $t=0$ only if his flow gains accumulated up to
$T\wedge \bar{S}^{NI}$ are larger than his flow losses, and his expected payoff is given by 
\begin{equation}
V^{NI}\coloneqq\max\left\{0,\int_{0}^{T\wedge \bar{S}^{NI}}\left(v\cdot p_{1}\left(s\right)\lambda-c\right)ds\right\}.\label{NI_integral}
\end{equation}
Geometrically, the integral in (\ref{NI_integral}) represents the difference between the shaded areas in Figure \ref{noinfo_figure_2}
that correspond to the accumulated gains and losses. The principal
starts investing at $t=0$ if, given $T$ and $\lambda$, the normalized
cost-benefit ratio $\kappa$ is low enough, so that the shaded area
of the accumulated gains is at least as large as that of the accumulated
losses. I denote such a cutoff value of $\kappa$ by $\kappa^{NI}\left(T,\lambda\right)$
and summarize the principal's choice without information in Lemma
\ref{no_feedb}.

\begin{figure}[htb]
\captionsetup{justification=centering} \centering %
\noindent\begin{minipage}[c]{0.51\textwidth}%
\centering \begin{tikzpicture}
\begin{axis}[
    axis lines = middle,
    xtick = {2.19},
    xticklabels = {$\bar{S}^{NI}$},
    ytick = {0},
    xlabel = {$t$},
    ylabel = {},
    x label style={at={(axis description cs:1,-0.1)}},
    xmin=0, xmax=3.2,
    ymin=0, ymax=0.5]
 
\addplot [name path = A,
    domain = 0:4.5,
    samples = 1000] {x * exp(-x)} 
    node [very near end, right] {$y=x^2$};
 
\addplot [name path = B,
    domain = 0:4.5] {0.25} 
    node [pos=1, below] {$y=x$};
 
\addplot [teal!30] fill between [of = A and B, soft clip={domain=0:2.19}];

\draw[ultra thin,dashed] (2.8,0.25) node[above]{$\kappa \coloneqq \frac{c}{v \lambda}$} -- (2.8,0.25); 

\draw[ultra thin,dashed] (3,0.16) node[above]{$p_{1}(t)$} -- (3,0.16); 
 
\draw[ultra thin,dashed] (2.19,0)
-- (2.19,0.4); 

\draw[thick,black,-stealth] (1,0.39) node[above]{
\begin{tabular}{cc}
accumulated \\
gains \\
\end{tabular}
} -> (1,0.33);

\draw[thick,black,-stealth] (0.7,0.15) node[below]{
\begin{tabular}{cc}
accumulated \\
losses \\
\end{tabular}
} -> (0.05,0.17);
 
\end{axis}
\end{tikzpicture} %
\end{minipage}%
\noindent\begin{minipage}[c]{0.51\textwidth}%
\centering \begin{tikzpicture}
\begin{axis}[
    axis lines = middle,
    xtick = {1.6,2.19},
    xticklabels = {$T$,$\bar{S}^{NI}$},
    ytick = {0},
    xlabel = {$t$},
    ylabel = {},
    x label style={at={(axis description cs:1,-0.1)}},
    xmin=0, xmax=3.2,
    ymin=0, ymax=0.5]
 
\addplot [name path = A,
    domain = 0:4.5,
    samples = 1000] {x * exp(-x)} 
    node [very near end, left] {$y=x^2$};
 
\addplot [name path = B,
    domain = 0:4.5] {0.25} 
    node [pos=1, below] {$y=x$};
 
\addplot [teal!30] fill between [of = A and B, soft clip={domain=0:1.6}];
 
\draw[ultra thin,dashed] (2.8,0.25) node[above]{$\kappa \coloneqq \frac{c}{v \lambda}$} -- (2.8,0.25); 

\draw[ultra thin,dashed] (3,0.16) node[above]{$p_{1}(t)$} -- (3,0.16); 
 
\draw[ultra thin,dashed] (2.19,0) 
-- (2.19,0.4); 

\draw[ultra thin,dashed,red] (1.6,0) 
-- (1.6,0.4); 

\draw[thick,black,-stealth] (1,0.39) node[above]{
\begin{tabular}{cc}
accumulated \\
gains 
\end{tabular}
} -> (1,0.33);

\draw[thick,black,-stealth] (0.7,0.15) node[below]{
\begin{tabular}{cc}
accumulated \\
losses 
\end{tabular}
} -> (0.05,0.17);
 
\end{axis}
\end{tikzpicture} %
\end{minipage}\caption{Principal's choice to start investing at $t=0$ or not:\protect \protect \\
 \textbf{left plot:} $T>\bar{S}^{NI}$; the project deadline is distant and
decision-irrelevant;\protect \protect \\
 \textbf{right plot:} $T\protect\leq \bar{S}^{NI}$; the project deadline
is close, which leads to lower expected benefits of investing.}
In both plots the expected accumulated gains are higher than the
losses, so the principal starts to invest at $t=0$. \label{noinfo_figure_2} 
\end{figure}

\begin{lemma}\label{no_feedb} Assume no information regarding the
progress of the project arrives over time. Denote the time at which
the principal stops investing by $S^{NI}$. If $\kappa>\kappa^{NI}\left(T,\lambda\right)$,
then the principal does not start investing in the project, i.e., $S^{NI}=0$.
If $\kappa\leq\kappa^{NI}\left(T,\lambda\right)$, then the principal's
choice of stopping time is given by

\begin{equation}
S^{NI}=\begin{cases}
\bar{S}^{NI}, & \text{if }\frac{1}{\lambda}\leq T\text{ and }\kappa\geq e^{-\lambda T}\lambda T\\
T, & \text{otherwise ,}
\end{cases}\label{T_NI}
\end{equation}
the closed-form expressions for $\bar{S}^{NI}$ and $\kappa^{NI}\left(T,\lambda\right)$
are presented in the proof in Appendix \ref{app-proofs}.

\end{lemma}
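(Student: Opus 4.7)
The plan is to reduce the principal's sequential problem to a deterministic stopping-time choice made at $t=0$, then solve the resulting scalar optimization by tracking the sign of the flow payoff, and finally overlay the deadline constraint and the individual-rationality constraint.

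First, because $\sigma^{NI}$ is history-independent, the principal's posterior at each $t$ equals the unconditional law $p(t)$, so his sequential stopping problem collapses to a deterministic choice of $S\in[0,T]$ made at time $0$; no randomization can improve on the best pure $S$ since the objective is a deterministic functional of $S$. The expected payoff from investing continuously until $S$ is
\[
\Pi(S)\;=\;\int_{0}^{S}\bigl(v\lambda\,p_{1}(s)-c\bigr)\,ds\;=\;v\lambda\int_{0}^{S}\bigl(p_{1}(s)-\kappa\bigr)\,ds,
\]
with the IR outside option $\Pi(0)=0$ always available.

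Second, I would analyze the unconstrained maximizer on $[0,\infty)$. Because $p_{1}(s)=\lambda s e^{-\lambda s}$ vanishes at $0$, strictly increases on $[0,1/\lambda]$ to its unique peak $1/e$ at $1/\lambda$, and strictly decreases on $[1/\lambda,\infty)$ back to $0$, the first-order condition $p_{1}(S)=\kappa$ has two roots $\bar{S}<1/\lambda<\bar{S}^{NI}$ when $\kappa<1/e$ and no interior root when $\kappa>1/e$. The resulting sign pattern of $\Pi'$ (negative, then positive, then negative) identifies $\bar{S}^{NI}$ as the unique interior local maximizer, while $\bar{S}$ is a local minimum. Inverting $\lambda \bar{S}^{NI} e^{-\lambda \bar{S}^{NI}}=\kappa$ on the lower branch of the Lambert $W$ function yields the closed form for $\bar{S}^{NI}$ recorded in the appendix.

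Third, I would overlay the two constraints. The maximizer of $\Pi$ on $[0,T]$ is the interior point $\bar{S}^{NI}$ precisely when $\bar{S}^{NI}\in[0,T]$; since $\bar{S}^{NI}\geq 1/\lambda$ and $p_{1}$ is strictly decreasing past $1/\lambda$, this is equivalent to the joint condition $1/\lambda\leq T$ and $\kappa\geq p_{1}(T)=e^{-\lambda T}\lambda T$, giving the top branch of (\ref{T_NI}); in the complementary case---together with $\kappa\leq\kappa^{NI}(T,\lambda)$, which rules out $\kappa>1/e$---one checks that $\Pi'(s)\geq 0$ on $[\bar{S},T]$, so the best positive candidate is the corner $S=T$, giving the bottom branch. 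The threshold $\kappa^{NI}(T,\lambda)$ is then obtained regime by regime by equating the maximized payoff to zero and solving for $\kappa$, using the identity $\int_{0}^{S}v\lambda\,p_{1}(s)\,ds=v\,p_{2}(S)$ to evaluate the integrals in closed form; strict monotonicity of $\Pi$ in $\kappa$ at any positive candidate ensures uniqueness of the cutoff. The main obstacle is the case split itself---in particular, verifying that the two regimes exactly partition $\{\kappa\leq\kappa^{NI}(T,\lambda)\}$ and that $\kappa^{NI}(T,\lambda)\leq 1/e$ so that $\bar{S}^{NI}$ is always defined when the top branch applies; once this is settled, the Lambert-$W$ inversion and the IR cutoff should follow mechanically from the integrals above.
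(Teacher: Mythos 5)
Your proposal is correct and follows essentially the same route as the paper: reduce to a deterministic choice of $S\in[0,T]$, use the sign pattern of the flow payoff $v\lambda\left(p_{1}(s)-\kappa\right)$ to select $\bar{S}^{NI}$ over $\bar{S}$, invert $p_{1}(\bar{S}^{NI})=\kappa$ via the $-1$ branch of the Lambert $W$ function, characterize $\bar{S}^{NI}\leq T$ by the condition $\frac{1}{\lambda}\leq T$ and $\kappa\geq e^{-\lambda T}\lambda T$, and pin down $\kappa^{NI}(T,\lambda)$ regime by regime from the binding participation constraint. The only piece the paper makes explicit that you leave implicit is the time-consistency of the $t=0$ plan for the non-committed principal (its auxiliary Lemma on the continuation value $\overline{V}_{t}^{NI}$), but your negative--positive--negative sign analysis of $\Pi'$ already contains the fact needed there.
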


\subsection{Full-information benchmark}

\label{section_FI}

Here, I consider the case in which the information policy is given
by $\sigma^{FI}$: $M=\left\{ m_{0},m_{1},m_{2}\right\}$
and the message $m_{n}$ is sent for all $t$ such that $x_{t}=n$,
$n\in\left\{ 0,1,2\right\} $. Thus, the principal fully observes
the progress of the project at each $t$.\footnote{This benchmark corresponds to equilibrium in the setting, where the principal has the full power to propose the terms of self-reporting to the agent.} I characterize the cutoff
level of the cost-benefit ratio below which the principal is willing
to start investing. Further, I show that the principal chooses to
stop when no stages of the project are completed and the project completion
deadline $T$ is sufficiently close.

At each $t$, the principal uses the information on the number of
stages completed to decide either to stop investing or to postpone
the stopping. The news on completion of the second stage of the project
makes the principal stop immediately, as this way he immediately receives
the project payoff $v$ and stops incurring the costs of further investment.
If only the first stage of the project is completed, the principal
faces the following trade-off. The instantaneous probability that
the second stage will be completed during $\Delta t$ is given by
$\lambda\Delta t$, which is constant over the time. Thus, the expected
benefit of postponing the stopping for $\Delta t$ is given by $v\cdot\lambda\Delta t$.
Meanwhile, the expected cost of postponing the stopping is given by
$c\cdot\Delta t$. As a result, if $\kappa\leq1$, then the principal
who knows that the first stage of the project has already been completed
invests until either the completion of the second stage or until the
project deadline $T$ is reached.

Consider now the case in which the principal knows that the first
stage has not yet been completed. The principal's trade-off with respect
to the stopping decision is now more involved. Postponing the stopping
for $\Delta t$ leads to the completion of the first stage of the
project with the instantaneous probability $\lambda\Delta t$. Completion
of the first stage of the project at some $t$ implies that the principal
receives the continuation value of the game, conditional on having
the first stage completed. I denote the continuation value of the
principal at time $t$ under full information and conditional on the
completion of first stage of the project by $V_{t|1}^{FI}$.
This is given by\footnote{See the derivation in the proof of Lemma \ref{lemma_FI} in the Appendix.}
\begin{equation}
V_{t|1}^{FI}=\left(v-\frac{c}{\lambda}\right)\left(1-e^{-\lambda\left(T-t\right)}\right).\label{V_1_FI}
\end{equation}
The principal's expected benefit from postponing the stopping for
$\Delta t$ is given by $V_{t|1}^{FI}\cdot\lambda\Delta t$
and the cost of postponing the stopping is, as before, given by $c\cdot\Delta t$.
The continuation value, $V_{t|1}^{FI}$, shrinks over
time and approaches $0$ as the project deadline $T$ approaches.
This is because the shorter the time left before the project deadline,
the less likely it is that the second stage of the project will be
completed before $T$. If at some $t$, and given that no stages are
completed yet, the expected net benefit of investing reaches $0$,
it is optimal for the principal to stop at that $t$.\footnote{If at $t$ the expected benefit of investing becomes lower than the
cost, then, after $t$, the net expected benefit remains negative.
Thus, it is optimal for the principal to stop investing precisely
at $t$.} I denote this date by $S_{0}^{P}$ and plot it in Figure \ref{figure_FI}.

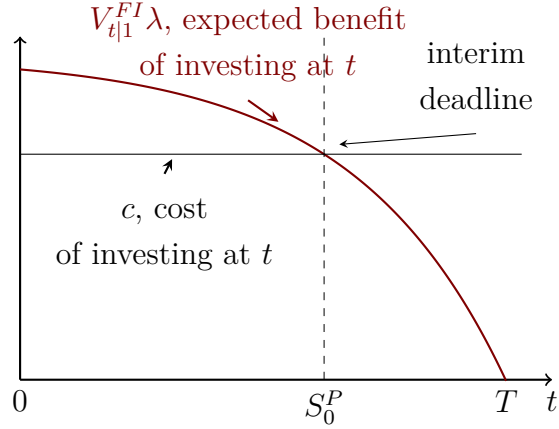
\begin{figure}[H]
\captionsetup{justification=centering} \centering \begin{tikzpicture}[xscale = 2,yscale = 1.3,domain=0:3.2,samples = 200]
\draw[thick,->] (0,-2.3) node[below]{$0$} -- (3.5,-2.3) node[below]{$t$}; 
\draw[thick,->] (0,-2.3) -- (0,1.2); 
\draw[thick,red!50!black] plot (\x,{1- exp(\x - 2)}) node[left,above]{};
\draw[thin] (0,0) -- (3.3,0);
\draw[ultra thin,dashed] (2,-2.3) node[below]{$S_{0}^{P}$} -- (2,1.2);
\draw[ultra thin] (3.2,-2.3) node[below]{$T$} -- (3.2,-2.3);
\draw[thick,red!50!black,-stealth] (1.5,0.55) node[above]{
\begin{tabular}{cc}
$V_{t|1}^{FI} \lambda$, expected benefit \\
of investing at $t$ \\
\end{tabular}
} -> (1.7,0.33);
\draw[thick,black,-stealth] (0.95,-0.2) node[below]{
\begin{tabular}{cc}
$c$, cost \\
of investing at $t$ \\
\end{tabular}
} -> (1,-0.05);
\draw[black,-stealth] (3,0.21) node[above]{
\begin{tabular}{cc}
interim \\
deadline
\end{tabular}
} -> (2.1,0.1);
\end{tikzpicture} \caption{The principal optimally sets an interim deadline $t=S_{0}^{P}$ under
full information: given that the first stage of the project has not
been completed by $S_{0}^{P}$, it is optimal to stop investing at
$S_{0}^{P}$.}
\label{figure_FI} 
\end{figure}

As the principal has an incentive to stop at $S_{0}^{P}$ only if
he knows that the first stage or the milestone of the project has
not been reached, the economic interpretation of $S_{0}^{P}$ is
that it is \emph{the interim deadline} that the principal sets for
the project. If the milestone has not been reached by the interim
deadline, then it is sufficiently unlikely that the project will be
completed before the project deadline $T$. Thus, it is optimal for
the principal to ``give up'' on the project and stop investing at
$t=S_{0}^{P}$. If the milestone is reached by the interim deadline,
then the principal has an incentive not to stop investing until either
the second stage is completed or $T$ is hit.

Finally, given the plan to stop either at the interim deadline, or
at the completion of the second stage of the project, it is individually
rational to start investing only if the principal's expected payoff
from the $t=0$ perspective is non-negative. I denote the upper bound
for the normalized cost-benefit ratio such that the principal starts
investing at $t=0$ by $\kappa^{FI}\left(T,\lambda\right)$. Intuitively, $\kappa^{FI}\left(T,\lambda\right)>\kappa^{NI}\left(T,\lambda\right)$: whenever the principal is willing to start investing under no information, he is also willing to start under the full information. I summarize
the principal's choice under full information in Lemma \ref{lemma_FI}.
\begin{lemma}\label{lemma_FI} Assume that the progress of the project
is fully observable at each moment in time. If $\kappa>\kappa^{FI}\left(T,\lambda\right)$, where $\kappa^{FI}\left(T,\lambda\right)>\kappa^{NI}\left(T,\lambda\right)$,
then the principal does not start investing in the project. If $\kappa\leq\kappa^{FI}\left(T,\lambda\right)$,
the principal invests either until the random date at which the second
stage of the project is completed, $t=\tau_{2}$, or until the interim
deadline, $t=S_{0}^{P}$, at which he stops if the first stage has
not yet been completed. Formally, the time at which the principal
stops investing is a random variable $\tau$ given by:

\[
\tau=\begin{cases}
\tau_{2}\wedge T, & \text{if }x_{S_{0}^{P}}\neq0\\
S_{0}^{P}, & \text{otherwise },
\end{cases}
\]
where $S_{0}^{P}=T+\frac{1}{\lambda}\log\left(\frac{1-2\kappa}{1-\kappa}\right)$
and the expression for $\kappa^{FI}\left(T,\lambda\right)$ is presented
in the proof in Appendix \ref{app-proofs}.

\end{lemma}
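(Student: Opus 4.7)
The plan is to solve the principal's stopping problem under full information by backward induction: first determine the optimal policy after the milestone is reached, then the optimal policy before the milestone, and finally check ex ante individual rationality to obtain the cutoff $\kappa^{FI}(T,\lambda)$.

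Step one is to verify (\ref{V_1_FI}). Conditional on $x_{t}=1$ the time until the second stage completes is exponential with rate $\lambda$, so the continuation problem is stationary. Under $\kappa\leq 1$ the flow benefit of waiting $v\lambda$ weakly dominates the flow cost $c$, so the principal invests until the earlier of $\tau_{2}$ and $T$; direct computation of $v\cdot\Pr(\tau\leq T-t)-c\cdot E[\tau\wedge(T-t)]$ with $\tau\sim\mathrm{Exp}(\lambda)$ reproduces $V_{t|1}^{FI}=(v-c/\lambda)(1-e^{-\lambda(T-t)})$.

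Step two solves the problem from state $x_{t}=0$. Since $V_{t|1}^{FI}$ is strictly decreasing in $t$ and vanishes at $T$, the flow net benefit $\lambda V_{t|1}^{FI}-c$ is strictly decreasing and crosses zero at most once. Hence the optimal stopping time conditional on $x_{t}=0$ is either $T$ (if the net benefit is still positive there) or the unique root of $\lambda V_{t|1}^{FI}=c$. Solving for $t$ yields the closed form $S_{0}^{P}=T+\frac{1}{\lambda}\log\left(\frac{1-2\kappa}{1-\kappa}\right)$, which lies in $[0,T]$ under the relevant restrictions on $\kappa$, $T$, and $\lambda$. Step three computes $V_{0|0}^{FI}$ by conditioning on $\tau_{1}$: on the event $\tau_{1}\leq S_{0}^{P}$ the principal incurs flow costs until $\tau_{1}$ and then collects $V_{\tau_{1}|1}^{FI}$, while on the complementary event he pays $c\cdot S_{0}^{P}$ and collects zero. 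Integrating against the exponential density $\lambda e^{-\lambda s}$ on $[0,S_{0}^{P}]$ delivers a closed form for $V_{0|0}^{FI}$, and $\kappa^{FI}(T,\lambda)$ is defined as the largest $\kappa$ at which this value is non-negative. The strict ordering $\kappa^{FI}(T,\lambda)>\kappa^{NI}(T,\lambda)$ then follows from a mimicking argument: the principal can always ignore the full-information messages and replicate the no-information outcome, so $V_{0|0}^{FI}\geq V^{NI}$, with strict inequality at $\kappa=\kappa^{NI}(T,\lambda)$ because the full-information policy strictly improves on the deterministic $S^{NI}$ by stopping early when the project stalls and by capturing $v$ only when it succeeds.

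The main obstacle is the bookkeeping in step three: carrying the integration through cleanly and handling the corner cases in which $S_{0}^{P}$ exits $[0,T]$, or in which $\kappa\geq\tfrac{1}{2}$ makes even the state-$1$ continuation unprofitable, so that the principal would stop immediately upon learning $x_{t}=0$. The computations are mechanical but bulky, which is why the explicit expression for $\kappa^{FI}(T,\lambda)$ is naturally relegated to the appendix, as the lemma already indicates.
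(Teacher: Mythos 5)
Your proposal is correct and follows essentially the same route as the paper: backward induction over the state, the explicit computation of $V_{t|1}^{FI}=\left(v-\frac{c}{\lambda}\right)\left(1-e^{-\lambda\left(T-t\right)}\right)$ via the exponential waiting time, and the observation that the state-$0$ flow net benefit $\lambda V_{t|1}^{FI}-c$ is strictly decreasing and crosses zero once, which pins down $S_{0}^{P}$. The one divergence is your Step three: the paper does not integrate to obtain $V_{0|0}^{FI}$ at all, but instead notes that the participation decision reduces to the instantaneous condition $\lambda V_{0|1}^{FI}-c\geq0$, i.e.\ $S_{0}^{P}\geq0$, and solves $S_{0}^{P}=0$ for $\kappa$ to get $\kappa^{FI}\left(T,\lambda\right)=\frac{1-e^{-\lambda T}}{2-e^{-\lambda T}}$. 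Your integration is not wrong, but it is redundant: because the integrand $e^{-\lambda s}\left(\lambda V_{s|1}^{FI}-c\right)$ is non-negative on $\left[0,S_{0}^{P}\right]$ by construction of $S_{0}^{P}$, the ex ante value is automatically non-negative whenever $S_{0}^{P}\geq0$ and degenerates to the no-investment value exactly when $S_{0}^{P}\leq0$, so the ``non-negativity of $V_{0|0}^{FI}$'' test carries no information beyond the sign of $S_{0}^{P}$; you should state the cutoff directly as the root of $S_{0}^{P}=0$. On the other hand, your mimicking argument for $\kappa^{FI}\left(T,\lambda\right)>\kappa^{NI}\left(T,\lambda\right)$ is a welcome addition, since the paper asserts this ordering in the lemma but only motivates it informally in the main text rather than proving it in the appendix.
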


Assume now that the agent chooses which information to provide to
the principal. As for $\kappa>\kappa^{FI}\left(T,\lambda\right)$
the principal is not willing to start investing even under full information,
there is no way in which the agent can strategically conceal the information
to her benefit. In Section \ref{stoppingsection}, I assume $\kappa\leq\kappa^{FI}\left(T,\lambda\right)$
and analyze how the agent can strategically provide information on
the progress of the project and extract the principal's surplus.

\section{Agent's choice of information policy}
\label{stoppingsection}

In this Section, I present how the agent's choice of information policy
changes with the ex ante attractiveness of the project, which is captured
by the cost-benefit ratio $\kappa$. 
In Section \ref{sect:opt_discl}, I start with Proposition \ref{prop:opt_discl} which summarizes the comparative statics result. In Sections \ref{case1section}-\ref{section_INT}, I proceed with the detailed discussion of the
economic mechanisms that determine the outlined structure of the optimal information policy. 
Throughout Section \ref{stoppingsection}, I maintain the following technical assumption:
\begin{assumption}\label{assumpt1}
$e^{\lambda T}>\lambda T\left(\lambda T+1\right)+1$.
\end{assumption}
For the sake of a clearer exposition, this assumption rules out the case in which $T$ is so low that whenever the principal is willing to start investing in the no-information benchmark, he invests until $T$.  Relaxing this assumption does not change
the the comparative statics result in Proposition \ref{prop:opt_discl} qualitatively.\footnote{I discuss the implications of
relaxing this assumption in the proof of Proposition \ref{prop:opt_discl}.}

\subsection{The structure of optimal information disclosure}\label{sect:opt_discl}

\begin{proposition}\label{prop:opt_discl}
There exist cost-benefit ratio cutoffs $\kappa^{ND}\left(T,\lambda\right),\kappa^{ND}\left(T,\lambda\right)<\kappa^{NI}\left(T,\lambda\right)$,
and $\tilde{\kappa}\left(T,\lambda\right),\kappa^{NI}\left(T,\lambda\right)<\tilde{\kappa}\left(T,\lambda\right)<\kappa^{FI}\left(T,\lambda\right)$,
such that, depending on the cost-benefit ratio of the project, the
optimal information policy has the following form:
\begin{itemize}
\item[1.] when $\kappa\leq\kappa^{ND}\left(T,\lambda\right)$,
the agent provides no information and the principal invests until
$T$;

\item[2.] when $\kappa^{ND}\left(T,\lambda\right)<\kappa\leq\tilde{\kappa}\left(T,\lambda\right)$,
the agent discloses only the completion of the second stage
of the project and does that with the postponement;

\item[3.] when $\tilde{\kappa}\left(T,\lambda\right)<\kappa<\kappa^{FI}\left(T,\lambda\right)$,
the agent immediately discloses the completion of the second stage
of the project whenever it occurs and specifies a deterministic
interim deadline, at which it discloses if the first stage is already
completed;

\item[4.] when $\kappa\geq\kappa^{FI}\left(T,\lambda\right)$, the agent
provides no information as the principal's long-run payoff is non-positive
even under full information.
\end{itemize}
\end{proposition}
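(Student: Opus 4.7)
The plan is to handle the four cases via different arguments: monotonicity of the principal's ex ante value in informativeness for the two extremes, and a reformulation of the agent's problem as a choice of disclosure-time distribution for the two intermediate regions.

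Case 4 is the quickest. The full-information policy $\sigma^{FI}$ delivers the largest ex ante expected payoff to the principal among all information policies, because under $\sigma^{FI}$ the principal can always replicate any stopping rule that is incentive-compatible under an alternative policy. Hence if the principal's initial individual rationality fails under $\sigma^{FI}$ --- which by Lemma \ref{lemma_FI} happens exactly when $\kappa > \kappa^{FI}(T,\lambda)$ --- it fails under every policy, the principal refuses to invest, and by the tie-breaking convention the agent provides no information. Case 1 is symmetric from the agent's side: the agent's first-best outcome is that the principal invest all the way to $T$, and by Lemma \ref{no_feedb} (together with Assumption \ref{assumpt1}, which rules out degenerately small $T$) this first-best is already attained by no disclosure once $\kappa \leq \kappa^{ND}(T,\lambda)$, which I expect to equal the threshold $\lambda T e^{-\lambda T}$ appearing in (\ref{T_NI}). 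Any additional informativeness can only induce earlier stopping: good news about the completion of stage 2 prompts immediate stopping to collect $v$, and bad news about stage 1 prompts stopping without reward. So no disclosure attains the agent's upper bound and is optimal.

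For Cases 2 and 3, I would reformulate the agent's problem as choosing a joint distribution over the disclosure time $\sigma_2$ of stage-2 completion (with the feasibility constraint $\sigma_2 \geq \tau_2$) and the disclosure time $\sigma_1$ at which stage-1 non-completion would be announced. Obedience at every on-path continuation history reduces to a local inequality between the flow probability of disclosure and the normalized cost $\kappa$, while initial individual rationality reduces to a single integral inequality at $t=0$. Within this formulation I would first argue that $\sigma_1$ should be deterministic: both the agent's and the principal's continuation values at the interim deadline are concave in $\sigma_1$, so any randomization is weakly dominated by its mean and the agent prefers to push the mean as late as possible. I would then show that whenever the ``postponed completion only'' policy (formally $\sigma_1 = T$) satisfies initial IR, it dominates adding any interior interim deadline, because bad-news disclosure at $\sigma_1<T$ can only shorten the expected stopping time conditional on what the postponement of $\sigma_2$ already buys; that the largest $\kappa$ for which the postponed-completion policy is feasible equals $\tilde{\kappa}(T,\lambda)$; and that for $\kappa > \tilde{\kappa}$, IR binds at $t=0$, delay in $\sigma_2$ no longer creates enough slack in IR to compensate for its cost, so the agent discloses stage-2 completion immediately and adds a deterministic $\sigma_1 < T$ pinned down by the binding IR constraint.

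The main obstacle will be the step of showing that the two candidate families above are jointly exhaustive among optima, ruling out hybrid policies that both delay $\sigma_2$ and use an interior $\sigma_1$, as well as more exotic schemes involving, e.g., disclosure of stage-1 completion or correlated disclosures. I expect this to reduce to an extreme-point argument in the reformulated problem: the agent's objective is linear in the induced distribution, the feasible set is convex, and the relevant extreme points correspond exactly to the four families appearing in the statement. Given this, the orderings $\kappa^{ND}<\kappa^{NI}<\tilde{\kappa}<\kappa^{FI}$ follow from the monotone relaxation of IR as one moves from no information to postponed completion to interim deadline to full information, and the explicit cutoff expressions can be read off from the binding-IR equations in each region.
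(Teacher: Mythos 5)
Your case dispatch matches the paper's: Cases 1 and 4 are handled exactly as in the paper (full information is an upper bound on the principal's value, so IR failing there kills every policy; and for $\kappa\leq\kappa^{ND}=\lambda Te^{-\lambda T}$ non-disclosure already attains the agent's first-best $\tau=T$). The trouble is in Cases 2 and 3, where the load-bearing steps are announced rather than proved, and two of your reductions are not quite right.

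First, the exhaustiveness step that you flag as ``the main obstacle'' is precisely where the paper does real work, and it does not use an extreme-point argument. For Case 2 the paper uses the decomposition $W(\tau)=\p\left(x_{\tau}=2\right)v-V(\tau)$ and bounds the two terms separately: the total surplus is at most $\p\left(x_{T}=2\right)v$, attained iff stopping never precedes $\tau_{2}\wedge T$, and the principal's surplus is at least $V^{NI}$ because he can always ignore all messages. Any hybrid policy that stops in state $0$ or $1$ strictly loses total surplus, so it is dominated without needing to characterize extreme points of the feasible set (whose identification with ``the four families'' you assert but do not establish; the obedience constraints, once multiplied through by $\p(t<\tau)$, are linear, but the extreme points of that polytope are not obviously the candidate policies). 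For Case 3 the paper instead gives an explicit improvement argument: reallocating stopping mass from state $1$ to $\tau_{2}\wedge T$ raises both $V(\tau)$ and $W(\tau)$, which is what rules out disclosure of stage-1 completion.

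Second, two specific reductions in your plan fail. (i) Obedience does not reduce to ``a local inequality between the flow probability of disclosure and $\kappa$'': during the postponement window $[\bar{S}^{NI},S^{*})$ the flow payoff $v\lambda\left(p_{1}(t)-\kappa\right)$ is strictly negative and the principal continues only because of the forward-looking promise of certain completion news later; the paper's proof of Proposition \ref{delayed_tau2} has to verify the integral constraints $V_{t}(\tau)\geq0$ piecewise on $[0,\bar{S}^{NI}]$, $[\bar{S}^{NI},S^{*}]$ and $[S^{*},\tau)$, and to prove existence of $S^{*}$ by an intermediate-value argument. (ii) Initial IR does not reduce to a single inequality at $t=0$ with reservation value $0$ throughout Case 2: for $\kappa\in(\kappa^{ND},\kappa^{NI}]$ the principal's outside option is $V^{NI}>0$ and the binding constraint sits at the interior date $\bar{S}^{NI}$, so reading the optimal postponement off a binding $t=0$ constraint with reservation value $0$ would overstate what the agent can extract in that subrange. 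Your treatment of the deterministic interim deadline via concavity/mean-preserving contractions is consistent with the paper's argument, but it too rests on the unproved reduction of the problem to a one-dimensional choice of $\sigma_{1}$.
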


Figure \ref{fig_mechanisms} illustrates Proposition \ref{prop:opt_discl}
and presents the partition of the cost-benefit ratio space based on the
corresponding forms of the optimal information policy.

\begin{figure}[H]
\centering 
\hspace*{-0.05\linewidth}
\begin{tikzpicture}[xscale = 4,yscale = 2,domain=0.7:3.3,samples = 200]
\draw[thick,->] (0.3,1) 
-- (3.85,1) node[below]{$\kappa\left(T,\lambda\right)$}; 
\draw[ultra thin,,dashed] (0.31,1) node[below]{$0$} -- (0.31,0.9);
\draw[ultra thin,,dashed] (1.95,1) -- (1.95,2);
\draw[ultra thin,,dashed] (3.25,1) -- (3.25,2);
\draw[ultra thin,,dashed] (0.93,1) -- (0.93,2);
\draw[ultra thin,,dashed] (1.95,1) node[below]{$\tilde{\kappa}$} -- (1.95,0.9);
\draw[ultra thin,,dashed] (3.25,1) node[below]{$\kappa^{FI}$} -- (3.25,0.9);
\draw[ultra thin,,dashed] (1.4,1) node[below]{$\kappa^{NI}$} -- (1.4,0.9);
\draw[ultra thin,,dashed] (0.93,1) node[below]{$\kappa^{ND}$} -- (0.93,0.9);
\draw[ultra thin,,dashed] (3.6,1) node[below]{$\frac{1}{2}$} -- (3.6,0.9);

\draw[thick,red!50!black] (1.43,1.2) node[above]{
\begin{tabular}{cc}
Postponed disclosure \\
of stage 2 completion \\
\end{tabular}
} -- (1.43,1.2);

\draw[thick,red!50!black] (2.6,1) node[above]{
\begin{tabular}{cc}
Immediate disclosure of \\
stage 2 completion and \\
interim deadline for stage 1
\end{tabular}
} -- (2.7,1);

\draw[thick,black] (0.57,1) node[above]{
\begin{tabular}{cc}
Non-disclosure\\(principal\\invests until $T$)
\end{tabular}
} -- (0.45,1);

\draw[thick,black] (3.65,1) node[above]{
\begin{tabular}{cc}
Non-disclosure\\(principal rejects \\ the project)
\end{tabular}
} -- (3.65,1);

\end{tikzpicture} \caption{Comparative statics of the form of optimal information policy with respect to the cost-benefit ratio of the project, $\kappa\left(T,\lambda\right)$.}
\label{fig_mechanisms} 
\end{figure}
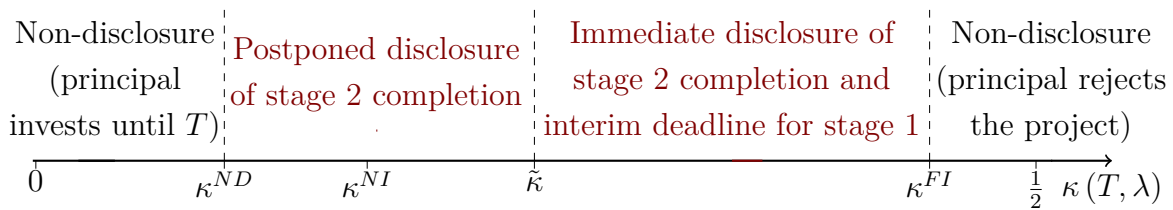

The structure of optimal disclosure presented in Proposition \ref{prop:opt_discl}
follows the simple and intuitive pattern. The lower is the value
of cost-benefit ratio, the higher is ex ante attractiveness of the
project to the principal. First, for $\kappa\leq\kappa^{ND}\left(T,\lambda\right)$,
the project is so attractive that the principal is willing to keep
investing until the project deadline $T$ even in the no-information
benchmark. Thus, there is no need to disclose any information. For
the higher values of $\kappa$, there emerges a room for strategic
disclosure, and the higher is the value of $\kappa$ (i.e., the lower is the ex ante attractiveness of the project),
the more information the agent has to disclose to incentivize the
principal. For $\kappa\geq\kappa^{FI}\left(T,\lambda\right)$, the
project gets so unattractive that the principal can not strictly benefit from investing even in the full-information benchmark. In this extreme case,
the agent chooses not to disclose any information.

The most important part of the result in Proposition \ref{prop:opt_discl}
demonstrates which additional pieces of information the agent chooses
to disclose and when she chooses to discloses them as $\kappa$ gets higher and higher. When $\kappa\in(\kappa^{ND}\left(T,\lambda\right),\tilde{\kappa}\left(T,\lambda\right)]$,
the agent discloses only the completion of the second stage of the project and
does not promise any information on the completion of the first stage
of the project. Further, as $\kappa$ increases from $\kappa^{ND}\left(T,\lambda\right)$ to  $\tilde{\kappa}\left(T,\lambda\right)$, the agent adjusts
the timing of the disclosure: she postpones the disclosure of the second stage completion less and
less and discloses immediately for  $\tilde{\kappa}\left(T,\lambda\right)$. For
$\kappa\in(\tilde{\kappa}\left(T,\lambda\right),\kappa^{FI}\left(T,\lambda\right))$,
the agent not only discloses the completion of the second stage of the project immediately, but also
provides information on the completion of the first stage
at the interim deadline that she optimally chooses.

In the subsequent Sections, I provide details on the mechanisms that shape the aforementioned comparative statics results. I omit the trivial case of non-disclosure under  $\kappa\leq\kappa^{ND}\left(T,\lambda\right)$ and start the discussion from the optimal information policy under $\kappa\in(\kappa^{ND}\left(T,\lambda\right),\tilde{\kappa}\left(T,\lambda\right)]$.

\subsection{Postponed disclosure of project completion}
\label{case1section}

In this Section, I restrict attention to
$\kappa\in(\kappa^{ND}\left(T,\lambda\right),\tilde{\kappa}\left(T,\lambda\right)]$ and explain why the optimal information policy has the particular form presented in the Proposition \ref{prop:opt_discl}: the agent discloses only the completion of the project and does this with the postponement. 

\subsubsection{Agent's problem}
To characterize the agent's choice of information policy, I consider an equivalent problem, in which the agent directly chooses the stochastic history-contingent length of investment subject to the principal's individual rationality constraints that ensure optimality of such action process for the principal.  An \emph{investment
schedule} is a random variable $\tau:  \Omega\rightarrow\ [0,T]$ defined on the probability
space $\left(\Omega,\mathcal{F},\mathbb{P}\right)$ and adapted to the filtration $F=(\mathcal{F}_{t})_{t\geq0}$ generated by the stochastic process $x_{t}$. As I demonstrate in Section \ref{sect:relaxed}, restricting attention to random variables adapted to the \emph{natural} filtration of $x_t$  is without loss of generality for the agent's equilibrium expected payoff when $\kappa\in(\kappa^{ND}\left(T,\lambda\right),\tilde{\kappa}\left(T,\lambda\right)]$.\footnote{In other words, there is no need for external randomization devices to optimally incentivize the principal when $\kappa\in(\kappa^{ND}\left(T,\lambda\right),\tilde{\kappa}\left(T,\lambda\right)]$.}

Informally, an investment schedule $\tau$ is a random variable with support $\left[0,T\right]$
specified by a rule that suggests when to stop investing depending on the history
of previous realizations of the number of completed stages $x_{t}$.\footnote{The stopping rules from the no-information and full-information benchmarks are given in Lemmas \ref{no_feedb} and \ref{lemma_FI}, respectively. Further examples of such rules include ``stop 1 minute after the second stage
of the project is completed'' and ``stop at $t=S$ if only the first
stage of the project is completed by $t=S$''.} The agent chooses this rule at $t=0$.  $\p\left(x_{\tau}=2\right)$ captures the belief about two
stages of the project completed by $\tau$, the random time of stopping in
the future, and $\Esp\left[\tau\right]$ captures the expected length
of investment from $t=0$ perspective.

Given an investment schedule $\tau$, the long-run payoff of the agent
and the principal are given, respectively, by

\[
\begin{aligned}W\left(\tau\right) & \coloneqq\Esp\left[\tau\right]c,\\
V\left(\tau\right) & \coloneqq\p\left(x_{\tau}=2\right)v-\Esp\left[\tau\right]c.
\end{aligned}
\]

As an investment schedule $\tau$ is an action recommendation rule, the action recommendations generated by this rule have to be obedient for the principal. In other words, at each date and for each possible history the principal's actions suggested by $\tau$ have to be optimal for the principal. An object useful for characterizing if an investment schedule $\tau$ generates obedient action recommendations is given by the principal's continuation value at some interim date $t$ promised by the investment schedule $\tau$. This continuation value depends on the beliefs of the principal. 

As the principal does not commit to a policy at $t=0$, he rationally updates his belief given an investment schedule $\tau$ and assesses the costs and benefits of either further following the investment schedule $\tau$ provided by the agent or deviating from it. The absence of stopping by some date $t$ and, thus, the fact that the game continues at $t$ serves as a source of inference for the principal. First, he forms a belief regarding the number of completed stages of the project by $t$, conditional on the game still continuing at $t$, $\p\left(x_{t}=n|t<\tau\right)$.
Second, he forms a belief regarding the number of completed stages
of the project at the random date of stopping in the future, $\tau$, $\p\left(x_{\tau}=n|t<\tau\right)$.

Given the absence of stopping by $t$, the principal's expected payoff
promised by the schedule is given by $\p\left(x_{\tau}=2|t<\tau\right)v-\Esp\left[\tau-t|t<\tau\right]c$.
The principal's expected payoff from stopping at $t$ is given by
$\p\left(x_{t}=2|t<\tau\right)v$. The principal's \emph{continuation
value at $t$} given the investment schedule $\tau$ is the difference
between these two expected payoffs, I denote it by $V_{t}\left(\tau\right)$:

\begin{equation}
V_{t}\left(\tau\right)\coloneqq \left[\p\left(x_{\tau}=2|t<\tau\right)-\p\left(x_{t}=2|t<\tau\right)\right]v-\Esp\left[\tau-t|t<\tau\right]c.\label{princ_cont_val}
\end{equation}
This way of formulating the continuation value is intuitive: if the continuation value $V_{t}\left(\tau\right)$ gets negative then it is not valuable to continue investing for the principal, and he is better-off stopping immediately rather than following the schedule. The following Lemma shows the necessary and sufficient conditions for an investment schedule $\tau$ to generate obedient action recommendations for the principal.

\begin{lemma}\label{lemm_obedience} An investment schedule $\tau$  is the principal's best response to at least one information policy $\sigma$ if and only if
\begin{equation}
V_{t}\left(\tau\right)\geq 0,\forall t\geq0 \text{ and } V_{\tau}^{NI}<0, \label{obed_constr}
\end{equation}
where $V_{t}^{NI}$ is the principal's optimal continuation value in the absence of any additional information from the agent starting from the date $t$.
\end{lemma}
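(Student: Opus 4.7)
My plan is to prove the lemma as a revelation-principle-style equivalence: sufficiency by exhibiting an explicit information policy that implements $\tau$, and necessity by reducing pathwise obedience under an arbitrary policy to the averaged inequalities in (\ref{obed_constr}).

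For the \textbf{sufficiency} direction, suppose both conditions in (\ref{obed_constr}) hold. I would construct the direct-recommendation policy $\sigma^{\ast}$ on message space $\{\text{continue},\text{stop}\}$ that sends "continue" at every $t$ on the event $\{t<\tau\}$ and sends "stop" at $t=\tau$; since $\tau$ is adapted to the natural filtration of $x_{t}$, this policy is well defined. Under $\sigma^{\ast}$ the principal's information at any $t<\tau$ reduces to the event $\{t<\tau\}$ itself, so his expected payoff from obeying "continue" minus his payoff from stopping immediately equals exactly $V_{t}(\tau)$; by $V_{t}(\tau)\geq 0$ and the tie-breaking convention he continues. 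At $\tau$ no further messages follow, so the best deviation is to continue without additional information and optimally stop later, yielding precisely $V_{\tau}^{NI}<0$, and the principal therefore strictly prefers to obey "stop".

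For the \textbf{necessity} direction, suppose $\tau$ is a best response to some $\sigma$, and let $\mathcal{G}_{t}$ denote the filtration generated by the messages the principal observes under $\sigma$. Pathwise obedience of "continue" at any $t<\tau$ requires the $\mathcal{G}_{t}$-conditional continuation value
\[
\bigl[\p(x_{\tau}=2\mid\mathcal{G}_{t})-\p(x_{t}=2\mid\mathcal{G}_{t})\bigr]v-\Esp[\tau-t\mid\mathcal{G}_{t}]\,c
\]
to be non-negative almost surely on $\{t<\tau\}$. Since $\{t<\tau\}$ is $\mathcal{G}_{t}$-measurable, taking expectations conditional on $\{t<\tau\}$ and invoking the tower property collapses each $\mathcal{G}_{t}$-conditional into the coarser $\{t<\tau\}$-conditioning, recovering $V_{t}(\tau)\geq 0$. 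At the stopping time $\tau$, the tie-breaking convention forces the principal's continuation value under $\sigma$ to be strictly negative; since ignoring future messages from $\sigma$ is always a feasible continuation strategy, the value $V_{\tau}^{NI}$ obtainable without further information is weakly smaller than that continuation value, and hence also strictly negative.

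I expect the main obstacle to be the measurability step in the necessity direction — carefully verifying that the pathwise $\mathcal{G}_{t}$-conditional inequalities integrate cleanly to the coarser $\{t<\tau\}$-conditioning that appears in the definition of $V_{t}(\tau)$, and justifying the Blackwell-type comparison between the continuation value under $\sigma$ and $V_{\tau}^{NI}$ on the grounds that the principal may always discard incoming information and re-optimize his stopping rule. A secondary subtlety is the endpoint $t=T$, where continuation is infeasible and the second condition must be read with the convention that stopping is forced.
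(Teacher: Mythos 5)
Your proposal is correct and follows essentially the same route as the paper's proof: sufficiency via an explicit direct-recommendation mechanism that makes obeying "continue" worth exactly $V_{t}(\tau)$ and obeying "stop" strictly optimal because $V_{\tau}^{NI}<0$, and necessity via the feasible deviations of stopping early or continuing uninformed past $\tau$. Your treatment is somewhat more careful than the paper's on the averaging from the principal's finer filtration down to the $\{t<\tau\}$-conditioning and on the Blackwell-type comparison at $\tau$, but the substance is identical.
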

$V_{t}\left(\tau\right)\geq 0,\forall t\geq0$ ensures that the principal does not want to stop before the date of stopping suggested by the investment schedule is reached, and $V_{\tau}^{NI}<0$ ensures that the principal does not want to continue conditional on reaching the date of stopping suggested by the investment schedule. Conditions from Lemma \ref{lemm_obedience} constitute the system of constraints for the agent's problem.

As the agent chooses an investment schedule $\tau$ to maximize her long-run payoff, the constraint $V_{\tau}^{NI}<0$ is inactive at optimum.\footnote{Otherwise, the agent can prolong the expected funding by choosing a different $\tau$.} 
Thus, without loss of generality, I omit this constraint from the agent's problem, and the problem that the agent solves at $t=0$ is given by 
\begin{equation}
\begin{aligned} & \max_{\tau\in\mathcal{T}}\left\{ c\cdot\Esp\left[\tau\right]\right\} \\
 & \text{s.t.}\,V_{t}\left(\tau\right)\geq 0,\forall t\geq0,
\end{aligned}
\label{eq:OP_stop_time}
\end{equation}
where $\mathcal{T}$ is the set of stopping times with respect to
the natural filtration of $x_{t}$.
As the principal's choice to postpone the stopping of funding is costly, it is natural to interpret the system of constraints in (\ref{eq:OP_stop_time}) as the \emph{system of principal's individual rationality constraints}.

The agent's problem is complex, and thus I solve it in steps. First, I characterize  the investment schedule, which solves the \emph{relaxed} version of (\ref{eq:OP_stop_time}) with the principal's individual rationality constraints only for some initial periods. Second, I demonstrate that there exists an investment schedule solving the \emph{relaxed} agent's problem and satisfying the \emph{full} system of the principal's individual rationality constraints (\ref{obed_constr}). This investment schedule pins down optimal information policy.

\subsubsection{Solution to the relaxed agent's problem}\label{sect:relaxed}
In this Section, I consider the relaxed agent's problem and discuss its solution. This sheds light on the technical intuition behind the key properties o the optimal information policy.  The agent's relaxed problem for the parametric case of $\kappa\in(\kappa^{ND}\left(T,\lambda\right),\kappa^{NI}\left(T,\lambda\right)]$ is given by (\ref{eq:OP_stop_time}) with the \emph{principal's individual rationality constraint only for $t \in [0,\bar{S}^{NI}]$}. The agent's relaxed problem for the parametric case of $\kappa\in(\kappa^{NI}\left(T,\lambda\right),\tilde{\kappa}\left(T,\lambda\right)]$ is given by (\ref{eq:OP_stop_time}) with the \emph{principal's individual rationality constraint only for $t=0$}.

Consider the agent's long-run payoff given an investment schedule, $W\left(\tau\right)$.
This can be restated equivalently as follows:
\begin{equation}
\begin{aligned}
W\left(\tau\right) & = \left[W\left(\tau\right)+V\left(\tau\right)\right]-V\left(\tau\right)\\ & = 
\underbrace{\p\left(x_{\tau}=2\right)v}_{\text{total surplus}}\,-\underbrace{\left[\p\left(x_{\tau}=2\right)v-\Esp\left[\tau\right]c\right]}_{\text{principal's surplus}}.
\end{aligned}\label{zero_sum_pyf}
\end{equation}
The solution to the agent's relaxed problem for both considered parametric cases follows a simple idea: the optimal investment schedule \emph{ensures that the total surplus is maximal and that the principal's surplus is minimal}. 
Consider a schedule $\tau$ such that the stopping occurs after the
completion of the second stage of the project, unless the project
deadline $T$ was hit, i.e., the schedule satisfies the condition $\tau\geq\tau_{2}\land T$. Such a schedule leads to 
\begin{equation}
\p\left(x_{\tau}=2\right)=\p\left(x_{T}=2\right).\label{efficiency_cond}
\end{equation}

Given a schedule $\tau$ satisfying (\ref{efficiency_cond}), if $\tau$ is individually rational for the principal at date $t=0$ then the
total surplus generated achieves its upper bound and is given by $\p\left(x_{T}=2\right)v$,
which depends on the exogenously given project deadline $T$ and the profit $v$.
However, the stopping only after the second stage completion is not individually
rational for the principal at $t=0$ when the cost of funding is sufficiently
high, the profit is sufficiently low, or the expected time
until a project stage completion is sufficiently high. 

Lemma \ref{kappa_tilde} elaborates on the   cost-benefit ratio cutoff value $\tilde{\kappa}\left(T,\lambda\right)$: it distinguishes the
case in which stopping only after the second stage completion 
is individually rational at $t=0$ from the case in which it is not. Based on this
partition, when $\kappa\in(\kappa^{ND},\tilde{\kappa}\left(T,\lambda\right)]$,
I call the project \emph{ex ante promising} for the principal.

\begin{lemma}\label{kappa_tilde} For each $\left(T,\lambda\right)$
there exists $\tilde{\kappa}\left(T,\lambda\right)$, $\kappa^{NI}\left(T,\lambda\right)<\tilde{\kappa}\left(T,\lambda\right)<\kappa^{FI}\left(T,\lambda\right)$,
such that if $\kappa\leq\tilde{\kappa}\left(T,\lambda\right)$ $\left(\kappa>\tilde{\kappa}\left(T,\lambda\right)\right)$
then an investment schedule $\tau$ in which stopping after $\tau_{2}\land T$
happens with probability one is individually rational at $t=0$  (not individually
rational at $t=0$) for the principal. \end{lemma}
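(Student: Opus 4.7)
The plan is to (i) define $\tilde\kappa(T,\lambda)$ so that the principal's IR at $t=0$ for some $\tau\geq\tau_2\wedge T$ is equivalent to $\kappa\leq\tilde\kappa$, and then show (ii) $\kappa^{NI}<\tilde\kappa$ by an integral decomposition of $V_0(\tau_2\wedge T)-V^{NI}$, and (iii) $\tilde\kappa<\kappa^{FI}$ by showing the full-info interim deadline strictly improves on $\tau_2\wedge T$ at $\kappa=\tilde\kappa$.

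For (i), for any adapted $\tau$ with $\tau\geq\tau_2\wedge T$ a.s., the Poisson dynamics force $\p(x_\tau=2)=p_2(T)$ (on $\{\tau_2>T\}$ one has $\tau=T$ and $x_\tau\neq 2$; on $\{\tau_2\leq T\}$ one has $\tau\geq\tau_2$ so $x_\tau=2$), and $\Esp[\tau]$ is minimized over this class at $\tau=\tau_2\wedge T$. Hence $V_0(\tau)\leq V_0(\tau_2\wedge T)=p_2(T)v-\Esp[\tau_2\wedge T]c$, and IR at $t=0$ holds for \emph{some} such $\tau$ iff $V_0(\tau_2\wedge T)\geq 0$, i.e.\ iff $\kappa\leq\tilde\kappa(T,\lambda):=p_2(T)/(\lambda\Esp[\tau_2\wedge T])$.

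For (ii), I show $V_0(\tau_2\wedge T)>V^{NI}$ whenever $\bar S^{NI}<T$, which under Assumption \ref{assumpt1} holds at $\kappa=\kappa^{NI}$. Using $p_2'=\lambda p_1$ and $\Esp[\tau_2\wedge T]=\int_0^T(p_0+p_1)ds$, rewrite $V_0(\tau_2\wedge T)/(v\lambda)=\int_0^T[(1-\kappa)p_1-\kappa p_0]\,ds$; substituting $\kappa=\kappa(p_0+p_1+p_2)$ into the no-info expression gives $V^{NI}/(v\lambda)=\int_0^{\bar S^{NI}}[(1-\kappa)p_1-\kappa p_0-\kappa p_2]\,ds$. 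Subtracting,
\[
\frac{V_0(\tau_2\wedge T)-V^{NI}}{v\lambda} \;=\; \kappa\int_0^{\bar S^{NI}}p_2\,ds\;+\;\int_{\bar S^{NI}}^T\bigl[(1-\kappa)p_1(s)-\kappa p_0(s)\bigr]ds.
\]
The first integral is positive. For the second, the integrand has the sign of $\lambda s-\kappa/(1-\kappa)$, so it is non-negative on $[\bar S^{NI},T]$ iff $\lambda\bar S^{NI}\geq\kappa/(1-\kappa)$. Using the first-order condition $p_1(\bar S^{NI})=\kappa$ from Lemma \ref{no_feedb}, this rearranges to $p_0(\bar S^{NI})\leq 1-\kappa$, which is strict by $p_0+p_1+p_2=1$ and $p_2(\bar S^{NI})>0$. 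Evaluating at $\kappa=\kappa^{NI}$ (where $V^{NI}=0$) yields $V_0(\tau_2\wedge T)(\kappa^{NI})>0$, so $\tilde\kappa>\kappa^{NI}$. I expect this step to be the main obstacle because no direct Blackwell comparison is available---the schedule $\tau_2\wedge T$ is not in general the informed principal's best response to the ``disclose stage-2 completion'' information structure---so the proof has to go through an exact integral decomposition, with sharpness hinging on the FOC defining $\bar S^{NI}$.

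For (iii), at $\kappa=\tilde\kappa$, $V_0(\tau_2\wedge T)=0$. The full-information principal's optimal strategy from Lemma \ref{lemma_FI} stops at $S_0^P$ whenever $x_{S_0^P}=0$, which strictly improves on $\tau_2\wedge T$ because continuing past $S_0^P$ in state $0$ is strictly suboptimal (the instantaneous net benefit $V_{t|1}^{FI}\lambda-c$ vanishes at $t=S_0^P$ and is strictly negative for $t>S_0^P$ by the derivation of Lemma \ref{lemma_FI}). This strict improvement requires $S_0^P\in(0,T)$. The bound $S_0^P<T$ is automatic for $\kappa>0$ from the closed form. The bound $S_0^P>0$ is equivalent to $\kappa<(1-e^{-\lambda T})/(2-e^{-\lambda T})$; substituting the closed forms $p_2(T)=1-(1+\lambda T)e^{-\lambda T}$ and $\lambda\Esp[\tau_2\wedge T]=2-(2+\lambda T)e^{-\lambda T}$ into $\tilde\kappa$ and cross-multiplying, this simplifies to $\lambda T+e^{-\lambda T}>1$, which holds for all $\lambda T>0$ since $x\mapsto x+e^{-x}$ equals $1$ at $0$ with derivative $1-e^{-x}>0$ thereafter. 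Hence $V^{FI}(\tilde\kappa)>V_0(\tau_2\wedge T)=0$ and $\kappa^{FI}>\tilde\kappa$.
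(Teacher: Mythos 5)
Your proposal is correct, and on the part that overlaps with the paper it takes the same route: $\tilde{\kappa}$ is pinned down by the binding date-$0$ individual rationality constraint for the schedule $\tau_{2}\wedge T$, i.e.\ by $p_{2}\left(T\right)v-\Esp\left[\tau_{2}\wedge T\right]c=0$. Your extra observation that any adapted $\tau\geq\tau_{2}\wedge T$ yields the same success probability $p_{2}\left(T\right)$ but a weakly larger $\Esp\left[\tau\right]$, so that $\tau_{2}\wedge T$ is the extremal member of the class and the cutoff characterization is exactly an ``exists such a schedule'' statement, is a useful clarification the paper leaves implicit (its proof works only with $\tau=\tau_{2}\wedge T$ and monotonicity of $V\left(\tau_{2}\right)$ in $\kappa$). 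Where you genuinely diverge is that you actually prove the sandwich $\kappa^{NI}\left(T,\lambda\right)<\tilde{\kappa}\left(T,\lambda\right)<\kappa^{FI}\left(T,\lambda\right)$, which the paper asserts in the lemma statement but never verifies in its proof. Your decomposition of $V_{0}\left(\tau_{2}\wedge T\right)-V^{NI}$ into $\kappa\int_{0}^{\bar{S}^{NI}}p_{2}\,ds$ plus $\int_{\bar{S}^{NI}}^{T}\left[\left(1-\kappa\right)p_{1}-\kappa p_{0}\right]ds$, with positivity of the second piece reduced via the first-order condition $p_{1}\left(\bar{S}^{NI}\right)=\kappa$ to $p_{0}\left(\bar{S}^{NI}\right)<1-\kappa$, is sound, and your caveat that a Blackwell comparison does not directly apply (because $\tau_{2}\wedge T$ need not be the principal's best reply to the disclose-completion policy) is exactly right. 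The cross-multiplication for $\tilde{\kappa}<\kappa^{FI}$ reducing to $\lambda T+e^{-\lambda T}>1$ also checks out numerically and algebraically; note only that once you have that inequality, the ``strict improvement at $S_{0}^{P}$'' discussion is redundant, since $S_{0}^{P}>0$ at $\kappa=\tilde{\kappa}$ is by construction equivalent to $\tilde{\kappa}<\kappa^{FI}$. In short, your argument buys a complete proof of the ordering claims, whereas the paper's proof delivers only the closed form for $\tilde{\kappa}$ and the cutoff property.
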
 

For $\kappa\in(\kappa^{ND}\left(T,\lambda\right),\tilde{\kappa}\left(T,\lambda\right)]$, the schedule $\tau\geq\tau_{2}\land T$ is individually rational for the principal at $t=0$, and it  maximizes the total surplus. In addition to choosing $\tau\geq\tau_{2}\land T$, it is optimal for the agent to choose the investment schedule with a higher expected date of stopping the funding to extract all the principal's surplus subject to his individual rationality constraints.
For $\kappa\in(\kappa^{NI}\left(T,\lambda\right),\tilde{\kappa}\left(T,\lambda\right)]$, the agent chooses such $\tau$ that the principal's individual rationality constraint at $t=0$ is binding.  As a result, $V(\tau)=V^{NI}$, i.e., the principal gets his no-information benchmark payoff given by $0$.  

For $\kappa\in(\kappa^{ND}\left(T,\lambda\right),\kappa^{NI}\left(T,\lambda\right)]$, as in the no-information benchmark the principal invests until $\bar{S}^{NI}$ with certainty, the agent chooses the investment schedule as to postpone the start of information provision at least until $\bar{S}^{NI}$. Further, the agent chooses $\tau$ with a higher expected date of stopping so that the principal's individual rationality constraint at  $t=\bar{S}^{NI}$ is binding. The absence of stopping until at least $\bar{S}^{NI}$ and the fact that individual rationality constraint binds at $t=\bar{S}^{NI}$ taken together imply that $V(\tau)=V^{NI}$, i.e., from $t=0$ perspective, the principal gets her no-information benchmark payoff, which is non-negative and given by (\ref{NI_integral}). 

The next Lemma summarizes the  \emph{necessary}   conditions for an investment schedule to  solve the agent's relaxed problem when the project is promising. These conditions are shared both by the relaxed problem formulated for the case of  $\kappa\in(\kappa^{ND}\left(T,\lambda\right),\kappa^{NI}\left(T,\lambda\right)]$  and the relaxed problem formulated for the case of $\kappa\in(\kappa^{NI}\left(T,\lambda\right),\tilde{\kappa}\left(T,\lambda\right)]$. The conditions that are \emph{both necessary and sufficient} for an investment schedule to  solve the agent's relaxed problem are presented in the Proof of Lemma \ref{lemma_opt_sch1}.

\begin{lemma}\label{lemma_opt_sch1} Assume $\kappa\in(\kappa^{ND},\tilde{\kappa}\left(T,\lambda\right)]$.
If an investment schedule $\tau$ solves agent's relaxed problem, then 
\begin{itemize}
\item[1.] with probability one, stopping occurs after $\tau_{2}\land T$; 
\item[2.] $V\left(\tau\right)=V^{NI}$, where $V^{NI}$
is the principal's expected payoff in the no-information benchmark, given by (\ref{NI_integral}). 
\end{itemize}
\end{lemma}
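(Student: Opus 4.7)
The plan is to work with the zero-sum decomposition in (\ref{zero_sum_pyf}),
\[
W(\tau)\;=\;\p(x_\tau=2)\,v\;-\;V(\tau),
\]
so that maximizing $W$ amounts to simultaneously making the \emph{total surplus} $\p(x_\tau=2)\,v$ as large as possible and the \emph{principal's surplus} $V(\tau)$ as small as the relaxed IR constraints permit. I will derive a sharp upper bound of $p_2(T)\,v$ on the first term and a sharp lower bound of $V^{NI}$ on the second, so that any optimum must saturate both inequalities, yielding (1) and (2) respectively.

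For the upper bound, let $\tau_2$ denote the (hypothetical) Poisson arrival time of the second-stage completion under continuous investment, so that $\p(\tau_2\le t)=p_2(t)$. Since the state only advances while the principal invests and $\tau\le T$, $\{x_\tau=2\}=\{\tau_2\le\tau\}\subseteq\{\tau_2\le T\}$, giving $\p(x_\tau=2)\le p_2(T)$. Equality requires $\tau\ge\tau_2$ on $\{\tau_2\le T\}$ a.s. On the complementary event $\{\tau_2>T\}$ the total-surplus contribution is already zero, but the agent strictly prefers $\tau$ as large as feasible, so at any optimum $\tau=T$ on this event as well; combining yields $\tau\ge\tau_2\wedge T$ a.s., which is conclusion (1).

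For the lower bound, I split on the subcase. When $\kappa\in(\kappa^{NI},\tilde\kappa]$, $V^{NI}=0$ and the relaxed IR at $t=0$ reads $V(\tau)=V_0(\tau)\ge 0=V^{NI}$. When $\kappa\in(\kappa^{ND},\kappa^{NI}]$, $V^{NI}>0$; I first argue that any optimum has $\tau\ge\bar{S}^{NI}$ almost surely — otherwise an early-stopping event can be replaced by continuation to $\bar{S}^{NI}$, strictly raising $\Esp[\tau]$ while preserving feasibility on surviving paths. Under $\tau\ge\bar{S}^{NI}$ a.s., the telescoping identity
\[
V(\tau)\;=\;\bigl[v\,p_2(\bar{S}^{NI})-c\,\bar{S}^{NI}\bigr]+V_{\bar{S}^{NI}}(\tau)\cdot\p\bigl(\bar{S}^{NI}<\tau\bigr)\;=\;V^{NI}+V_{\bar{S}^{NI}}(\tau)\cdot\p\bigl(\bar{S}^{NI}<\tau\bigr)
\]
holds, where the first bracket equals $V^{NI}$ via (\ref{NI_integral}) together with the identity $p_2'(s)=\lambda p_1(s)$, and the second summand is non-negative by the relaxed IR at $t=\bar{S}^{NI}$. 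Therefore $V(\tau)\ge V^{NI}$.

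Combining, $W(\tau)\le p_2(T)\,v-V^{NI}$. I close by exhibiting a feasible $\tau^{\star}$ that attains both bounds — disclose the second-stage completion with a deterministic, state-independent postponement calibrated to bind the active IR constraint (at $t=0$ when $\kappa>\kappa^{NI}$, at $t=\bar{S}^{NI}$ otherwise). Its existence forces the maximum to equal $p_2(T)\,v-V^{NI}$, so any optimum must saturate both the upper and lower bounds, which are precisely (1) and (2). The hardest step is the lower-bound argument in the low-$\kappa$ subcase: the relaxed IR on $[0,\bar{S}^{NI}]$ does not by itself forbid the schedule from triggering stopping inside $[0,\bar{S}^{NI}]$, and I plan to rule this out via the modification argument above, reducing to the no-early-stopping regime in which the telescoping identity delivers the bound directly.
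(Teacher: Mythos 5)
Your overall architecture coincides with the paper's: both proofs run the zero-sum decomposition $W(\tau)=\p(x_\tau=2)v-V(\tau)$, bound the total surplus above by $p_2(T)v$ and the principal's surplus below by $V^{NI}$, and exhibit a feasible schedule attaining $p_2(T)v-V^{NI}$, so that any optimum must saturate both bounds. The one genuinely different ingredient is how you obtain $V(\tau)\ge V^{NI}$ in the subcase $\kappa\in(\kappa^{ND},\kappa^{NI}]$. The paper invokes the general principle that a best-replying principal cannot be made worse off by additional information, so his payoff is at least his no-information payoff $V^{NI}$. You instead extract the bound from the relaxed constraint set itself: first rule out stopping before $\bar{S}^{NI}$, then telescope $V(\tau)=V^{NI}+\p\bigl(\bar{S}^{NI}<\tau\bigr)V_{\bar{S}^{NI}}(\tau)$ and apply the constraint at $t=\bar{S}^{NI}$. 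Your route is arguably better adapted to the \emph{relaxed} problem, whose feasible set contains schedules that are not best replies to any policy and to which the information-monotonicity argument does not literally apply; it also makes transparent which constraint is doing the work.

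The step you flag as hardest is indeed where the work is, and as written it is asserted rather than proved: you claim the modification $\tau\mapsto\tau\lor\bar{S}^{NI}$ ``preserves feasibility,'' but lengthening a schedule raises $\Esp\left[\tau-t|t<\tau\right]$ and can in principle push some $V_t$ below zero, so this needs an argument. The argument is available with tools you already have: (i) the modification changes nothing on the event $\{\bar{S}^{NI}<\tau\}$, and $V_{\bar{S}^{NI}}$ conditions precisely on that event, so $V_{\bar{S}^{NI}}(\tau\lor\bar{S}^{NI})=V_{\bar{S}^{NI}}(\tau)\ge0$ by feasibility of $\tau$; (ii) for any schedule that never stops before $\bar{S}^{NI}$, your telescoping identity gives, for $t\le\bar{S}^{NI}$,
\[
V_t=\p\bigl(\bar{S}^{NI}<\tau'\bigr)V_{\bar{S}^{NI}}(\tau')+\int_t^{\bar{S}^{NI}}v\lambda\left(p_1(s)-\kappa\right)ds,
\]
and the integral is non-negative for every $t\in[0,\bar{S}^{NI}]$ by the single-crossing shape of $p_1$ (the paper's Lemma \ref{NI_obed}). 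Two smaller deferrals remain: the existence and relaxed-feasibility of the calibrated postponement $S^{*}$ that attains the bound (the paper establishes this by an intermediate-value argument in the proof of Proposition \ref{delayed_tau2}); and, for conclusion (1), tightness of $\p(x_\tau=2)=p_2(T)$ alone already forces $\tau\ge\tau_2\land T$ almost surely, because stopping at $\tau<T$ in a state other than $2$ leaves positive conditional probability that $\tau_2\in(\tau,T]$ — so your separate ``extend to $T$ on $\{\tau_2>T\}$'' argument is unnecessary and would itself require a feasibility check.
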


\subsubsection{Optimal information policy}
In this Section, I show that there exists an information policy that both solves the agent's relaxed problem and satisfies the full system of the individual rationality constraints. Given this,  as Lemma \ref{lemma_opt_sch1} describes the solution to the relaxed problem, it also sheds light on the properties of the optimal information policy for the case of a promising
project. These properties have a clear-cut economic interpretation as an investment schedule $\tau$ can be easily interpreted in terms of action recommendations for the principal.

An investment schedule $\tau$ can be without loss of generality implemented using a \emph{direct
recommendation mechanism} - a simple policy which has $M=\left\{ 0,1\right\} $,
where $m=1$ received at date $t$ is a recommendation to continue
investing at $t$ for the principal and $m=0$ received at date $t$
is a recommendation to stop investing at $t$.\footnote{The connection
between an investment schedule $\tau$ and a direct recommendation
mechanism implementing the schedule $\tau$ is simple: whenever, based on the evolution of the state process, $\tau$ suggests stopping the funding, the direct recommendation mechanism sends the message $m=0$.}
Keeping this in mind, it is clear from  Lemma \ref{lemma_opt_sch1} that the \emph{optimal information policy} has to satisfy the following conditions. \emph{First}, whenever the agent recommends the principal to stop, the second stage
of the project is already completed. \emph{Second}, the recommendation to stop is postponed so that the principal's individual
rationality constraint is binding, which manifests in $V\left(\tau\right)=V^{NI}$.
The first condition presents the key feature of the optimal information policy for the case of promising project: the agent discloses the completion of the second stage of the project, but \emph{stays silent regarding the completion of the first stage of the project}. The intuition behind
the agent's choice is simple: a recommendation to stop when no stages
of the project are completed and the project deadline $T$ is close
does indeed incentivize the principal; however, it also reduces the total
surplus generated that can be extracted via the agent's control of
information. Meanwhile, the recommendation to stop when the two stages
of the project are completed incentivizes the principal without reducing
the total surplus generated. When $\kappa\leq\tilde{\kappa}\left(T,\lambda\right)$, a partially informative
policy that discloses only the completion of the second stage provides
sufficient incentives to the principal, and thus the agent uses it.\footnote{The ``leading on'' information policy in \citet{ely2020moving}
is similar: the only information that the policy provides is that
the task is already completed and, thus, it is time to stop investing.}

I proceed with obtaining  an investment schedule that not only satisfies the  conditions  in Lemma \ref{lemma_opt_sch1} and solves the relaxed problem, but also satisfies the full system of the principal's individual rationality constraints in Lemma \ref{lemm_obedience}. Ensuring both  is non-trivial. For instance, consider a mechanism that implements an investment schedule solving the agent's relaxed problem and assume it recommends to continue
for $t\in\left[0,S^{*}\right)$, then at  $S^{*}$
recommends stopping if the second stage is already completed, but
recommends to continue at all the subsequent dates $t\in(S^{*},T]$.
A no stopping recommendation drawn at $S^{*}$ reveals that the
state is either $0$ or $1$. Clearly, after sufficient time passes
after $S^{*}$, the principal would attach a high probability to
the second stage already being completed and would potentially be
tempted to deviate from the recommendation to continue.\footnote{In other words, $V_{t}\left(\tau\right)$ drifts down over time and can get negative at some date.} However, a \emph{direct recommendation mechanism that implements an optimal investment schedule exists}. I present it in Proposition
\ref{delayed_tau2}.

\begin{proposition}\label{delayed_tau2} Assume $\kappa\in(\kappa^{ND}\left(T,\lambda\right),\tilde{\kappa}\left(T,\lambda\right)]$. The optimal mechanism  does not provide
a recommendation to stop during $t\in[0,S^{*})$. At $t=S^{*}$,
if the second stage of the project is already completed, then the
mechanism recommends the principal to stop. If the second stage of the
project is not yet completed, then the mechanism recommends the principal
to stop at the moment of its completion $t=\tau_{2}$.
Formally, 
\[
\tau=S^{*}\lor\left(\tau_{2}\land T\right),
\]
where $S^{*}$ is chosen such that  $V\left(\tau\right)=V^{NI}$, i.e., the respective constraint in the system of principal's individual rationality constraints is binding.
\end{proposition}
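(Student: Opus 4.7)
My plan is to construct the candidate investment schedule $\tau = S^* \vee (\tau_2 \wedge T)$, verify it solves the relaxed agent's problem via Lemma~\ref{lemma_opt_sch1}, and then verify that the direct recommendation mechanism implementing it satisfies the full system of principal's individual rationality constraints from Lemma~\ref{lemm_obedience}.

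First I would establish existence of $S^* \in [0,T]$ with $V(\tau) = V^{NI}$. Because $\tau \geq \tau_2 \wedge T$ almost surely, $\p(x_\tau = 2) = \p(x_T = 2) = p_2(T)$ does not depend on $S^*$, while $\Esp[\tau]$ is continuous and strictly increasing in $S^*$. Hence $V(\tau) = p_2(T)v - c\Esp[\tau]$ is continuous and strictly decreasing in $S^*$. At $S^* = T$ the schedule collapses to $\tau \equiv T$ and $V(\tau) = p_2(T)v - cT \leq V^{NI}$ (since $V^{NI}$ dominates the payoff of any deterministic stopping time). At $S^* = 0$ the schedule is $\tau_2 \wedge T$, and Lemma~\ref{kappa_tilde} together with the observation that under this schedule the principal can always mimic the no-information stopping rule yields $V(\tau_2 \wedge T) \geq V^{NI}$. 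The intermediate value theorem then furnishes $S^*$. Combined with $\tau \geq \tau_2 \wedge T$, both conditions of Lemma~\ref{lemma_opt_sch1} hold, so $\tau$ solves the relaxed problem.

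The main obstacle is verifying the full IR system $V_t(\tau) \geq 0$ at every $t \in [0,T]$, not only at the binding date used to pin down $S^*$. I would split the verification into two regimes. On $[0,S^*)$ the mechanism draws no message, so posteriors equal priors and a direct computation yields
\[
V_t(\tau) = V^{NI} + ct - p_2(t)v, \qquad V_t'(\tau) = -v\lambda\bigl(p_1(t)-\kappa\bigr).
\]
Hence $V_t(\tau)$ is unimodal on $[0,S^*)$ with its only interior critical minimum at $\bar S^{NI}$, the point where $p_1$ crosses $\kappa$ on its decreasing branch (if such a point exists). Substituting gives $V_{\bar S^{NI}}(\tau) = V^{NI} - \bigl(p_2(\bar S^{NI})v - c\bar S^{NI}\bigr)$, which equals $0$ when $\kappa \in (\kappa^{ND}, \kappa^{NI}]$ (as then $V^{NI} = p_2(\bar S^{NI})v - c\bar S^{NI}$) and is non-negative when $\kappa \in (\kappa^{NI}, \tilde\kappa]$ (as then $V^{NI} = 0$ and $p_2(\bar S^{NI})v - c\bar S^{NI} \leq 0$). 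With $V_0(\tau) = V^{NI} \geq 0$, this gives non-negativity on $[0,S^*)$. On $[S^*,T]$, conditional on $\{t<\tau\}$ the principal's posterior over $\{x_t = 0, x_t = 1\}$ is $\bigl(1/(1+\lambda t),\, \lambda t/(1+\lambda t)\bigr)$ and on this event $\tau = \tau_2 \wedge T$; thus $V_t(\tau)$ is a posterior-weighted average of the continuation values from states $0$ and $1$ under the policy ``stop at $\tau_2 \wedge T$''. The state-$1$ piece coincides with $V_{t|1}^{FI}$ from (\ref{V_1_FI}), which is non-negative for $\kappa \leq 1$; the state-$0$ piece admits an explicit closed form, and using Assumption~\ref{assumpt1} together with the fact that the posterior weight on $x_t = 1$ is minimized at $t = S^*$, I would verify by direct computation that the weighted average stays non-negative throughout $[S^*,T]$.

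Combining the verifications, $\tau$ simultaneously solves the relaxed problem and is obedient for the principal. Since the relaxed problem upper-bounds the agent's original problem, $\tau$ pins down the optimal investment schedule, and the associated direct recommendation mechanism implements the optimal information policy on $\kappa \in (\kappa^{ND}, \tilde\kappa]$.
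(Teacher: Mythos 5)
Your overall architecture is the same as the paper's: construct $\tau=S^{*}\lor(\tau_{2}\land T)$, pin down $S^{*}$ by making the relevant individual rationality constraint bind, invoke Lemma \ref{lemma_opt_sch1} for optimality in the relaxed problem, and then verify the full system of constraints from Lemma \ref{lemm_obedience} by splitting time at $S^{*}$. Your treatment of $[0,S^{*})$ via the closed form $V_{t}(\tau)=V^{NI}+ct-p_{2}(t)v$ is a compact and correct repackaging of the paper's argument (the paper instead invokes Lemma \ref{NI_obed} on $[0,\bar{S}^{NI}]$ and a separate monotonicity argument on $[\bar{S}^{NI},S^{*}]$), and your existence argument for $S^{*}$ via monotonicity of $V(\tau)$ in $S^{*}$ plus the intermediate value theorem is cleaner than the paper's construction of the auxiliary functions $g$ and $k$. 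One small caveat there: the ``mimicking'' step at the endpoint $S^{*}=0$ shows that the principal's \emph{best-response} payoff under immediate disclosure is at least $V^{NI}$, which equals $V(\tau_{2}\land T)$ only once obedience of $\tau_{2}\land T$ is established; Lemma \ref{kappa_tilde} gives only the $t=0$ constraint, so as written the endpoint inequality is slightly circular (it is true and can be shown directly, but you should close this).

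The genuine gap is the obedience check on $[S^{*},T)$, which you defer entirely to an unperformed ``direct computation.'' The decomposition you propose is delicate: the state-$0$ continuation value under $\tau_{2}\land T$ is \emph{negative} for $t$ near $T$ (there is not enough time left to complete two stages, so the principal only accumulates costs), so the weighted average is not obviously nonnegative from the fact that the weight on state $1$ is minimized at $S^{*}$; one must actually balance a shrinking negative term against a shrinking positive term, and Assumption \ref{assumpt1} does not by itself do that work. The paper's argument here is much shorter and is the one you should use: after $S^{*}$ the absence of a stop recommendation reveals $x_{t}\neq2$, so $q_{1}(t)=\lambda t/(1+\lambda t)$ is increasing with $q_{1}(S^{*})>\kappa$, hence $V_{t}(\tau)=\Esp\left[\int_{t}^{\tau}v\lambda\left(q_{1}(z)-\kappa\right)dz\,|\,t<\tau\right]$ has a pointwise nonnegative integrand and is therefore nonnegative for every $t\in[S^{*},\tau)$. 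Your weighted-average identity is algebraically equivalent to this integral, but without the observation that the posterior flow benefit $q_{1}(z)-\kappa$ is nonnegative for all $z\geq S^{*}$, the verification is not complete.
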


The recommendation mechanism 
starting from $S^{*}$ generates recommendations to stop if the second stage is completed. As the recommendation to stop comes
immediately at the completion of the second stage for all $t>S^{*}$,
hearing no recommendation to stop reveals that the state is either
$0$ or $1$. Further, as time goes on, the principal attaches a higher
and higher probability to the state being $1$, which ensures obedience
to the recommendation to continue at each date. Further, the start of information
provision $S^{*}$ is sufficiently postponed to ensure that the  principal's individual rationality constraint is binding either at $t=\bar{S}^{NI}$ or at $t=0$.

The choice of $S^{*}$ is driven by extraction of the principal's surplus and depends on $\kappa$ in an intuitive way. First, consider the case $\kappa\in(\kappa^{ND},\kappa^{NI}\left(T,\lambda\right)]$,
the principal is willing to start investing and invests until $t=\bar{S}^{NI}$
in the no-information benchmark. 
The agent's optimal choice is to set $S^{*}>\bar{S}^{NI}$. Given such an information policy, the principal
does not stop at $\bar{S}^{NI}$, the date of stopping in the no-information
benchmark, and with probability one continues to invest during $t\in[\bar{S}^{NI},S^{*})$
even though the mechanism provides absolutely no information for all
$t<S^{*}$. This is driven by the fact that the expected benefit from
stopping at some future date $t\in[S^{*},T]$ and obtaining the project
payoff $v$ with certainty compensates the flow losses of investing
during $t\in[\bar{S}^{NI},S^{*})$.\footnote{Similarly to the ``leading on'' information policy in \citet{ely2020moving},
the promises of future disclosure of the completion of the project
are used as a ``carrot'' to make the receiver continue investing
beyond the point at which he stops in the no-information benchmark.}
Further, the agent sufficiently postpones $S^{*}$ to ensure that she extracts the principal's surplus and the principal gets precisely $V^{NI}\geq0$. 

In the case $\kappa\in(\kappa^{NI}\left(T,\lambda\right),\tilde{\kappa}\left(T,\lambda\right)]$, the principal is not willing to start in the no-information benchmark
as his expected payoff from investing is negative. Thus, the agent chooses $S^{*}$  to guarantee that the principal gets his reservation value  $V^{NI}=0$ and is thus willing to start investing at $t=0$. The value of $S^{*}$ is relatively lower as compared to the previous case: as the project is less attractive, to provide the principal sufficient incentives, the agent needs to start the information provision regarding the completion of the project earlier.

Finally, there exist many information policies that both solve the relaxed agent's problem and satisfy the full system of constraints (\ref{obed_constr}). This constitutes an important advantage for the agent: she can choose an optimal policy that is easier to implement from
the real-world perspective, depending on the particular environment.
In the optimal mechanism from Proposition \ref{delayed_tau2}, the recommendation
to stop at some date $t$ depends only on the current state of the
world $x_{t}$. In an alternative delayed disclosure mechanism, the recommendation to
stop arrives with a pre-specified delay after the second stage was
completed. Thus, the recommendation depends only on the past history
and not on the current state of the world. In an optimal delayed disclosure
mechanism, the delay becomes smaller as more time passes. I characterize
such a mechanism in Appendix \ref{app_delay}.\footnote{The rich set of optimal direct recommendation mechanisms in my model
encompasses both mechanisms in which the information provision depends
only on the current state, similarly to the optimal mechanism in \citet{ely2020moving},
and the mechanisms that use delay, similarly to the delayed beep from
\citet{ely2017beeps}.}

Recall that, as Lemma \ref{lemma_opt_sch1} suggests, the key idea of the optimal information policy
is that the agent postpones the disclosure of the completion of the
project to extract more surplus, which makes the principal's individual
rationality constraint bind. The higher the cost-benefit ratio of the project $\kappa$
becomes, the higher additional value the agent's information policy 
needs to provide to the principal to ensure that his active individual rationality constraint is satisfied. The implication of this
for the optimal information policy is presented in Lemma \ref{delay}.

\begin{lemma}\label{delay} Assume $\kappa\in(\kappa^{ND}\left(T,\lambda\right),\tilde{\kappa}\left(T,\lambda\right)]$.
Given the recommendation mechanism implementing an optimal investment
schedule $\tau$, for a fixed Poisson rate $\lambda$, the expected
length of investment $\Esp\left[\tau\right]$ decreases in the cost-benefit
ratio $\kappa$. \end{lemma}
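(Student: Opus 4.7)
The plan is to derive a closed-form for $\Esp[\tau]$ as a function of $\kappa$ (with $T$ and $\lambda$ fixed) using Proposition \ref{delayed_tau2}, and then to differentiate. By Proposition \ref{delayed_tau2}, the optimal schedule satisfies $\tau\geq\tau_{2}\wedge T$ almost surely and $V(\tau)=V^{NI}$. The first fact pins down $\p(x_{\tau}=2)=p_{2}(T)$ independently of $\kappa$; combining this with the binding individual-rationality identity and $v=c/(\kappa\lambda)$ yields
\[
\Esp[\tau]=\frac{p_{2}(T)}{\kappa\lambda}-\frac{V^{NI}}{c}.
\]

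The rest of the argument splits according to whether $V^{NI}$ is zero or strictly positive. On $(\kappa^{NI}(T,\lambda),\tilde{\kappa}(T,\lambda)]$ one has $V^{NI}=0$, so $\Esp[\tau]=p_{2}(T)/(\kappa\lambda)$ is manifestly strictly decreasing in $\kappa$. On $(\kappa^{ND}(T,\lambda),\kappa^{NI}(T,\lambda)]$, Assumption \ref{assumpt1} guarantees $\bar{S}^{NI}<T$, whence $V^{NI}=\int_{0}^{\bar{S}^{NI}}(vp_{1}(s)\lambda-c)\,ds$. I would then differentiate in $\kappa$: the Leibniz boundary term vanishes because the first-order condition (\ref{FOC_no_info}) that defines $\bar{S}^{NI}$ reads exactly $vp_{1}(\bar{S}^{NI})\lambda-c=0$, while the only $\kappa$-dependence inside the integrand is through $v=c/(\kappa\lambda)$. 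This yields $dV^{NI}/d\kappa=-(c/\kappa^{2})\int_{0}^{\bar{S}^{NI}}p_{1}(s)\,ds$.

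Substituting back and invoking the identity $p_{2}(T)/\lambda=\int_{0}^{T}p_{1}(s)\,ds$ (which follows from $p_{2}'(t)=\lambda p_{1}(t)$ and $p_{2}(0)=0$), I obtain
\[
\frac{d\Esp[\tau]}{d\kappa}=\frac{1}{\kappa^{2}}\left(\int_{0}^{\bar{S}^{NI}}p_{1}(s)\,ds-\int_{0}^{T}p_{1}(s)\,ds\right)=-\frac{1}{\kappa^{2}}\int_{\bar{S}^{NI}}^{T}p_{1}(s)\,ds<0,
\]
so $\Esp[\tau]$ is strictly decreasing on this subregion too. Continuity at $\kappa=\kappa^{NI}$ glues the two pieces, since $V^{NI}\to 0$ as $\kappa\uparrow\kappa^{NI}$, and the two subcase formulas coincide there.

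I expect the main difficulty to lie in the second subregion: $V^{NI}$ depends on $\kappa$ both directly through $v$ and indirectly through the moving upper limit $\bar{S}^{NI}(\kappa)$, so the clean conclusion rests on noticing that the envelope-type boundary contribution vanishes by the first-order condition. Only the direct effect through $v$ remains, and the resulting expression can be compared cleanly against the decrease in $p_{2}(T)v/c$ by writing $p_{2}(T)/\lambda$ as an integral of $p_{1}$ over $[0,T]$.
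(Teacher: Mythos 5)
Your proposal is correct and follows essentially the same route as the paper: both use the binding individual-rationality identity $V(\tau)=V^{NI}$ together with $\p(x_\tau=2)=p_2(T)$ to write $\Esp[\tau]$ in closed form and then differentiate in $\kappa$, with the boundary term from $\bar{S}^{NI}(\kappa)$ vanishing by the first-order condition $v\lambda p_1(\bar{S}^{NI})=c$. The only cosmetic difference is the final sign argument — the paper identifies the unique zero of the derivative's numerator with $\kappa^{ND}(T,\lambda)=e^{-\lambda T}\lambda T$, whereas you rewrite the derivative as $-\kappa^{-2}\int_{\bar{S}^{NI}}^{T}p_1(s)\,ds$, which is arguably the more transparent way to see negativity.
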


The intuition is that the higher the cost-benefit ratio of the project
becomes, the sooner after the second stage of the project is completed
the agent recommends the principal to stop. For the cost-benefit ratio
as high as $\tilde{\kappa}\left(T,\lambda\right)$, the agent provides
the recommendation to stop immediately at the date of completion of
the second stage of the project. Further, for $\kappa>\tilde{\kappa}\left(T,\lambda\right)$,
the optimal information policy satisfying the conditions in Lemma \ref{lemma_opt_sch1} ceases
to be individually rational for the principal. As I show in the next Section, for $\kappa>\tilde{\kappa}\left(T,\lambda\right)$,
in addition to immediate disclosure of the project completion, the
agent provides the information regarding the completion of the first stage of the project.

\subsection{Immediate disclosure of completion and an interim deadline}

\label{section_INT}

When $\kappa>\tilde{\kappa}\left(T,\lambda\right)$, the project is
not promising for the principal and any investment schedule in which
stopping occurs after $\tau_{2}\land T$ with probability one violates
the principal's individual rationality constraint. In other words, from the ex ante perspective the
future reports disclosing only the completion of the project do not
motivate the principal to start investing. Thus, an investment schedule
that provides an individually rational expected payoff to the principal
should assign a positive probability not only to stopping after the
completion of the project, but also to stopping in either state $0$,
when no stages of the project are completed, or state $1$, when only
the first stage of the project is completed. I present the necessary
conditions for an investment schedule to be optimal 
when the project is not promising in Lemma \ref{lemma_effic_sch2}.

\begin{lemma}\label{lemma_effic_sch2} Assume $\kappa\in(\tilde{\kappa}\left(T,\lambda\right),\kappa^{FI}\left(T,\lambda\right))$.
If an investment schedule $\tau$ solves agent's problem, then it satisfies the conditions 
\begin{itemize}
\item[1.] conditional on no completed stages of the project, stopping of the funding happens
with a positive probability;
\item[2.] conditional on one completed stage of the project, stopping of the funding happens
with probability zero;
\item[3.] conditional on two completed stages of the project, stopping of the funding happens
immediately (at $t=\tau_{2}$) and with probability one.
\end{itemize}
\end{lemma}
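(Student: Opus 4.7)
The plan is to proceed by contradiction for each of the three items, constructing an explicit improving modification of any candidate optimum $\tau$ that violates the item. The argument runs most cleanly on the relaxed version of the agent's problem (as introduced in Section \ref{sect:relaxed}), with only the $t=0$ individual-rationality constraint; the full system (\ref{obed_constr}) is then handled, in the spirit of Proposition \ref{delayed_tau2}, by exhibiting a direct recommendation mechanism that implements the relaxed optimum.

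For item 3, suppose $\p(\tau>\tau_2,\tau_2\le T)>0$ for an allegedly optimal $\tau$. I set $\tau':=\tau\wedge\tau_2$, which is a stopping time in the natural filtration of $x_t$ and satisfies $\tau'\le\tau$ with strict inequality on the positive-probability event $\{\tau>\tau_2,\tau_2\le T\}$. Because state 2 is absorbing for $x_t$, $x_{\tau'}=x_\tau$ pathwise, so $\p(x_{\tau'}=2)=\p(x_\tau=2)$ while $\Esp[\tau']<\Esp[\tau]$; hence $V(\tau')>V(\tau)\ge 0$. The resulting slack in the $t=0$ IR is then re-invested by lengthening the interim deadline for state 0, producing a feasible $\tau''$ with $\Esp[\tau'']>\Esp[\tau]$ and $V(\tau'')\ge 0$, contradicting optimality of $\tau$.

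For item 2, suppose $\p(\tau<T,x_\tau=1)>0$. I modify $\tau$ to $\tau''$ by replacing every state-$1$ stopping decision interior to $[0,T]$ by continuation until $\tau_2\wedge T$: formally, $\tau'':=\tau_2\wedge T$ on $\{\tau<T,x_\tau=1\}$ and $\tau'':=\tau$ elsewhere. By the strong Markov property of $x_t$ and formula (\ref{V_1_FI}), the change in the principal's $t=0$ payoff equals $\Esp\bigl[V^{FI}_{\tau\mid 1}\,\mathbf{1}_{\{\tau<T,\,x_\tau=1\}}\bigr]$, and this is strictly positive because $\kappa<\kappa^{FI}<1$ forces $V^{FI}_{t\mid 1}>0$ for every $t<T$. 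Hence $V(\tau'')>V(\tau)\ge 0$ and $\Esp[\tau'']>\Esp[\tau]$, contradicting optimality; no re-investment of slack is needed here, and obedience of $\tau''$ at each interior state-$1$ time $t$ is automatic since the principal's continuation value from $t$ onward is exactly $V^{FI}_{t\mid 1}\ge 0$.

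Item 1 then follows from items 2 and 3 combined with Lemma \ref{kappa_tilde}: if $\p(x_\tau=0)=0$, items 2 and 3 together force $\tau\ge\tau_2\wedge T$ almost surely, and Lemma \ref{kappa_tilde} asserts that any such schedule violates the $t=0$ IR whenever $\kappa>\tilde{\kappa}(T,\lambda)$, contradicting feasibility. The main obstacle is item 3: the local modification there strictly lowers $\Esp[\tau]$ and forces one to re-invest the IR slack in order to strictly improve on $\tau$; constructing this re-investment while preserving every intermediate obedience constraint $V_t(\tau'')\ge 0$ requires an explicit direct-recommendation mechanism paralleling the one in Proposition \ref{delayed_tau2}.
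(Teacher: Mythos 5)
Your overall strategy — improvement arguments for items 2 and 3, then item 1 via Lemma \ref{kappa_tilde} — is the same as the paper's, and your treatments of items 1 and 2 match the paper's proof essentially step for step (the paper likewise reallocates the state-$1$ stopping mass to $\tau_{2}\wedge T$ and observes that both $V$ and $W$ strictly increase, so no slack re-investment is needed there).

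The problem is item 3, and you have correctly identified it as the main obstacle without actually overcoming it. Your argument produces $\tau'=\tau\wedge\tau_{2}$ with $V(\tau')>V(\tau)$ but $\Esp[\tau']<\Esp[\tau]$, and then \emph{asserts} that re-investing the IR slack by postponing the state-$0$ deadline yields $\Esp[\tau'']>\Esp[\tau]$. That inequality is precisely the claim to be proved: a priori the slack gained by removing the post-$\tau_{2}$ delay might not buy enough extra deadline postponement to recover the expected investment length lost. The paper closes this with the identity $W(\tau)=\underbrace{\p(x_{\tau}=2)v}_{SV(\tau)}-V(\tau)$: comparing the original schedule and the modified one at the \emph{same} level of $V$ (the binding IR level), the difference in the agent's payoff equals the difference in total surplus, i.e., in $\p(x_{\tau}=2)v$. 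Removing the delay after $\tau_{2}$ leaves $\p(x_{\tau}=2)$ unchanged (state $2$ is absorbing, as you note), while postponing the interim deadline strictly raises it because stopping in state $0$ occurs less frequently; hence $W$ strictly increases with no quantitative comparison of "slack" versus "loss" ever needed. Without this decomposition (or an equivalent computation), your step from $\tau'$ to $\tau''$ is a gap. A secondary point: you worry about preserving the intermediate obedience constraints $V_{t}(\tau'')\geq 0$ inside this lemma, but the lemma only states \emph{necessary} conditions and the paper's proof works entirely with the $t=0$ constraint; the full system is verified later, in the proof of Proposition \ref{case3}, for the specific optimal schedule.
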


Stopping when only the first stage of the project is already completed
is clearly inefficient. In state $1$, the principal prefers to continue
investing until the completion of the second stage and this principal's
incentive to wait is aligned with the agent's incentive to postpone
the stopping. Further, stopping in state $1$ does not help overcome
the problem of the violated individual rationality constraint under
$\kappa>\tilde{\kappa}\left(T,\lambda\right)$. Meanwhile, assigning
a positive probability to stopping when no stages are completed helps
to overcome the problem of violated individual rationality constraint, as the principal benefits from
stopping at some date $t$ when the first stage of the project is
not yet completed and the project deadline $T$ is sufficiently close.
Further, the agent clearly prefers the stopping of funding after the completion of
the second stage rather than in state $0$ as the former does not
harm the total surplus generated. Thus, a schedule that is optimal assigns probability $1$ to immediate
stopping when the second stage is completed. 

Lemma \ref{lemma_effic_sch2} implies that in an investment schedule, optimal for the agent, stopping after the completion of the second stage of the project happens immediately and stopping also happens given that no stages of the project are completed - i.e., at the \emph{interim deadline chosen by the agent, which I denote by} $S_{0}^{A}$. Thus, Lemma \ref{lemma_effic_sch2} drastically simplifies the strategy space in the agent's design problem: it is only left to characterize the optimal interim deadline.
At the outset of the game, the agent designs a device that privately randomizes over the interim deadlines $S_{0}^{A}$. That is, the agent publicly chooses a distribution $F_{S_{0}^{A}}$, then an interim deadline is drawn according to it and privately observed by the agent. Next, the information starts to flow. The action recommendation to stop the funding satisfies the following investment schedule
\begin{equation}
\tau=\begin{cases}
S_{0}^{A}, & \text{if }x_{S_{0}^{A}}=0\\
\tau_{2}\land T, & \text{otherwise},
\end{cases}
\end{equation}
where the principal knows only the distribution $F_{S_{0}^{A}}$, but not the draw of $S_{0}^{A}$.

Given that the completion of the second stage of the project is disclosed
immediately, stopping at the interim deadline in state $0$ leads
to a loss of expected further investment flow for the agent, and a
potential savings from abandoning a ``stagnating'' project for the principal.
The agent's payoff can be without loss of generality restated as the expected loss in future investment due
to stopping at the interim deadline $S_{0}^{A}$ in state $0$ (rather than at 
$\tau_{2}\land T$). Given this, the agent's problem can be expressed as 
\begin{equation}
\min_{F_{S_{0}^{A}}} \Esp_{F_{S_{0}^{A}}}\Bigg[\underbrace{\p\left(x_{S_{0}^{A}}=0\right)\Esp\left[\tau_{2}\land T-S_{0}^{A}|x_{S_{0}^{A}}=0\right]}_{\text{expected loss in future investment given }S_{0}^{A}}\Bigg], \label{S's_pyf_int_dealine}
\end{equation}
subject to the system of the principal's individual rationality constraints, which also have a natural interpretation as the expectation of principal's savings on the future investment, which discontinues at $S_{0}^{A}$ in state $0$, minus the loss in the project completion profit due to stopping the funding at $S_{0}^{A}$ in state $0$.\footnote{The principal's long-run payoff is given in (\ref{CaseC_constr}).}

Inspecting the agent's expected loss in future investment in (\ref{S's_pyf_int_dealine}) reveals that if the agent postpones the
interim deadline $S_{0}^{A}$, then two effects arise. First, the probability that
stopping at the interim deadline will happen decreases. Second, the
expected loss in total surplus due to stopping at the interim deadline
rather than at $\tau_{2}\land T$ decreases.
Thus, the agent's expected loss  in future investment  is decreasing
in the date of interim deadline and the agent prefers
an interim deadline with a later expected date.

Agent's choice of the interim deadline distribution $F_{S_{0}^{A}}$ is affected
by the two factors. First, as the expected loss in future investment in (\ref{S's_pyf_int_dealine}) is
decreasing and convex in the date of the interim deadline, and thus the agent is risk-averse with respect to random interim deadlines.
Thus, given some random interim deadline, the agent \emph{directly}
benefits from inducing a mean-preserving contraction. Second, the
agent benefits from inducing a mean-preserving contraction \emph{indirectly}.
Inspecting the principal's long-run payoff for some fixed $S_{0}^{A}$ reveals that the principal
is also risk-averse with respect to random interim deadlines. Thus,
inducing a mean-preserving contraction makes the principal better-off
and relaxes the individual rationality constraint at $t=0$, hence, allowing
the agent to postpone the expected interim deadline further. As a
result the optimal for the agent interim deadline takes the form of
a \emph{deterministic date}. In other words, it is optimal for the agent to \emph{publicly announce the interim deadline $S_{0}^{A}$ at the outset}, so that the principal knows it. 


The agent has an incentive to postpone the interim
deadline and uses her control of the information environment to postpone
the deadline as much as possible so that the principal's individual
rationality constraint at $t=0$ binds. Figure \ref{fig_tauint} demonstrates
the principal's long-run payoff as a function of the interim
deadline, which I denote by $S_{0}$. It is maximized at the principal-preferred interim
deadline $S_{0}^{P}$, which was characterized in Lemma \ref{lemma_FI}. The agent-preferred interim deadline $S_{0}^{A}$
yields the principal's expected payoff of $0$. 
\begin{figure}[H]
\centering \begin{tikzpicture}[xscale = 2,yscale = 2,domain=1:3.3,samples = 200]
\draw[thick,->] (0.9,1) node[below]{$0$} -- (4,1) node[below]{$S_{0}$}; 
\draw[thick,->] (1,0) -- (1,2.45) node[left]{$V$}; 
\draw[thick,red!50!black] plot (\x,{2-(\x-2)^2});
\draw[ultra thin,,dashed] (2,1) node[below]{$S_{0}^{P}$} -- (2,0.9);
\draw[ultra thin,,dashed] (3.5,1) node[below]{$T$} -- (3.5,0.9);
\draw[ultra thin,,dashed] (3,1) node[below]{$S_{0}^{A}$} -- (3,0.9);
\end{tikzpicture} \caption{Principal's long-run payoff, $V$, as a function
of an interim reporting deadline chosen by the agent, $S_{0}$.}
\label{fig_tauint} 
\end{figure}
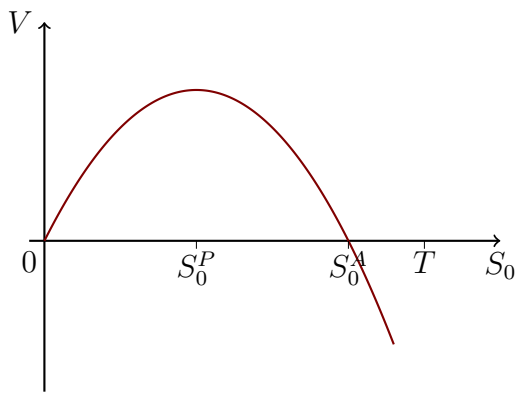

The next Proposition summarizes the optimal investment schedule, which can be without loss of generality implemented using a direct
recommendation mechanism: 

\begin{proposition}\label{case3} Assume $\kappa\in(\tilde{\kappa}\left(T,\lambda\right),\kappa^{FI}\left(T,\lambda\right))$.
The optimal information policy is given by a direct recommendation
mechanism that generates 
\begin{enumerate}
    \item[(a)] the recommendation to stop at the moment
of completion of the second stage of the project, $t=\tau_{2}$, and
\item[(b)] a conditional recommendation to stop at the publicly announced interim deadline $t=S_{0}^{A}$.
At $S_{0}^{A}$, stopping is recommended with certainty if the first
stage of the project has not yet been completed. 
\end{enumerate}
Formally,
\begin{equation*}
\tau=\begin{cases}
S_{0}^{A}, & \text{if }x_{S_{0}^{A}}=0\\
\tau_{2}\land T, & \text{otherwise},
\end{cases}
\end{equation*}
where $S_{0}^{A}$ is chosen
so that the principal's individual rationality constraint at $t=0$ is binding, i.e., $V\left(\tau\right)=0$.

\end{proposition}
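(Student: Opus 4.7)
By Lemma \ref{lemma_effic_sch2}, any optimal schedule $\tau$ stops with probability zero in state $1$, immediately in state $2$, and with positive probability in state $0$. Hence the entire design collapses to picking a distribution $F_{S_0^A}$ for the stopping time conditional on state $0$, while on the complementary event $\tau = \tau_2 \wedge T$. The agent's long-run payoff equals $c\,\Esp[\tau_2 \wedge T]$ minus the expected loss in (\ref{S's_pyf_int_dealine}), so I would frame the agent's program as the choice of $F_{S_0^A}$ that minimizes that loss subject to the principal's system of IR constraints in Lemma \ref{lemm_obedience}.

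\emph{Determinism of $S_0^A$.} I would compute $V(\cdot)$ and the loss $\ell(s)\coloneqq e^{-\lambda s}\Esp[\tau_2\wedge T - s\mid x_s=0]$ as functions of a deterministic deadline $s$ using the closed-form Poisson densities $p_n(\cdot)$, and verify that $\ell$ is strictly convex decreasing in $s$, while $V(s)$ is strictly concave with a unique interior maximum at $S_0^P$ (consistent with Figure \ref{fig_tauint}). Given any non-degenerate lottery $F_{S_0^A}$ with mean $\mu$, replacing it by the point mass at $\mu$ (i) strictly lowers the agent's expected loss by Jensen's inequality and (ii) weakly relaxes the principal's $t=0$ IR constraint by concavity of $V$. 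Iterating until only a point mass survives shows that the optimal $F_{S_0^A}$ is deterministic.

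\emph{Choice of deadline and interim obedience.} Once restricted to a deterministic deadline $s$, the agent prefers to push $s$ as far right as possible because $\ell$ decreases in $s$. Since $V$ is concave with $V(S_0^P)>0$ (because $\kappa<\kappa^{FI}$) and $V$ becomes negative near $T$ (because $\kappa > \tilde\kappa$ implies that the fully concealed schedule with $s=T$ violates IR), continuity and the intermediate value theorem yield a unique $S_0^A \in (S_0^P,T)$ with $V(\tau)=0$; this is the agent's optimal choice. It remains to check obedience at interim dates $t>0$: for $t<S_0^A$ no information has arrived and $V_t(\tau)\geq 0$ can be verified by a direct computation exploiting that $S_0^A$ lies on the decreasing branch of $V$; for $t\in[S_0^A,T)$, conditional on the game not having stopped, the mechanism is observationally equivalent to the full-information policy on the event $\{x_{S_0^A}\geq 1\}$, so obedience is inherited from Lemma \ref{lemma_FI}.

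The main obstacle will be the determinism step: both the convexity of $\ell$ and the concavity of $V$ in the deadline must be pinned down cleanly, since the argument for a degenerate $F_{S_0^A}$ relies on these two curvature properties simultaneously — one to shrink the objective and one to relax the constraint. Both rest on the specific shape of the Poisson kernel and reduce to routine but careful differentiation. After that, the rest is one-dimensional optimization plus a short obedience check.
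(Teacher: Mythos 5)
Your overall architecture matches the paper's: reduce via Lemma \ref{lemma_effic_sch2} to choosing a distribution over state-$0$ stopping dates, solve a relaxed program with only the $t=0$ IR constraint, then verify the full system of interim constraints. Where you genuinely diverge is the determinism step. The paper's formal proof does not use Jensen's inequality: it writes the Lagrangian of the relaxed program, notes that the first-order condition must hold at every point in the support of $F_{\rho}$, and shows the FOC expression is strictly monotone in $\rho$ (using $\kappa^{FI}<\tfrac12$), so the support is a singleton. Your mean-preserving-contraction argument is the one the paper uses only as informal intuition in Section \ref{section_INT}; making it rigorous requires exactly the two curvature facts you flag, and they do hold: writing $V(s)=V(\tau_2)-e^{-\lambda s}V_{s|0}(\tau_2)$ and $\ell(s)=e^{-\lambda s}\Esp[\tau_2\wedge T-s\,|\,x_s=0]$, a direct computation gives $\frac{d^2}{ds^2}\left(e^{-\lambda s}V_{s|0}(\tau_2)\right)=\lambda^{2}v(1-2\kappa)e^{-\lambda s}>0$ and $\ell''(s)=2\lambda e^{-\lambda s}>0$, so $V$ is strictly concave and $\ell$ strictly convex, and your argument closes. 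Your route arguably buys a cleaner economic story (risk aversion of both parties in the deadline); the paper's buys a shorter computation that sidesteps second derivatives.

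The one genuine flaw is in your interim-obedience check for $t<S_0^A$. It is not true that "no information has arrived" before the deadline: because stage-$2$ completion is disclosed immediately from $t=0$, the absence of a stop recommendation at any $t<S_0^A$ reveals $x_t\neq 2$, so the principal's posterior is $q_1(t)=\lambda t/(1+\lambda t)$, not the prior $p_1(t)$. Consequently $V_t(\tau)$ for $t<S_0^A$ is not a restriction of the function $V(\cdot)$ whose "decreasing branch" you invoke, and the verification cannot be read off that curve. The paper instead splits $[0,S_0^A)$ at the date $\tilde t$ solving $q_1(\tilde t)=\kappa$, uses the ODE $\dot V_t(\tau)=\lambda q_1(t)V_t(\tau)+v\lambda(\kappa-q_1(t))$ together with a mean-value-theorem contradiction to rule out $V_t(\tau)<0$ on $[0,\tilde t\wedge S_0^A)$, and then uses the integral representation with $q_1(t)\geq\kappa$ thereafter. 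Your treatment of $t\geq S_0^A$ (absence of stopping reveals $x_t=1$, so obedience follows from the state-$1$ analysis in Lemma \ref{lemma_FI}) is fine. Fix the $t<S_0^A$ step along the paper's lines and the proof is complete.
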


A stopping recommendation at any date other than the interim deadline $t=S_{0}^{A}$ fully reveals that project is accomplished. Further, observing a recommendation to stop at the interim deadline, the principal learns that the milestone of the project
has not yet been reached and becomes sufficiently pessimistic that
the project will be completed by $T$.

A notable feature of the optimal information policy when
the project is ex ante unattractive is its uniqueness. The only
optimal instrument through which the agent fine tunes the incentive
provision to the principal is the choice of interim deadline, and
there is a unique optimal way to set the deadline to make the principal's
individual rationality constraint bind.

I proceed by considering the comparative statics of the interim deadline.
Both the agent-preferred and the principal-preferred interim deadline,
$S_{0}^{A}$ and $S_{0}^{P}$, respectively, increase in the exogenous
deadline $T$. This is because less time pressure relaxes the principal's
individual rationality constraint and allows the agent to postpone
the deadline further in order to extract the principal's surplus.

As the cost-benefit ratio increases up to $\kappa^{FI}$, the agent-preferred
deadline converges to the principal-preferred deadline. An increase
in the cost-benefit ratio of the project makes the principal's
individual rationality constraint tighter.\footnote{This is because the increase in $\kappa$ makes the principal's instantaneous
benefit from waiting decrease, and the normalized instantaneous cost
of waiting becomes higher.} As a result, for a higher $\kappa$, in the absence of completion
of the first stage, the principal is willing to wait for a shorter
time before stopping. Thus, both the interim deadline preferred by
the principal $S_{0}^{P}$ and the interim deadline chosen by the
agent $S_{0}^{A}$ are lower for a higher $\kappa$. Further, for a
higher $\kappa$ the agent has to choose an information policy relatively
closer to the full-information benchmark to ensure that the individual rationality constraint at $t=0$ is satisfied. Hence, the agent-chosen
interim deadline $S_{0}^{A}$ approaches $S_{0}^{P}$, the interim
deadline preferred by the principal. The comparative statics of $S_{0}^{P}$
and $S_{0}^{A}$ with respect to the cost-benefit ratio of the project
$\kappa$ are presented in Figure \ref{S_int_kappa_fig}.

\begin{figure}[H]
\captionsetup{justification=centering} \centering \includegraphics[width=0.45\textwidth]{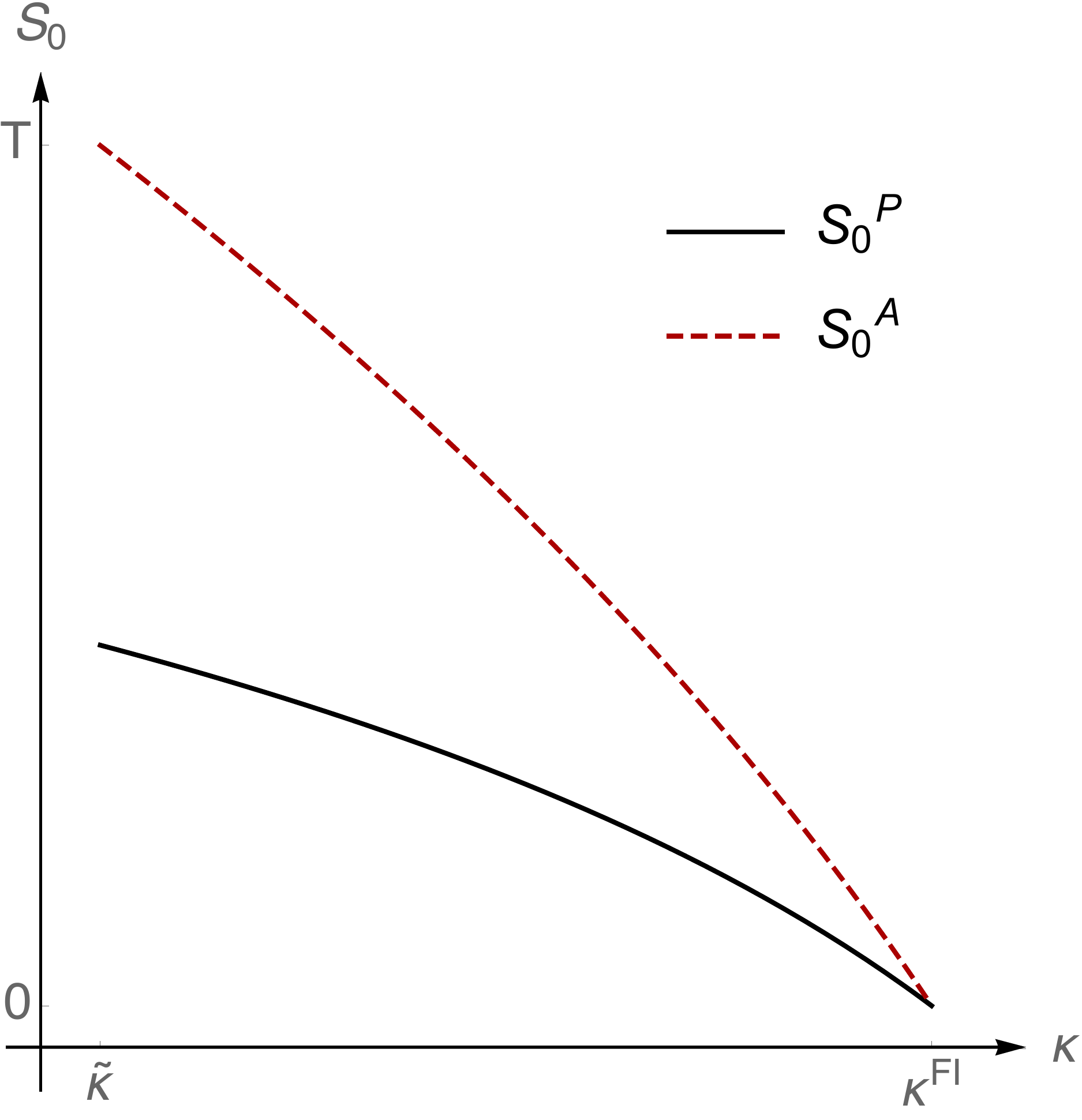}
\caption{Interim deadline chosen by the agent $S_{0}^{A}$ \textbf{(dashed)}
and preferred by the principal $S_{0}^{P}$ \textbf{(thick)}, as
functions of the cost-benefit ratio of the project $\kappa$.}
\label{S_int_kappa_fig} 
\end{figure}

\section{General preferences}\label{sec:prof_share}

In this Section, I allow for profit-sharing between the agent and
the principal, varying degree of the agent's benefit from the flow of funds, and exponential discounting, and demonstrate that the optimal
information policy still has the same properties as in the baseline
model.

First, I assume that the agent and the principal share the project
completion profit $v$: the principal gets $\alpha
\cdot v$, while the agent gets $(1-\alpha)
\cdot v$, $\alpha \in
\left(0,1\right]$. Thus, now the
agent benefits not only from the flow of funds provided by the principal
for running the project but also from the share in the profit. The
assumption that the agent gets a share in the project completion profit
is natural in many situations. In particular, the research documents
that the entrepreneurs in innovative startups are up to some extent
driven by giving vent to their entrepreneurial mindset and bringing
their innovative ideas to life \citep{gundolf2017startups}.
In such a context, a positive profit share of the agent captures that
the agent is motivated by the success of the project.

Second, I assume that given a flow cost of $c$ incurred by the principal,
the agent obtains a flow benefit $\beta c,\beta\geq0$.
$\beta$ can be interpreted as the agent's marginal benefit from using
the funds provided by the principal for funding the project. Alternatively, for $\beta\in\left[0,1\right]$
the loss of $1-\beta$ of the amount of the transfer at each date
can be interpreted as the transaction costs. Finally, setting $\beta=0$ for some $\alpha<1$
allows for abstracting from the agent's motives for diverting the
funds and considering the agent motivated only by the success of the
project.

Third, I allow for exponential discounting at a rate $r>0$. Thus,
the present value of a profit obtained at a date $t$ is given by
$ve^{-rt}$ and the present value of a stream of funding up to date
$t$ is given by $\frac{1}{r}\left(1-e^{-rt}\right)c$. The following
Proposition demonstrates that given the more general preference specification,
the structure of the optimal disclosure, present in the baseline model,
preserves.

\begin{proposition}\label{prop:prof_share1}
\mbox{}\\*
\vspace{-14pt}
\begin{itemize}
    \item[(a)] When the cost-benefit ratio of the project is low, $\kappa\leq\tilde{\kappa}\left(T,\lambda,r,\alpha\right)$, the optimal investment schedule $\tau$ satisfies $\tau\geq\tau_{2}\land T$, i.e., the agent recommends the principal to stop only after the completion of
the second stage of the project. 
\item[(b)] When $\kappa>\tilde{\kappa}\left(T,\lambda,r,\alpha\right)$,
the optimal investment schedule $\tau$ assigns positive probability both to the stopping in state $2$ and state $0$, i.e., the agent not only discloses the completion of
the second stage of the project, but also specifies an interim deadline
for the completion of the first stage.
\end{itemize}
\end{proposition}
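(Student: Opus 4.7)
The plan is to follow the template of Section \ref{stoppingsection}, adapting each step to the generalized utilities. First, I rewrite the payoffs in terms of $P(\tau) \coloneqq \Esp[e^{-r\tau}\mathbf{1}\{x_\tau = 2\}]$ and $E(\tau) \coloneqq \Esp\big[\int_0^\tau e^{-rs}\,ds\big]$:
\[
W(\tau) = (1-\alpha) v\, P(\tau) + \beta c\, E(\tau), \qquad V(\tau) = \alpha v\, P(\tau) - c\, E(\tau).
\]
Eliminating $cE$ from $W$ via $V$ yields the decomposition
\[
W(\tau) = (1 - \alpha + \alpha\beta)\, v\, P(\tau) - \beta\, V(\tau),
\]
the direct generalization of the baseline identity $W = vP - V$ used in (\ref{zero_sum_pyf}). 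Since $1-\alpha+\alpha\beta>0$ in every non-degenerate case, the agent's problem decomposes into maximizing $vP(\tau)$ while pushing $V(\tau)$ down to the principal's reservation value, subject to a dynamic system of IR constraints that again collapses at the outset to $V(\tau) \geq V^{NI}(T,\lambda,r,\alpha)$ for an appropriately generalized no-information payoff.

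Second, I rule out stopping in state~$1$ uniformly. For any schedule with positive probability of $\tau = t$, $x_t = 1$ at some $t < T$, replacing such stopping by continuation until $\tau_2 \wedge T$ strictly increases both $P(\tau)$ and $E(\tau)$, and hence $W(\tau)$; by a direct analogue of the state-$1$ continuation argument in Section~\ref{section_FI} (with utilities rescaled by $\alpha$ and discounted at rate $r$), such continuation also has strictly positive NPV for the principal whenever $\kappa$ lies below the generalized threshold $\kappa^{FI}(T,\lambda,r,\alpha)$, so $V(\tau)$ strictly increases as well. Thus no optimal schedule puts mass on state-$1$ stopping. I then define $\tilde{\kappa}(T,\lambda,r,\alpha)$ as the unique $\kappa$ solving $V(\tau_2 \wedge T) = V^{NI}$; continuity and strict monotonicity of $V(\tau_2 \wedge T)$ in $\kappa$ give existence and uniqueness, and this $\tilde{\kappa}$ exactly partitions the parameter space as in the statement.

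Third, for part~(a), when $\kappa \leq \tilde{\kappa}$ the efficient class $\{\tau : \tau \geq \tau_2 \wedge T\}$ already contains IR-feasible schedules, so the agent can postpone the stopping recommendation after the completion of the second stage (as in Proposition~\ref{delayed_tau2}) until the active IR constraint binds. Any alternative schedule placing positive mass on state-$0$ stopping strictly reduces $vP(\tau)$ by some $\Delta P > 0$, and via the decomposition the induced change in $W$ is $-(1-\alpha+\alpha\beta)v\,\Delta P - \beta\,\Delta V$: whenever the original IR is slack this variation is strictly dominated by the no-state-$0$-stopping schedule achieving the same $V$. For part~(b), when $\kappa > \tilde{\kappa}$ no schedule with $\tau \geq \tau_2 \wedge T$ satisfies IR, so the optimum must place positive mass on stopping outside of $\{\tau_2 \wedge T\}$; state-$1$ stopping having been ruled out, this mass must fall in state~$0$. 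Positive mass on state-$2$ stopping then follows because fixing any interim deadline $S_0^A < T$ leaves probability $\p(x_{S_0^A} > 0) > 0$ that the game continues, on which stopping occurs at $\tau_2 \wedge T$ in state~$2$.

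The main obstacle is the variation argument in part~(a) under positive discounting: with $r > 0$, postponing stopping after $\tau_2$ trades off $P(\tau)$, which falls via additional discounting of the terminal profit, against $E(\tau)$, which rises, so the baseline's ``postponement is pure gain'' intuition no longer holds pointwise. The formal argument instead uses monotonicity of $W$ along the IR-binding frontier parameterized by the postponement date, together with continuity of $V$ and $W$ in the schedule, to show that from any candidate $\tau$ with positive state-$0$ stopping mass, reallocating that mass to post-$\tau_2$ continuation is a feasible direction along which $W$ strictly improves. The technical bulk of the proof will consist of extending the obedience characterization of Lemma~\ref{lemm_obedience} and the optimal-mechanism construction of Proposition~\ref{delayed_tau2} to the discounted, profit-sharing environment, and verifying the convexity properties used in Section~\ref{section_INT} so that the interim deadline in case~(b) remains deterministic.
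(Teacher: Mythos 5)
Your overall architecture matches the paper's: existence of the generalized cutoff via the binding date-$0$ constraint for the schedule $\tau_2\land T$, and part (b) by re-running the arguments of Lemma \ref{lemma_effic_sch2} once schedules with $\tau\geq\tau_2\land T$ become infeasible. Where you diverge is part (a). You generalize the surplus identity (\ref{zero_sum_pyf}) to $W=(1-\alpha+\alpha\beta)vP-\beta V$ and then argue by a variation along the IR-binding frontier; the paper instead works directly with the two components $W=(1-\alpha)vP(\tau)+\beta cE(\tau)$ and makes a one-step domination: comparing any schedule of the form (\ref{mech_with_S}) against $\hat\tau=\tau_2\land T$, the completion probability rises, the conditional discount factor $\Esp[e^{-r\tau}\mid\tau_2\leq\tau]$ is \emph{unchanged} (both schedules stop at $\tau_2$ on completion paths), and the discounted funding stream rises, so both terms of $W$ increase simultaneously and no trade-off between $P$ and $V$ ever has to be resolved. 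This entirely sidesteps the "main obstacle" you identify: the discounting trade-off between $P$ and $E$ matters only for pinning down the optimal amount of postponement \emph{within} the class $\tau\geq\tau_2\land T$, which the proposition does not claim to characterize.

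The soft spot in your part (a) is the step "this variation is strictly dominated by the no-state-$0$-stopping schedule achieving the same $V$." Such a schedule need not exist: the maximal principal payoff attainable within $\{\tau:\tau\geq\tau_2\land T\}$ is $V(\tau_2\land T)$, and a schedule with an interim deadline can deliver the principal strictly more than that, in which case there is no $V$-matching comparison schedule and your decomposition leaves you comparing a gain in $(1-\alpha+\alpha\beta)vP$ against a loss in $-\beta V$ of ambiguous sign. The "monotonicity of $W$ along the IR-binding frontier" that would close this is asserted rather than proved. Replacing this step with the paper's direct componentwise comparison against $\tau_2\land T$ closes the gap without any additional machinery; the rest of your plan (ruling out state-$1$ stopping, the definition and monotonicity of $\tilde{\kappa}$, and the feasibility argument forcing state-$0$ mass in part (b)) is sound and in line with the paper.
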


Similarly to the baseline model, allowing the principal to stop after the project completion brings profit to the principal and thus leads to a relatively higher total surplus, which the agent can extract. Meanwhile, allowing the principal to stop at the interim deadline does not increase total surplus and serves solely as an expected payoff transfer from the agent to the principal. To see that, note that stopping when the first stage of the project is still incomplete allows the principal to save on the further costs of funding the project when over time the project proves to be ``unsuccessful''. This can not be beneficial for the agent as she does not internalize the costs of running the project. Further, stopping at the interim deadline is strictly detrimental for the agent as she strictly prefers the principal to postpone the stopping of funding when no stages of the project are completed.\footnote{The probability of project success and stock of obtained funds are non-decreasing in the date of stopping irrespective of the number of the completed stages of the project.}

When the project is sufficiently ex-ante attractive, the agent can
motivate the principal to start funding the project without promising
to stop the stagnant project at the interim deadline, and this is strictly beneficial for the agent. Thus, when the project is promising, the agent sets no interim deadlines,
which in expectation gives her more funds and more experimentation for free.

\section{Conclusion}

A transparent flow of information is crucial for the successful management of any innovative project. However, the researcher, who controls the information on the progress of the project, often tends to have different motives than the investor. This leads to the question of how a researcher chooses the transparency of the flow of information about the progress of a project in order to manipulate the investor's funding decisions. I address this question in a dynamic information design model in which the agent commits to providing information to the principal with an incentive to postpone the principal's irreversible stopping of the funding.

I contribute to the dynamic information design literature by studying the problem of the dynamic provision of information regarding the progress of a multistage project, which evolves endogenously over time and needs to be completed before a deadline. I show that the agent's choice of which pieces of information to provide and when depends on the project being either ex ante attractive for the principal or not. In the case of a promising project, the agent provides only the good news that the project is completed and postpones the reports. In the case of an unattractive project, to motivate the principal to start funding the project the agent not only reports the completion of the project, but also helps the principal to find out when the project stagnates. To achieve this, the agent announces an interim deadline for the project -- a certain date at which she recommends the principal to cut the funding of the project if the milestone of the project has not been reached.

\addcontentsline{toc}{section}{References} \bibliography{main}
 \bibliographystyle{apalike}

\appendix
\section*{Appendix}
\addcontentsline{toc}{section}{Appendices}
\renewcommand{\thesubsection}{\Alph{subsection}}

\subsection{Notational conventions}

The state process is given by $x_{t},\forall t\in\mathbb{R}_{+}$, defined on the probability
space $\left(\Omega,\mathcal{F},\mathbb{P}\right)$, $t\in\mathbb{R}_{+}$.
Its natural filtration is denoted by $F=(\mathcal{F}_{t})_{t\geq0}$.
Throughout Appendices \ref{app_val_funct} and \ref{app-proofs},
the following notational conventions are used:

1. I denote the random time at which the $n$th stage of the project
is completed by $\tau_{n}$. Formally, $\tau_{n}\in\mathbb{R}_{+}$
is a continuously distributed random variable that represents the
first hitting time of $x_{t}=n$.

2. Consider some stopping time $\tau$. Whenever ``$\tau$'' stands
as a term in an inequality, it stands for a realization of the stopping
time $\tau$ and it should be read as ``for each $\omega\in\Omega$
and corresponding $\tau\left(\omega\right)$''.

\emph{Example 1.} ``$\tau_{2}\land T\geq\tau$'' should be read
as ``$\tau_{2}\left(\omega\right)\land T\geq\tau\left(\omega\right)$,
for all $\omega\in\Omega$''.

\emph{Example 2.} ``for all $t\in[S,\tau)$'' should be read as
``for all $t\in[S,\tau\left(\omega\right))$, for all $\omega\in\Omega$''.

3. The continuation value of the agent at time $t$, given $\tau$,
and conditional on $t<\tau$: 
\[
W_{t}\left(\tau\right)\coloneqq\Esp\left[\tau-t|t<\tau\right]c.
\]

4. The total (continuation) surplus at time $t$, given $\tau$,
and conditional on $t<\tau$:
\[
SV_{t}\left(\tau\right)\coloneqq W_{t}\left(\tau\right)+V_{t}\left(\tau\right).
\]

5. Shorthand for posterior beliefs:

\[
\begin{array}{c}
q_{n}\left(t\right)\coloneqq\p\left(x_{t}=n|t<\tau\right),\\
r_{n}\left(t\right)\coloneqq\p\left(x_{\tau}=n|t<\tau\right).
\end{array}
\]

\subsection{The principal’s continuation value}

\label{app_val_funct}

Here I present the alternative formulation of the principal's continuation
value (\ref{princ_cont_val}). This helps me to study some of its
properties for further use in Appendix \ref{app-proofs}. The continuation
value of the principal at time $t$ and given the investment schedule
$\tau$ is given by (\ref{princ_cont_val}). Postponing the stopping
for $\Delta t$ brings a benefit in the form of project completion
payoff $v$  iff the second stage of the project is completed within
$\Delta t$. As $x_{t}$ follows the Poisson process, the probability
of two increments in a very short time $\Delta t$ is negligibly small.
Thus, during some $\Delta t$, the principal gets the project completion
payoff $v$ iff the first stage of the project has already been completed
at some earlier time. Thus, postponing the stopping for $\Delta t$
brings the principal $v$ with probability $\lambda q_{1}\left(t\right)\Delta t$.
The second stage is not completed within $\Delta t$ with the complementary
probability of $1-\lambda q_{1}\left(t\right)\Delta t$. The principal's
continuation value is thus given by 
\[
\begin{aligned}V_{t}\left(\tau\right) & =\left(v\lambda q_{1}\left(t\right)-c\right)\Delta t+\left(1-\lambda q_{1}\left(t\right)\Delta t\right)V_{t+\Delta t}\left(\tau\right)\\
 & =v\lambda\left(q_{1}\left(t\right)-\kappa\right)\Delta t+\left(1-\lambda q_{1}\left(t\right)\Delta t\right)V_{t+\Delta t}\left(\tau\right).
\end{aligned}
\]
Differentiating both sides w.r.t. $\Delta t$ and considering $\lim_{\Delta t\rightarrow0}\left(.\right)$
yields 
\[
0=v\lambda\left(q_{1}\left(t\right)-\kappa\right)-\lambda q_{1}\left(t\right)V_{t}\left(\tau\right)+\dot{V_{t}}\left(\tau\right),
\]
which, after rearranging becomes 
\begin{equation}
\dot{V_{t}}\left(\tau\right)=\lambda q_{1}\left(t\right)V_{t}\left(\tau\right)+v\lambda\left(\kappa-q_{1}\left(t\right)\right).\label{app_V_dot}
\end{equation}

\subsection{Proofs}

\label{app-proofs}

\begin{proof}[Proof of Lemma \ref{no_feedb}]

The beliefs regarding the number of stages of the project completed
by time $t$, $x_{t}$, evolve according to the Poisson process. The
principal's unconditional beliefs are given by $p_{0}\left(0\right)=1$
and for any $t$ such that the game still continues, 
\begin{eqnarray}
\dot{p}_{0}\left(t\right) & = & -\lambda p_{0}\left(t\right),\nonumber \\
\dot{p}_{1}\left(t\right) & = & \lambda(p_{0}\left(t\right)-p_{1}\left(t\right)),\label{prior_system}\\
\dot{p}_{2}\left(t\right) & = & \lambda p_{1}\left(t\right),\nonumber 
\end{eqnarray}
where $p_{0}\left(t\right)=e^{-\lambda t}$ and $p_{1}\left(t\right)=\lambda t e^{-\lambda t}$,
$p_{2}\left(t\right)=1-p_{0}\left(t\right)-p_{1}\left(t\right)$.
The principal's problem is given by 
\begin{equation}
\max_{S\in\left[0,T\right]}\left\{v\cdot p_{2}\left(S\right)-c\cdot S\right\}.\label{no_inf_OP_app}
\end{equation}

I start with analyzing the choice of $S$ for the interior solution
case, $S\in\left(0,T\right)$. F.O.C. for (\ref{no_inf_OP_app}) is
given by 
\begin{equation}
v\cdot\dot{p}_{2}\left(S\right)=c,\label{no_inf_FOC_app}
\end{equation}
or, equivalently, $p_{1}\left(S\right)=\kappa$. There are two values
satisfying (\ref{no_inf_FOC_app}): $\bar{S}$ and $\bar{S}^{NI}$, $\bar{S}<\bar{S}^{NI}$.
At each $t\in\left(\bar{S},\bar{S}^{NI}\right)$ the principal receives a net
positive payoff flow. Thus, stopping at $\bar{S}$ is not optimal and
the only candidate for optimal stopping is $\bar{S}^{NI}$.\footnote{$\bar{S}$ is a local minimum of the objective.}
Further, one can obtain the closed form expression for the interior
stopping time $\bar{S}^{NI}$ from (\ref{no_inf_FOC_app}): 
\begin{equation}
\bar{S}^{NI}=-\frac{1}{\lambda}\mathcal{W}_{-1}\left(-\kappa\right),\label{T_tilde_tilde}
\end{equation}
where $\mathcal{W}_{-1}(x)$ denotes the negative branch of the Lambert
$W$ function. $\bar{S}^{NI}$ is well-defined for any $\kappa<e^{-1}$.

Thus, the solution to (\ref{no_inf_OP_app}) could potentially be
$0$, $\bar{S}^{NI}$, or $T$. I proceed with a useful lemma.

\begin{lemma}\label{NI_obed} The following is true regarding the
principal's continuation value in the no-information benchmark, $\overline{V}_{t}^{NI}$:
if $\overline{V}_{t}^{NI}\geq0,\text{ for some }t\in\left[0,\bar{S}^{NI}\land T\right]$,
then $V^{NI}\left(s\right)\geq0,\text{ for all }s\in\left[t,\bar{S}^{NI}\land T\right]$.
\end{lemma}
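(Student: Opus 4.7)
The plan is to work directly with the deterministic object
$\bar{V}_t^{NI} = \int_t^{\bar{S}^{NI}\wedge T}\!\left(v\lambda p_1(s)-c\right)ds$,
which is the principal's no-information continuation value along the candidate optimal path ``keep investing until $\bar{S}^{NI}\wedge T$.'' Differentiating under the integral sign gives $\dot{\bar{V}}_t^{NI}=-\bigl(v\lambda p_1(t)-c\bigr)$, so the monotonicity of $\bar{V}_t^{NI}$ in $t$ is controlled by the sign of the instantaneous net flow $v\lambda p_1(t)-c$.

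I would then invoke the sign pattern of $v\lambda p_1(t)-c$ already established within the proof of Lemma~\ref{no_feedb}: since $p_1(t)=\lambda t e^{-\lambda t}$ is single-peaked, the equation $p_1(t)=\kappa$ has exactly the two roots $\bar{S}<\bar{S}^{NI}$, with $v\lambda p_1(t)<c$ on $[0,\bar{S})$ and $v\lambda p_1(t)\ge c$ on $[\bar{S},\bar{S}^{NI}]$. Hence $\bar{V}_t^{NI}$ is strictly increasing on $[0,\bar{S}\wedge T]$ and weakly decreasing on $[\bar{S}\wedge T,\bar{S}^{NI}\wedge T]$, with terminal value $\bar{V}_{\bar{S}^{NI}\wedge T}^{NI}=0$.

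The argument then reduces to a two-case split on where $t$ lies in $[0,\bar{S}^{NI}\wedge T]$. If $t\ge \bar{S}\wedge T$, then on $[t,\bar{S}^{NI}\wedge T]$ the function $\bar{V}_{\cdot}^{NI}$ is weakly decreasing down to zero, so $\bar{V}_s^{NI}\ge0$ automatically for every $s\ge t$ in the interval --- the hypothesis is not even needed. If $t<\bar{S}\wedge T$, then on $[t,\bar{S}\wedge T]$ the function is increasing, so $\bar{V}_s^{NI}\ge \bar{V}_t^{NI}\ge0$ there; on the remaining subinterval $[\bar{S}\wedge T,\bar{S}^{NI}\wedge T]$ the previous case applies. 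Splicing the two observations yields the claim for every $s\in[t,\bar{S}^{NI}\wedge T]$.

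The only delicate point --- the bookkeeping obstacle I would flag carefully --- is the degenerate geometry when $T<\bar{S}$. Here the ``decreasing'' interval $[\bar{S}\wedge T,\bar{S}^{NI}\wedge T]$ collapses to the singleton $\{T\}$ and $\bar{V}_t^{NI}$ is strictly increasing up to $\bar{V}_T^{NI}=0$, forcing $\bar{V}_t^{NI}<0$ for all $t<T$; the hypothesis then holds only at $t=T$, where the conclusion is trivial. Recording this corner case explicitly ensures that both the interior-stopping regime ($T\ge \bar{S}^{NI}$) and the boundary-stopping regime ($T<\bar{S}^{NI}$) are handled uniformly.
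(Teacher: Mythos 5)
Your proof is correct and follows essentially the same route as the paper's: both differentiate the continuation value to get $\dot{\overline{V}}_t^{NI}=v\lambda\left(\kappa-p_1(t)\right)$, use the single-peakedness of $p_1$ to conclude that $\overline{V}_t^{NI}$ increases on $\left[0,\bar{S}\right]$ and decreases on $\left[\bar{S},\bar{S}^{NI}\wedge T\right]$ with terminal value zero, and read off the claim from that shape. Your explicit handling of the corner case $T<\bar{S}$ is a small refinement the paper leaves implicit, but it does not change the argument.
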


\begin{proof} The principal's continuation value in the no-information
benchmark is given by 
\begin{equation}
\overline{V}_{t}^{NI}=\left[p_{2}\left(T\land \bar{S}^{NI}\right)-p_{2}\left(t\right)\right]v-\left(T\land \bar{S}^{NI}-t\right)c.\label{V_NI_integr}
\end{equation}
Further, 
\[
\dot{V}^{NI}\left(t\right)=v\lambda\left(\kappa-e^{-\lambda t}\lambda t\right)=v\lambda\left(\kappa-p_{1}\left(t\right)\right).
\]
$p_{1}\left(t\right)\leq\kappa$ for all $t\in\left[0,\bar{S}\right]$
and $p_{1}\left(t\right)\geq\kappa$ for all $t\in\left[\bar{S},\bar{S}^{NI}\land T\right]$.
Thus, $\overline{V}_{t}^{NI}$ increases for $t\in\left[0,\bar{S}\right]$,
decreases for $t\in\left[\bar{S},T\land \bar{S}^{NI}\right]$, and $V^{NI}\left(T\land \bar{S}^{NI}\right)=0$,
which establishes the result. \end{proof}

Lemma \ref{NI_obed} implies that if $V^{NI}\left(0\right)\geq0$
and the principal chooses to opt in at $t=0$, then $\overline{V}_{t}^{NI}\geq0$, 
$t\in\left[0,\bar{S}^{NI}\land T\right]$, i.e., he invests until $t=T\land \bar{S}^{NI}$.
This implies that the solution to (\ref{no_inf_OP_app}) is either
$T\land \bar{S}^{NI}$ or $0$.

Finally, at $t=0$ the principal chooses to start investing or not.
The condition for the principal to start investing at $t=0$ is given
by 
\begin{equation}
V^{NI}\geq0.\label{no_inf_partic_app}
\end{equation}
To specify the set of parameters for which (\ref{no_inf_partic_app})
is satisfied, I obtain the cutoff value of $\kappa$ given $T$ and
$\lambda$. Such a parameterization is intuitive: $\kappa$ above
the cutoff level corresponds to a project with sufficiently high normalized
cost-benefit ratio and implies that the principal does not opt in.
I denote this cutoff by $\kappa^{NI}\left(T,\lambda\right)$. This
solves (\ref{no_inf_partic_app}) holding with equality. Two cases
are possible.

\emph{Case 1}: $T\leq \bar{S}^{NI}\iff T\leq-\frac{1}{\lambda}\mathcal{W}_{-1}\left(-\kappa\right)$.
This inequality is satisfied when either $\frac{1}{\lambda}> T$
or $\begin{cases}
\frac{1}{\lambda} & \leq T\\
\kappa & \leq e^{-\lambda T}\lambda T.
\end{cases}$ Given $T\leq \bar{S}^{NI}$, (\ref{no_inf_partic_app}) holding with equality
becomes 
\[
p_{2}\left(T\right)v-Tc=0.
\]
Solving it for $\kappa$ yields $\kappa=e^{-\lambda T}\left(\frac{e^{\lambda T}-1}{\lambda T}-1\right)$.

\emph{Case 2:} $T>\bar{S}^{NI}$. This inequality is satisfied when $\frac{1}{\lambda}\leq T\text{ and }\kappa> e^{-\lambda T}\lambda T.$
Given $T>\bar{S}^{NI}$, (\ref{no_inf_partic_app}) holding with equality
becomes

\[
vp_{2}\left(\bar{S}^{NI}\right)-c\bar{S}^{NI}=0\iff v\left(1-p_{0}\left(\bar{S}^{NI}\right)-p_{1}\left(\bar{S}^{NI}\right)\right)=c\bar{S}^{NI},
\]
where (recall that $\dot{p}_{2}\left(\bar{S}^{NI}\right)=\frac{c}{v}$) 
\[
p_{0}\left(\bar{S}^{NI}\right)=\frac{1}{\lambda^{2}\bar{S}^{NI}}\dot{p}_{2}\left(\bar{S}^{NI}\right)=\frac{c}{\lambda^{2}\bar{S}^{NI}v}=\frac{\kappa}{\lambda \bar{S}^{NI}}
\]
and 
\[
p_{1}\left(\bar{S}^{NI}\right)=\frac{1}{\lambda}\dot{p}_{2}\left(\bar{S}^{NI}\right)=\frac{c}{\lambda v}=\kappa.
\]
Consequently, 
\[
vp_{2}\left(\bar{S}^{NI}\right)-c\bar{S}^{NI}=v-v\cdot\kappa\left(1+\lambda \bar{S}^{NI}+\frac{1}{\lambda \bar{S}^{NI}}\right).
\]
Let $y\coloneqq\lambda \bar{S}^{NI}$. Note that, by definition, $y>1$.
Then $\kappa=ye^{-y}$, and so 
\[
\left(vp_{2}\left(\bar{S}^{NI}\right)-c\bar{S}^{NI}\right)/v=1-e^{-y}\left(1+y+y^{2}\right).
\]
It follows that $V^{NI}\left(0\right)$ is nonnegative whenever $\lambda \bar{S}^{NI}\geq y_{0}\doteq1.79328$,
which is equivalent to 
\[
\kappa\leq\kappa_{0}\doteq0.298426.
\]

Finally, putting the two cases together yields

\begin{equation}
\kappa^{NI}\left(T,\lambda\right)=\begin{cases}
\kappa_{0}\doteq0.298426, & \text{if }\frac{1}{\lambda}\leq T\text{ and }\kappa\geq e^{-\lambda T}\lambda T\\
e^{-\lambda T}\left(\frac{e^{\lambda T}-1}{\lambda T}-1\right), & \text{otherwise.}
\end{cases}\label{kappa_NI}
\end{equation}

\end{proof}

\begin{proof}[Proof of Lemma \ref{lemma_FI}]

The principal chooses $a_{t}\in\left\{ 0,1\right\} $ sequentially
given the observed realizations of $x_{t}\in\left\{ 0,1,2\right\} $.
Whenever the principal observes $t=\tau_{2}$, he immediately chooses
$a_{t}=0$ and gets $v$.

Consider the case $x_{t}=1,t<T$, i.e., the first stage of the project
has already been completed. As $x_{t}$ follows a Poisson process,
in expectation it would take $\frac{1}{\lambda}$ units of time for
the second stage to be completed and its completion brings the principal
the value of $v$. Thus, the necessary and sufficient condition for
the principal to invest at $t$ when $x_{t}=1,t<T$ is given by 
\[
v-c\cdot\frac{1}{\lambda}\geq0\iff\kappa\leq1
\]
Assume that $\kappa\leq1$ holds and $x_{t}=1$; thus, the principal
chooses to invest at $t$. In that case, the principal invests until
$\tau_{2}\land T$. To see this, recall that the only news that the
principal can receive given $x_{t}=1,t<T$ is the completion of the
second stage of the project, $\tau_{2}$, which leads to immediate
stopping. At each $t<\tau_{2}\land T$ such that $x_{t}=1$, choosing
$a_{t}=0$ yields an instantaneous expected payoff of $0$, while
choosing $a_{t}=1$ yields an instantaneous expected payoff of $\lambda v\Delta t-c\Delta t$.
Thus, $\kappa\leq1$ suffices for the principal to invest until $\tau_{2}\land T$.

Consider now the case of $x_{t}=0,t<T$, i.e., no stages of the project
have yet been completed. Postponing the stopping for $\Delta t$ brings
the instantaneous expected payoff of $V_{t|1}^{FI}\lambda\Delta t-c\Delta t$,
where $V_{t|1}^{FI}$ is the principal's continuation
value at time $t$ under full information, conditional on the completion
of the first stage of the project. I proceed with obtaining the expression
for $V_{t|1}^{FI}$. By definition, the principal gets
$v$ whenever the second stage is completed not later than $T$. The
principal invests until $\tau_{2}\land T$, and knows that at $t$
the first stage of the project is already completed; thus, $V_{t|1}^{FI}$
is given by

\[
V_{t|1}^{FI}=v\p\left(\tau_{2}\leq T|x_{t}=1\right)-c\Esp\left[\tau_{2}\wedge T-t|x_{t}=1\right].
\]
$\tau_{2}|x_{t}=1$ corresponds to the time between two consecutive
Poisson arrivals, and thus has exponential distribution. First, consider
$\p\left(\tau_{2}\leq T|x_{t}=1\right)$: 
\[
\p\left(\tau_{2}\leq T|x_{t}=1\right)=1-e^{-\lambda\left(T-t\right)}.
\]
Next, consider $\Esp\left[\tau_{2}\wedge T-t|x_{t}=1\right]$: 
\begin{equation}
\begin{aligned} & \Esp\left[\tau_{2}\land T|x_{t}=1\right]-t\\
= & \p\left(\tau_{2}\leq T|x_{t}=1\right)\int_{t}^{T}z\cdot\frac{\lambda e^{-\lambda\left(z-t\right)}}{\p\left(\tau_{2}\leq T|x_{t}=1\right)}dz+\p\left(\tau_{2}>T|x_{t}=1\right)T-t\\
= & \frac{1}{\lambda}\left(1-e^{-\lambda\left(T-t\right)}\right)+t-e^{-\lambda\left(T-t\right)}T+\p\left(\tau_{2}>T|x_{t}=1\right)T-t\\
= & \frac{1}{\lambda}\left(1-e^{-\lambda\left(T-t\right)}\right).\label{cond_{e}xp_{t}aux1}
\end{aligned}
\end{equation}
Thus,

\begin{equation}
\begin{aligned}V_{t|1}^{FI} & =v\left(1-e^{-\lambda\left(T-t\right)}\right)-c\frac{1}{\lambda}\left(1-e^{-\lambda\left(T-t\right)}\right)\\
 & =\left(v-\frac{c}{\lambda}\right)\left(1-e^{-\lambda\left(T-t\right)}\right).
\end{aligned}
\label{V_1_FI_app}
\end{equation}
From (\ref{V_1_FI_app}) one observes that $V_{t|1}^{FI}$
decreases in $t$. If the net instantaneous benefit given by $V_{t|1}^{FI}\lambda\Delta t-c\Delta t$
gets as low as $0$ at some $t$, then the principal chooses to stop
investing at this $t$. I denote the time at which the net instantaneous
benefit reaches $0$ by $S_{0}^{P}$. $S_{0}^{P}$ can be obtained
from $\left(\lambda V_{1}^{FI}\left(S_{0}^{P}\right)-c\right)\Delta t=0$.
Thus, 
\begin{equation}
S_{0}^{P}=T+\frac{1}{\lambda}\log\left(\frac{1-2\kappa}{1-\kappa}\right).\label{T_0_FI}
\end{equation}

The principal is willing to start investing iff at $t=0$ the expected
payoff from investing at $t=0$ covers the costs of investing, i.e.
$\left(\lambda V_{1}^{FI}\left(0\right)-c\right)\Delta t\geq0$. From
(\ref{T_0_FI}), this corresponds to $S_{0}^{P}\geq0$. I denote
the upper bound on the cost-benefit ratio $\kappa$ such that the
principal chooses to start investing in $t=0$ under full information
by $\kappa^{FI}\left(T,\lambda\right)$, I solve $S_{0}^{P}=0$ for
$\kappa$ and obtain 
\begin{equation}
\kappa^{FI}\left(T,\lambda\right)=\frac{1-e^{-\lambda T}}{2-e^{-\lambda T}}.\label{kappa_FI}
\end{equation}

In summary, under full information, if $\kappa\leq\kappa^{FI}\left(T,\lambda\right)$,
then the principal starts investing at $t=0$. Further, he stops at
$S_{0}^{P}$ if the first stage of the project has not been completed
by that time. Otherwise, he proceeds to invest until $\tau_{2}\land T$.

\end{proof}

\begin{proof}[Proof of Proposition \ref{prop:opt_discl}] I provide the proof for each of the four parametric cases below.

\emph{1. The case of $\kappa\leq\kappa^{ND}\left(T,\lambda\right)$}. 

$\kappa^{ND}\left(T,\lambda\right)$ is defined as follows: for any $\kappa\leq\kappa^{ND}\left(T,\lambda\right)$,
the principal invests until $T$ in the no-information benchmark.
From Lemma \ref{no_feedb}, if the principal is willing to start
investing, i.e., $\kappa\leq\kappa^{NI}\left(T,\lambda\right)$, then
\[
S^{NI}=\overline{S}^{NI}\land T.
\]
For the sake of instruction, below I consider relaxing the Assumption \ref{assumpt1}
and demonstrate how the relation between $\kappa^{ND}\left(T,\lambda\right)$
and $\kappa^{NI}\left(T,\lambda\right)$ changes between Case a (relaxed
Assumption \ref{assumpt1}) and Case b (Assumption \ref{assumpt1} holds). 

Case a. $e^{\lambda T}\leq\lambda T\left(\lambda T+1\right)+1$. In
this case, whenever the principal is willing to start investing in
the no-information benchmark, she invests until $T$, i.e., $\kappa^{ND}\left(T,\lambda\right)=\kappa^{NI}\left(T,\lambda\right)$,
where $\kappa^{NI}\left(T,\lambda\right)$ is given
by (\ref{kappa_NI}).  To see that, first, consider the extreme sub-case in which $T<\frac{1}{\lambda}$.
As $-\lambda\overline{S}^{NI}$ must belong to $-1$ axis of Lambert
$W$ function, it has a lower bound corresponding to $\frac{1}{\lambda}$.
Thus, $T<\overline{S}^{NI}$ for any $\kappa\left(T,\lambda\right)$.
Second, consider $\lambda T\in\left[1,\tilde{\lambda T}\right]$, where $\tilde{\lambda T}$ solves $e^{\lambda T}=\lambda T\left(\lambda T+1\right)+1$. 
In this case, from (\ref{T_tilde_tilde}), if $\kappa\left(T,\lambda\right)\leq e^{-\lambda T}\lambda T$
($\kappa\left(T,\lambda\right)\geq e^{-\lambda T}\lambda T$, respectively), then
$T\leq\overline{S}^{NI}$ ($T\geq\overline{S}^{NI}$, respectively).
However, $\kappa^{NI}\left(T,\lambda\right)\leq e^{-\lambda T}\lambda T$.
Thus, $\kappa^{ND}\left(T,\lambda\right)=\kappa^{NI}\left(T,\lambda\right)$.

Case b. $e^{\lambda T}>\lambda T\left(\lambda T+1\right)+1.$ As before,
it holds that if $\kappa\left(T,\lambda\right)\leq e^{-\lambda T}\lambda T$
($\kappa\left(T,\lambda\right)\geq e^{-\lambda T}\lambda T$), then
$T\leq\overline{S}^{NI}$ ($T\geq\overline{S}^{NI}$, respectively).
Denote
\[
\kappa^{ND}\left(T,\lambda\right)\coloneqq e^{-\lambda T}\lambda T.
\]

As $\kappa^{NI}\left(T,\lambda\right)>\kappa^{ND}\left(T,\lambda\right)$,
two cases emerge. If $0<\kappa\leq\kappa^{ND}\left(T,\lambda\right)$,
then $T\leq\overline{S}^{NI}$, and from $\kappa\leq\kappa^{NI}\left(T,\lambda\right)$,
it holds that $S^{NI}=T$ and as the agent does not strictly benefit
from disclosing any information, she chooses non-disclosure. If $\kappa>\kappa^{ND}\left(T,\lambda\right)$,
then $T>\overline{S}^{NI}$ and the agent can potentially benefit
from information disclosure.

\emph{2. The case of $\kappa^{ND}\left(T,\lambda\right)<\kappa\leq\tilde{\kappa}\left(T,\lambda\right)$}.

The result is established in Proposition \ref{delayed_tau2}.

\emph{3. The case of $\tilde{\kappa}\left(T,\lambda\right)<\kappa<\kappa^{FI}\left(T,\lambda\right)$}.

The result is established in Proposition \ref{case3}.

\emph{4. The case of $\kappa\geq\kappa^{FI}\left(T,\lambda\right)$}. 

The principal's long-run payoff in the full-information benchmark non-positive. Thus, the agent can not strictly benefit from information disclosure and chooses non-disclosure.

\end{proof}

\begin{proof}[Proof of Lemma \ref{lemm_obedience}]
\emph{Necessity}. Assume $V_{t}(\tau)<0$ for some $t$. In that case, it is optimal for the principal to deviate to stopping at $t<\tau$. Thus, there is no information policy $\sigma$, for which this $\tau$ is the principal's best reply. Assume $V_{\tau}^{NI}\geq0$. Thus, the principal deviates to stopping at $t>\tau$, and there is no $\sigma$, for which this $\tau$ is the best reply.

\emph{Sufficiency}. Assume (\ref{obed_constr}) holds. $V_{t}\left(\tau\right)\geq0$ for all $t<\tau$ implies that the principal
prefers to continue rather than to stop the funding for all $t<\tau$. Thus, it can
not be that case that the principal stops before $\tau$. Further, $V_{\tau}^{NI}<0$ implies that, conditional on reaching the date of stopping $\tau$, it is better for the principal to stop immediately rather than to stop at $t>\tau$. 

Consider   a direct
recommendation mechanism $\sigma$ with $M=\left\{ 0,1\right\}$ such that whenever, based on the evolution of the state process, the considered investment schedule $\tau$ suggests stopping the funding, the direct recommendation mechanism sends the message $m=0$ to the principal. As it is optimal for the principal to stop at $\tau$, $\tau$ is the principal's best reply to $\sigma$. 
\end{proof}

\begin{proof}[Proof of Lemma \ref{kappa_tilde}] Consider the recommendation
mechanism immediately disclosing the completion of the second stage
of the project; it is given by $\tau=\tau_{2}\wedge T$. There exists
such $\tilde{\kappa}\left(T,\lambda\right)$ that solves the principal's
binding $t=0$ individual rationality constraint when $\tau=\tau_{2}\wedge T$:

\begin{equation}
V\left(\tau_{2}\right)=0,\label{kappa_cT}
\end{equation}
where 
\begin{equation}
\begin{aligned}V\left(\tau_{2}\right) & =p_{2}\left(T\right)v-\Esp\left[\tau_{2}\wedge T\right]c\\
 & =v\left(1-e^{-\lambda T}-\lambda Te^{-\lambda T}\right)-c\frac{1}{\lambda}\left(2-2e^{-\lambda T}-\lambda Te^{-\lambda T}\right).\label{V_tau2}
\end{aligned}
\end{equation}
The solution to equation (\ref{kappa_cT}) is given by 
\begin{equation}
\tilde{\kappa}\left(T,\lambda\right)=\frac{1-e^{\lambda T}+\lambda T}{2-2e^{\lambda T}+\lambda T}.
\end{equation}
Further, $\kappa>\tilde{\kappa}\left(T,\lambda\right)\Rightarrow V\left(\tau_{2}\right)<0$
and $\kappa\le\tilde{\kappa}\left(T,\lambda\right)\Rightarrow V\left(\tau_{2}\right)\geq0$.
\end{proof}

\begin{proof}[Proof of Lemma \ref{lemma_opt_sch1}]

\emph{Consider the case of $\kappa\in(\kappa^{ND}\left(T,\lambda\right),\kappa^{NI}\left(T,\lambda\right)]$}. The agent's relaxed problem for this case has the individual rationality constraints only for $t \in [0,\bar{S}^{NI}]$, and it  is given by 

\begin{equation}
\begin{aligned} & \max_{\tau\in\mathcal{T}}\left\{ c\cdot\Esp\left[\tau\right]\right\} \\
 & \text{s.t.}\,V_{t}\left(\tau\right)\geq0,\forall t\in\left[0,\bar{S}^{NI}\right],
\end{aligned}
\label{eq:relaxed_OP(lowest_k)}
\end{equation}
where $V_{t}(\tau)$ is given by (\ref{princ_cont_val}) and $\mathcal{T}$ is the set of stopping times with respect to the natural filtration of $x_{t}$.

Consider the candidate investment schedule $\tau$ such that $\tau\geq\bar{S}^{NI}\lor\left(\tau_{2}\land T\right)$
and $V\left(\tau\right)=V^{NI}$, where $V^{NI}$ is given by (\ref{NI_integral}).
I start with arguing that the candidate $\tau$ satisfies the system of individual rationality constraints. From Lemma \ref{no_feedb}, given candidate $\tau$,
the principal invests until $\bar{S}^{NI}$ with certainty and
the constraints in (\ref{eq:relaxed_OP(lowest_k)}) are satisfied for all $t\in[0,\bar{S}^{NI})$.
Further, $\tau$ implies that $V_{\bar{S}^{NI}}(\tau)=0$, i.e.,
the individual rationality constraint at $t=\bar{S}^{NI}$ is
binding. 

I proceed with arguing that the candidate $\tau$ maximizes the agent's objective function in (\ref{eq:relaxed_OP(lowest_k)}). The agent's objective can be WLOG written out as: 
\begin{equation}
\begin{array}{c}
W\left(\tau\right)=\underbrace{\p\left(x_{\tau}=2\right)v}_{\text{total surplus}}\,\,\,\,\,-\underbrace{V(\tau).}_{\text{principal's surplus}}\end{array}
\label{eq:total_surpl}
\end{equation}
By Lemma \ref{kappa_tilde}, an investment schedule $\tau$ that assigns
probability one to $\tau\geq\tau_{2}\land T$ satisfies the individual rationality constraint at $t=0$ in (\ref{eq:relaxed_OP(lowest_k)}). 
Note that, given $\tau\geq\tau_{2}\land T$, the total surplus in (\ref{eq:total_surpl}) is given
by $\p\left(x_{T}=2\right)v$, i.e., total surplus achieves its upper bound determined
by the exogenously given project deadline $T$. 
The principal's surplus in (\ref{eq:total_surpl}) is given by $V\left(\tau\right)=V^{NI}$, i.e., principal's surplus achieves its lower bound specified by (\ref{NI_integral}). This can be seen from the principal's decision problem, in which he best replies to an information policy $\sigma$. As $\sigma$ allows the principal to condition his actions on the information regarding the evolution of the state process, the principal's equilibrium payoff can not be lower than $V^{NI}$, his equilibrium payoff  when he is restricted to choosing actions without conditioning them on the information about the state process.  Thus, $\tau$ solves the relaxed problem (\ref{eq:relaxed_OP(lowest_k)}).

\emph{Consider the case of} $\kappa\in(\kappa^{NI}\left(T,\lambda\right),\tilde{\kappa}\left(T,\lambda\right)]$.
The agent's relaxed problem for this case has the individual rationality constraint only for the initial period, and it is given by 
\begin{equation}
\begin{aligned} & \max_{\tau\in\mathcal{T}}\left\{ c\cdot\Esp\left[\tau\right]\right\} \\
 & \text{s.t.}\,V(\tau) \geq 0,
\end{aligned}
\label{eq:relaxed_OP(low_k)}
\end{equation}
where $V(\tau)=\p\left(x_{\tau}=2\right)v-\Esp\left[\tau\right]c$.

Consider candidate investment schedule $\tau$ such that $\tau\geq\tau_{2}\land T$
and $V\left(\tau\right)=V^{NI}$.   For such $\tau$, agent's expected payoff (\ref{eq:total_surpl}) is given by $\p\left(x_{T}=2\right)v-V^{NI}$. As discussed for the parametric case $\kappa\in(\kappa^{ND}\left(T,\lambda\right),\kappa^{NI}\left(T,\lambda\right)]$, the first term is at its upper bound. To see that the second term is
at its lower bound, note that, from Lemma \ref{no_feedb}, $V^{NI}=0$, and thus the individual rationality constraint in (\ref{eq:relaxed_OP(low_k)})
is binding. Hence, $\tau$  solves the relaxed problem (\ref{eq:relaxed_OP(low_k)}).

\end{proof}

\begin{proof}[Proof of Proposition \ref{delayed_tau2}] The proof covers the case $\kappa\in(\kappa^{ND}\left(T,\lambda\right),\kappa^{NI}\left(T,\lambda\right)]$ and the case  $\kappa\in(\kappa^{NI}\left(T,\lambda\right),\tilde{\kappa}\left(T,\lambda\right)]$ separately.

\emph{1. The case of $\kappa\in(\kappa^{ND}\left(T,\lambda\right),\kappa^{NI}\left(T,\lambda\right)]$}.

I start with proving the existence of $S^{*}$ such that $V\left(\tau\right)=V^{NI}$. Assume that $S^{*}>\bar{S}^{NI}$. For all $t\in[\bar{S}^{NI},S^{*})$,
stopping never occurs, at $t=S^{*}$ it occurs if $x_{S^{*}}=2$,
and for all $t\in(S^{*},\tau)$ it occurs at $t=\tau_{2}\land T$. For
$t\in[S^{*},\tau)$, the absence of stopping induces posteriors $q_{n}\left(t\right)$.
Further, for $t\in[S^{*},\tau)$ the principal discounts future benefits
from postponing stopping using the probability of the state being
$2$. Hence, the continuation value at $t=\bar{S}^{NI}$ can be written as
\begin{equation}
V_{\bar{S}^{NI}}\left(\tau\right)=v\lambda\left(\int_{\bar{S}^{NI}}^{S^{*}}p_{1}\left(z\right)-\kappa dz+\int_{S^{*}}^{T}\left(q_{1}\left(z\right)-\kappa\right)\left(1-\p\left(x_{z}=2\right)\right)dz\right).\label{ex_S^*_V}
\end{equation}

The principal's long-run payoff is given by
\[
V\left(\tau\right)=\int_{0}^{\bar{S}^{NI}}\left(v\cdot p_{1}\left(s\right)\lambda-c\right)ds+V_{\bar{S}^{NI}}\left(\tau\right),
\]
where $\int_{0}^{\bar{S}^{NI}}\left(v\cdot p_{1}\left(s\right)\lambda-c\right)ds=V^{NI}$.  
Thus, to ensure that $S^{*}$ makes the individual rationality constraint bind at $t=\bar{S}^{NI}$, i.e., $V\left(\tau\right)=V^{NI}$, it is necessary and sufficient that  $V_{\bar{S}^{NI}}\left(\tau\right)=0$.
Using (\ref{ex_S^*_V}), this equation can be written as 
\[
\int_{\bar{S}^{NI}}^{S^{*}}\kappa-p_{1}\left(z\right)dz=\int_{S^{*}}^{T}\left(q_{1}\left(z\right)-\kappa\right)\left(1-\p\left(x_{z}=2\right)\right)dz.
\]
Let $g\left(S\right)\coloneqq\int_{\bar{S}^{NI}}^{S}\kappa-p_{1}\left(z\right)dz$
and $k\left(S\right)\coloneqq\int_{S}^{T}\left(q_{1}\left(z\right)-\kappa\right)\left(1-\p\left(x_{z}=2\right)\right)dz$,
$S\in[\bar{S}^{NI},\tau)$. $q_{1}\left(t\right)\geq\kappa$, for
all $t\in\left[S^{*},T\right)$. Thus, $g\left(\bar{S}^{NI}\right)=0,k\left(\bar{S}^{NI}\right)>0$.
Further, $p_{1}\left(t\right)<\kappa$, for all $t\in(\bar{S}^{NI},T]$.
Hence, $g\left(T\right)>0,k\left(T\right)=0$. Finally, $p_{1}\left(t\right)\leq\kappa$,
for all $t\in\left[\bar{S}^{NI},T\right]$ implies that $g^{\prime}\left(S\right)\geq0$,
for all $s\in\left[\bar{S}^{NI},T\right]$, and $q_{1}\left(t\right)\geq\kappa$,
for all $t\in\left[S^{*},T\right]$ implies that $k^{\prime}\left(S\right)\leq0$,
for all $s\in\left[S^{*},T\right]$. Thus, by the intermediate value
theorem, there exists $S^{*}$ solving $V_{\bar{S}^{NI}}\left(\tau\right)=0$. Thus, there exists $S^{*}>\bar{S}^{NI}$ such that principal's individual rationality constraint is binding at $t=\bar{S}^{NI}$.

I proceed with proving that the investment schedule $\tau$ satisfies the conditions in Lemma \ref{lemm_obedience} and thus it is obedient.  

\emph{First, consider} $t\leq \bar{S}^{NI}$. The principal's continuation value
for all $t\in[0,\bar{S}^{NI}]$ can be written as 
\[
V_{t}\left(\tau\right)=\int_{t}^{\bar{S}^{NI}}v\lambda\left(p_{1}\left(s\right)-\kappa\right)ds+V_{\bar{S}^{NI}}\left(\tau\right).
\]
Given the binding individual rationality constraint, it becomes 
\[
V_{t}\left(\tau\right)=\int_{t}^{\bar{S}^{NI}}v\lambda\left(p_{1}\left(s\right)-\kappa\right)ds,\text{ for all }t\in[0,\bar{S}^{NI}).
\]

Finally, note that $V_{t}\left(\tau\right)$ above is equivalent to
$V_{t}^{NI}$ given by (\ref{V_NI_integr}). Lemma \ref{no_feedb}
implies that given $\kappa\in(\kappa^{ND}\left(T,\lambda\right),\kappa^{NI}\left(T,\lambda\right)]$,
$V^{NI}\left(0\right)=V\left(\tau\right)\geq0$. Further, Lemma \ref{NI_obed}
implies that $V\left(\tau\right)\geq0\Rightarrow V_{t}\left(\tau\right)\geq0,\forall t\in[0,\bar{S}^{NI})$.

\emph{Second, consider} $t\in\left[\bar{S}^{NI},S^{*}\right]$. Given $\kappa\in(\kappa^{ND}\left(T,\lambda\right),\kappa^{NI}\left(T,\lambda\right)]$, $p_{1}\left(t\right)\leq\kappa,\forall t\in\left[\bar{S}^{NI},S^{*}\right]$. Thus, $V_{t}^{NI}=0,\forall t\in\left[\bar{S}^{NI},S^{*}\right]$. 
The principal's continuation value is given by
\begin{equation}
V_{t}\left(\tau\right)=\int_{t}^{S^{*}}v\lambda\left(p_{1}\left(s\right)-\kappa\right)ds+V_{S^{*}}\left(\tau\right).\label{ic_ex_mech_S^*}
\end{equation}
As $p_{1}\left(t\right)\leq\kappa,\forall t\in\left[\bar{S}^{NI},S^{*}\right]$, $\int_{t}^{S^{*}}v\lambda\left(p_{1}\left(s\right)-\kappa\right)ds\leq0$  and it is increasing in $t$. As $V_{\bar{S}^{NI}}\left(\tau\right)=0$, where $V_{\bar{S}^{NI}}\left(\tau\right)$  is given by (\ref{ex_S^*_V}), it follows that $V_{t}\left(\tau\right)\geq0,\forall t\in\left[\bar{S}^{NI},S^{*}\right]$.

\emph{Third, consider} $t\in[S^{*},\tau)$. The absence of stopping at $t\geq S^{*}$
reveals that $x_{t}\neq2$. Thus, $q_{1}\left(t\right)=\frac{p_{1}\left(t\right)}{p_{0}\left(t\right)+p_{1}\left(t\right)}=\frac{\lambda t}{1+\lambda t}$,
$\forall t\in[S^{*},\tau)$, and, thus, $\dot{q}_{1}\left(t\right)>0$.
Further, $q_{1}\left(S^{*}\right)>\kappa$. The continuation value
$\forall t\in[S^{*},\tau)$ is given by 
\[
\begin{array}{c}
V_{t}\left(\tau\right)=\Esp\left[\int_{t}^{\tau}v\lambda\left(q_{1}\left(z\right)-\kappa\right)dz\,|\,t<\tau\right].\end{array}
\]
Thus, $V_{t}\left(\tau\right)\geq0$, $\forall t\in[S^{*},\tau)$.

\emph{2. The case of $\kappa^{NI}\left(T,\lambda\right)<\kappa\leq\tilde{\kappa}\left(T,\lambda\right)$}.

I start with proving the existence of $S^{*}$ such that $V\left(\tau\right)=0$. For all $t\in[0,S^{*})$, stopping never occurs,
at $t=S^{*}$ it occurs if $x_{S^{*}}=2$, and for all $t\in(S^{*},T]$
it occurs at $t=\tau_{2}\land T$. The principal's long-run payoff
can be written as 
\begin{equation}
V\left(\tau\right)=v\lambda\left(\int_{0}^{S^{*}}p_{1}\left(z\right)-\kappa dz+\int_{S^{*}}^{T}\left(q_{1}\left(z\right)-\kappa\right)\left(1-\p\left(x_{z}=2\right)\right)dz\right).\label{ex_S^*_V_1}
\end{equation}

To ensure that $S^{*}$ makes the individual rationality constraint
bind at $t=0$, it is necessary and sufficient that $V\left(\tau\right)=0$.
The next step of the proof consist of
inspecting (\ref{ex_S^*_V_1}) to establish that there exists $S^{*}$ ensuring that $V\left(\tau\right)=0$. It follows the respective part from the proof for the parametric case $\kappa^{ND}\left(T,\lambda\right)<\kappa\leq\kappa^{NI}\left(T,\lambda\right)$, imposing $\bar{S}^{NI}=0$
in it everywhere; thus, I omit it for the sake of space.

I proceed with proving that the investment schedule $\tau$ satisfies the conditions in Lemma \ref{lemm_obedience} and thus it is obedient. 
The principal's continuation value is given by
(\ref{ic_ex_mech_S^*}). As $\kappa\in(\kappa^{NI}\left(T,\lambda\right),\tilde{\kappa}\left(T,\lambda\right)]$, it follows from Lemma \ref{no_feedb} that $V_{t}^{NI}=0,\forall t\in\left[0,S^{*}\right]$. 
\emph{First, assume} $S^{*}\leq\bar{S}^{NI}$. From the proof of Lemma \ref{no_feedb}, it follows that $p_{1}\left(t\right)\leq\kappa,\forall t\in[0,\overline{S}]$, and $p_{1}\left(t\right)\geq\kappa,\forall t\in[\overline{S},\bar{S}^{NI}]$.
Thus, 
\begin{equation}
\int_{t}^{\bar{S}^{NI}}v\lambda\left(p_{1}\left(s\right)-\kappa\right)ds\geq\int_{0}^{\bar{S}^{NI}}v\lambda\left(p_{1}\left(s\right)-\kappa\right)ds,\forall t[0,\bar{S}^{NI}].\label{eqn:int_compar} 
\end{equation}
As $V_{t}(\tau)$ is given by (\ref{ic_ex_mech_S^*}), $V(\tau)=0$ and (\ref{eqn:int_compar}) imply that $V_{t}(\tau)\geq0,\forall t\in [0,S^{*}]$.  
\emph{Second, assume} $S^{*}\geq\bar{S}^{NI}$.
As $V(\tau)=0$ and $\int_{0}^{\bar{S}^{NI}}v\lambda\left(p_{1}\left(s\right)-\kappa\right)ds<0$, it must be that $V(\bar{S}^{NI})>0$. Further, $\int_{t}^{S^{*}}v\lambda\left(p_{1}\left(s\right)-\kappa\right)ds$ increases in $t$ for $t\in[\bar{S}^{NI},S^{*}]$. Thus, $V_{t}(\tau)\geq0,\forall t\in[0,S^{*}]$. 

Finally, the proof that  $V_{t}(\tau)\geq 0, \forall t\in[S^{*},\tau)$ follows the  the respective part of the proof for the parametric case $\kappa\in(\kappa^{ND}\left(T,\lambda\right),\kappa^{NI}\left(T,\lambda\right)]$; thus, I omit it for the sake of space.

\end{proof}

\begin{proof}[Proof of Lemma \ref{delay}]
I provide the proof for the parametric cases $\kappa^{ND}\left(T,\lambda\right)<\kappa\leq\kappa^{NI}\left(T,\lambda\right)$ and $\kappa^{NI}\left(T,\lambda\right)<\kappa\leq\tilde{\kappa}\left(T,\lambda\right)$ separately.

\emph{1. The case of $\kappa^{ND}\left(T,\lambda\right)<\kappa\leq\kappa^{NI}\left(T,\lambda\right)$}.

Under any obedient optimal policy, the principal's individual rationality constraint is binding, thus,
$V\left(\tau\right)=V^{NI}$, or equivalently $p_{2}\left(T\right)v-\Esp\left[\tau\right]c=p_{2}\left(\bar{S}^{NI}\right)v-\bar{S}^{NI}c.$
Thus,
\[
\Esp\left[\tau\right]=\frac{1}{\lambda\kappa}\left(p_{2}\left(T\right)-p_{2}\left(\bar{S}^{NI}\right)\right)+\bar{S}^{NI}.
\]
Differentiating both sides with respect to $\kappa$ yields
\[
\frac{\partial\Esp\left[\tau\right]}{\partial\kappa}=\frac{e^{-T\lambda}\left(1+T\lambda\right)-e^{-\bar{S}^{NI}\lambda}-\kappa}{\kappa^{2}\lambda}.
\]
The equation
\[
e^{-T\lambda}\left(1+T\lambda\right)-e^{-\bar{S}^{NI}\lambda}-\kappa=0
\]
can be equivalently rewritten as
\[
e^{-T\lambda}-e^{-\bar{S}^{NI}\lambda}=\kappa-e^{-T\lambda}T\lambda.
\]
It has a unique solution corresponding to $\kappa=\kappa^{ND}\left(T,\lambda\right)\coloneqq e^{-T\lambda}T\lambda.$
As $\kappa>\kappa^{ND}\left(T,\lambda\right),$ it holds that
$\partial\Esp\left[\tau\right]/\partial\kappa<0$.

\emph{2. The case of $\kappa^{NI}\left(T,\lambda\right)<\kappa\leq\tilde{\kappa}\left(T,\lambda\right)$}.

The principal's long-run payoff under any obedient optimal policy is given by 
\[
\Esp\left[\tau\right]c=p_{2}\left(T\right)v.
\]
Rewriting it equivalently, $\Esp\left[\tau\right]=\frac{1}{\lambda}\frac{1}{\kappa}p_{2}\left(T\right)\Rightarrow$
$\partial\Esp\left[\tau\right]/\partial\kappa<0$.

\end{proof}

\begin{proof}[Proof of Lemma \ref{lemma_effic_sch2}]

Lemma \ref{kappa_tilde} implies that if a schedule $\tau$ assigns
zero probability to stopping in states $0$ and $1$ then $V\left(\tau\right)<0$
and the individual rationality constraint is violated. Thus, the necessary
condition for a schedule $\tau$ to be individually rational under $\kappa\in(\tilde{\kappa}\left(T,\lambda\right),\kappa^{FI}\left(T,\lambda\right))$
is that it assigns a positive probability to stopping not only in
state $2$, but also to stopping in either state $0$ or state $1$.
Consider a schedule $\tau$ that assigns a positive probability to
stopping in state $1$. Consider an alternative schedule $\tau^{\prime}$
which is induced by reallocating the probability mass of stopping
in state $1$ to stopping at $\tau_{2}\land T$. Lemma \ref{lemma_FI}
suggests that in state $1$ the principal strictly benefits from postponing
the stopping until the second stage of the project is completed. Thus,
$V\left(\tau^{\prime}\right)>V\left(\tau\right)$. Further, under
$\tau^{\prime}$ the principal invests strictly longer, in expectation.
Thus, $W\left(\tau^{\prime}\right)>W\left(\tau\right)$. Thus, for
a schedule to be optimal it should not assign a positive probability
to stopping in state $1$.

Next, consider a schedule $\tau$ which assigns a positive probability
to stopping in states $0$ and $2$. Assume that the stopping in state
$0$ happens at date $S$, which can be either deterministic or stochastic:
if $x_{S}=0$ then $\tau=S$, otherwise, $\tau\geq\tau_{2}\land T$
and there exists $\omega\in\Omega$ such that $\tau\left(\omega\right)>\tau_{2}\left(\omega\right)$,
i.e., with a positive probability, stopping in state $2$ happens strictly
after the date of transition to state $2$. Assume that $V\left(\tau\right)=0$.
Consider the following investment schedule $\tilde{\tau}$: if $x_{\tilde{S}}=0$
then $\tilde{\tau}=\tilde{S}$, $\Esp{[\tilde{S}]}>\Esp{[S]}$, otherwise, $\tilde{\tau}=\tau_{2}\land T$,
and $V\left(\tilde{\tau}\right)=0$. Further, from (\ref{zero_sum_pyf}),
the agent's objective is given by 
\[
\begin{aligned}W\left(\tilde{\tau}\right)-W\left(\tau\right) & =\left(SV\left(\tilde{\tau}\right)-V\left(\tilde{\tau}\right)\right)-\left(SV\left(\tau\right)-V\left(\tau\right)\right)\\
 & =SV\left(\tilde{\tau}\right)-SV\left(\tau\right).
\end{aligned}
\]
The change from $\tau\geq\tau_{2}\land T$ to $\tau=\tau_{2}\land T$
induces no loss in total surplus as the measure of $\omega\in\Omega$
satisfying the event $\left\{ \tau_{2}\leq T\right\} $ is equal for
both schedules. Further, the change from conditional stopping at $S$
to conditional stopping at $\tilde{S}$
induces an increase in total surplus as $\p\left(x_{\tilde{S}}=0\right)<\p\left(x_{S}=0\right)$
and thus, in the latter case, conditional stopping happens less frequently.
Hence, $SV\left(\tilde{\tau}\right)\geq SV\left(\tau\right)$. Thus,
for a schedule that assigns positive probability to stopping in states
$0$ and $2$ to be optimal, it is necessary that stopping in state
$2$ happens at $\tau_{2}$ with probability $1$.

\end{proof}

\begin{proof}[Proof of Proposition \ref{case3}]

Given Lemma \ref{lemma_effic_sch2}, the space of candidate optimal investment schedules under $\kappa\in(\tilde{\kappa}\left(T,\lambda\right),\kappa^{FI}\left(T,\lambda\right)]$
simplifies to schedules such that stopping in state $2$ happens at
$\tau_{2}$, and also stopping in state $0$ happens with positive
probability. Thus, to characterize the optimal schedule under $\kappa\in(\tilde{\kappa}\left(T,\lambda\right),\kappa^{FI}\left(T,\lambda\right)]$,
I need to characterize the assignment of the probability mass of stopping
in state $0$ that is optimal for the agent given the principal's individual rationality constraints. To do this, I consider the agent's optimal design of a \emph{device that randomizes over the dates of stopping in state $0$}. 

At $t=0$, the agent chooses a distribution
$F_{\rho}$ on $\left[0,T\right]$, observable to both the agent and the principal. $\rho$  stands for the random date at which the
stopping occurs if the state is $0$ by that date. $\rho$  is drawn at $t=0$ according to  $F_{\rho}$, which is independent from the state
process $x_{t}$,  and the draw privately observed by the agent. 


To formulate the agent's design problem, I start with characterizing the welfare implications of stopping in
state $0$ for the agent and principal. A few useful objects
are $SV_{t|0}\left(\tau_{2}\right)$ and $V_{t|0}\left(\tau_{2}\right)$.
$SV_{t|0}\left(\tau_{2}\right)$ is the time $t$ continuation total surplus
given that $x_{t}=0$ at $t$ and completion of the second stage of
the project is immediately disclosed whenever it happens, $\tau=\tau_{2}\land T$:

\begin{equation}
\begin{aligned}SV_{t|0}\left(\tau_{2}\right) & =v\p\left(\tau_{2}\leq T|x_{t}=0\right)\\
 & =v\left[1-e^{-\lambda\left(T-t\right)}-\lambda\left(T-t\right)e^{-\lambda\left(T-t\right)}\right].
\end{aligned}
\label{L_function}
\end{equation}

$V_{t|0}\left(\tau_{2}\right)$ is the principal's time $t$ continuation
value given that $x_{t}=0$ and completion of the second stage of
the project is immediately disclosed, $\tau=\tau_{2}\land T$: 
\[
V_{t|0}\left(\tau_{2}\right)=v\p\left(\tau_{2}\leq T|x_{t}=0\right)-c\Esp\left[\tau_{2}\wedge T-t|x_{t}=0\right],
\]
where $v\p\left(\tau_{2}\leq T|x_{t}=0\right)$ is given by (\ref{L_function})
and

\begin{equation}
\begin{aligned} & \Esp\left[\tau_{2}\land T-t|x_{t}=0\right]\\
= & \p\left(\tau_{2}\leq T|x_{t}=0\right)\int_{t}^{T}z\cdot\frac{\lambda^{2}\left(z-t\right)e^{-\lambda\left(z-t\right)}}{\p\left(\tau_{2}\leq T|x_{t}=0\right)}dz+\p\left(\tau_{2}>T|x_{t}=0\right)T-t\\
= & \frac{2}{\lambda}-\frac{2}{\lambda}e^{-\lambda\left(T-t\right)}-e^{-\lambda\left(T-t\right)}\left(T-t\right).
\end{aligned}
\label{S_function}
\end{equation}

I proceed with a useful lemma.

\begin{lemma}\label{decomposition}

Given an investment schedule 
\begin{equation}
\tau=\begin{cases}
\rho, & \text{if }x_{\rho}=0\\
\tau_{2}\land T, & \text{otherwise},
\end{cases}\label{mech_with_S}
\end{equation}
where $\rho$ has a publicly observable distribution $F_{\rho}$ on $\left[0,T\right]$, $\rho$
is independent of the state process $x_{t}$ and is drawn at $t=0$, and the draw is unobservable to the players, the
total surplus at date $t$ can be written as 
\[
SV_{t}\left(\tau\right)=SV_{t}\left(\tau_{2}\right)-\Esp_{F_{\rho}}\left[\p\left(x_{\rho}=0|t<\tau\right)SV_{\rho|0}\left(\tau_{2}\right)\right],
\]
and the principal's expected payoff at date $t$ can be written as 
\[
V_{t}\left(\tau\right)=V_{t}\left(\tau_{2}\right)-\Esp_{F_{\rho}}\left[\p\left(x_{\rho}=0|t<\tau\right)V_{\rho|0}\left(\tau_{2}\right)\right],
\]
for all $t\geq0$.

\end{lemma}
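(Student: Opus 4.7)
The plan is to decompose the schedule $\tau$ by conditioning on the (still unrealized, from the players' perspective) draw $\rho\sim F_\rho$ and then comparing, realization by realization, the surplus under $\tau$ with the surplus under the benchmark schedule $\tau_2\wedge T$. Since $\rho$ is drawn at $t=0$ independently of $x_t$, the tower property gives
\[
SV_t(\tau)=\Esp_{F_\rho}\bigl[SV_t(\tau)\,\big|\,\rho\bigr],
\]
and an analogous identity for $V_t(\tau)$. The work then reduces to proving the pointwise identity for each realization $\rho=s$.

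Fix $\rho=s$ and condition on $t<\tau$. I would split the probability space according to whether $x_s=0$ or $x_s\neq 0$. On $\{x_s\neq 0\}$ the schedule $\tau$ stops at $\tau_2\wedge T$ exactly as the benchmark schedule $\tau_2\wedge T$ does, so the conditional contribution to the total surplus and to the principal's payoff is identical under the two schedules. On $\{x_s=0\}$ the schedule $\tau$ stops at $s$ and produces zero project surplus with an accumulated cost of $s-t$ of investment up to that date; the benchmark schedule $\tau_2\wedge T$ would instead keep going, yielding a conditional continuation surplus of exactly $SV_{s|0}(\tau_2)$ and a conditional principal continuation payoff of $V_{s|0}(\tau_2)$, because by the memorylessness of the Poisson process, conditional on $x_s=0$ the future behavior of the state process from date $s$ is identical in distribution to the unconditional behavior starting from state $0$ at that date. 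Hence, pointwise on $\{x_s=0\}$ the difference $SV_t(\tau_2)-SV_t(\tau)$ is precisely $SV_{s|0}(\tau_2)$ and the analogous statement holds for the principal.

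Combining the two cases yields, for every realization $\rho=s$,
\[
SV_t(\tau)\bigm|\rho=s \;=\; SV_t(\tau_2)\;-\;\p(x_s=0\mid t<\tau)\cdot SV_{s|0}(\tau_2),
\]
with the analogous expression for $V_t(\tau)\mid\rho=s$. Taking expectations over $\rho\sim F_\rho$, using linearity and the fact that $SV_t(\tau_2)$ does not depend on $\rho$, gives the two stated identities. The parallel argument for $V_t$ requires no new ingredients, since the same case split applies and the cost stream and success probability enter linearly into both $SV$ and $V$.

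The main obstacle, I expect, is being precise about the conditioning sets and ensuring that the probability weight $\p(x_\rho=0\mid t<\tau)$ rather than $\p(x_\rho=0)$ appears: the event $\{t<\tau\}$ conveys information both about the state process up to $t$ and potentially about $\rho$, since for any $s<t$ with $x_s=0$ the event $\{t<\tau\}$ rules out $\rho=s$. Handling this cleanly requires invoking the independence of $\rho$ from $x_t$ together with the factorization of $\{t<\tau\}$ into $\{\rho>t\}\cup(\{\rho\le t\}\cap\{x_\rho\neq 0\})$; once the conditional probabilities are written out using this partition, everything else is bookkeeping and the decomposition follows.
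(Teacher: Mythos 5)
Your argument is correct and follows essentially the same route as the paper's proof: both decompose by the state at the random date $\rho$ via the law of total probability, observe that $\tau$ and $\tau_{2}\land T$ coincide on $\left\{ x_{\rho}\neq0\right\}$, and use the memorylessness of the Poisson process to identify the loss on $\left\{ x_{\rho}=0\right\}$ as exactly $SV_{\rho|0}\left(\tau_{2}\right)$ (resp. $V_{\rho|0}\left(\tau_{2}\right)$). The paper verifies the resulting algebraic identity explicitly with the full probability formula, while you phrase it as a pathwise comparison of the two schedules, but the substance is the same, and your closing remark about how $\left\{ t<\tau\right\}$ interacts with the draw of $\rho$ addresses the one genuinely delicate point.
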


\begin{proof}

By construction, $SV_{t}\left(\tau\right)$ corresponds to the expected
value of the project completion payoff under stopping policy $\tau$
conditional on stopping not having happened by $t$, i.e., $t<\tau$.
Given (\ref{mech_with_S}), the principal gets $v$ either if the
second stage is completed before $\rho$ or if the first stage is
completed before $\rho$ and the second stage is completed before
$T$. Note that when $t<\rho$, $t<\tau$ implies that the state is
either $0$ or $1$, and, when $t\geq\rho$, $t<\tau$ implies that
the state is $1$. Thus,

\[
SV_{t}\left(\tau\right)=v\Esp_{F_{\rho}}\bigg[\p\big(\left\{ x_{\rho}=1\right\} \cap\left\{ \tau_{2}\leq T\right\} |t<\tau\big)+\p\big(x_{\rho}=2|t<\tau\big)\bigg].
\]
Further, for each realization of $\rho$,

\[
\p\big(\left\{ x_{\rho}=1\right\} \cap\left\{ \tau_{2}\leq T\right\} |t<\tau\big)=\p\left(x_{\rho}=1|t<\tau\right)\p\left(\tau_{2}\leq T|x_{\rho}=1\right).
\]
Thus,

\begin{equation}
SV_{t}\left(\tau\right)=v\Esp_{F_{\rho}}\bigg[\p\big(x_{\rho}=1|t<\tau\big)\p\big(\tau_{2}\leq T|x_{\rho}=1\big)+\p\big(x_{\rho}=2|t<\tau\big)\bigg].\label{SV_0_iii}
\end{equation}

$SV_{\rho|0}\left(\tau_{2}\right)$ corresponds to the expected value
of the project completion payoff when $x_{\rho}=0$. In that case,
$v$ is obtained when the completion of the second stage happens not
later than $T$. Thus, $SV_{\rho|0}\left(\tau_{2}\right)=\Esp_{F_{\rho}}\left[v\p\left(\tau_{2}\leq T|x_{\rho}=0\right)\right]$.
Therefore,

\begin{equation}
\begin{aligned} & SV_{t}\left(\tau_{2}\right)-\Esp_{F_{\rho}}\big[\p\left(x_{\rho}=0|t<\tau\right)SV_{\rho|0}\left(\tau_{2}\right)\big]\\
= & \p\left(x_{T}=2|t<\tau\right)v-\Esp_{F_{\rho}}\big[\p\left(x_{\rho}=0|t<\tau\right)v\p\left(\tau_{2}\leq T|x_{\rho}=0\right)\big]\\
= & v\Esp_{F_{\rho}}\big[\p\left(x_{T}=2|t<\tau\right)-\p\left(x_{\rho}=0|t<\tau\right)\p\left(\tau_{2}\leq T|x_{\rho}=0\right)\big].
\end{aligned}
\label{SV_0_iii_proof}
\end{equation}
Thus, given (\ref{SV_0_iii}) and (\ref{SV_0_iii_proof}), to complete
the proof of the first result of the Lemma \ref{decomposition}, it suffices to show that,

\[
\begin{aligned} & \p\left(x_{T}=2|t<\tau\right)-\p\left(x_{\rho}=0|t<\tau\right)\p\left(\tau_{2}\leq T|x_{\rho}=0\right)\\
= & \p\left(x_{\rho}=2|t<\tau\right)+\p\left(x_{\rho}=1|t<\tau\right)\p\left(\tau_{2}\leq T|x_{\rho}=1\right)
\end{aligned}
\]
Using the full probability formula, 
\[
\begin{aligned}\p\left(x_{T}=2|t<\tau\right)=\\
 & \p\left(x_{\rho}=0|t<\tau\right)\p\left(\tau_{2}\leq T|x_{\rho}=0\right)\\
 & +\p\left(x_{\rho}=1|t<\tau\right)\p\left(\tau_{2}\leq T|x_{\rho}=1\right)\\
 & +\p\left(x_{\rho}=2|t<\tau\right)\p\left(\tau_{2}\leq T|x_{\rho}=2\right).
\end{aligned}
\]
Hence, 
\begin{equation}
SV_{t}\left(\tau\right)=SV_{t}\left(\tau_{2}\right)-\Esp_{F_{\rho}}\big[\p\big(x_{\rho}=0|t<\tau\big)SV_{\rho|0}\left(\tau_{2}\right)\big],\text{ for all }t\geq0.\label{SV_decomp}
\end{equation}

I proceed with proving the second result of Lemma \ref{decomposition}.
First, applying (\ref{SV_decomp}) to $V_{t}\left(\tau\right)$ yields

\begin{equation}
\begin{aligned}V_{t}\left(\tau\right) & =SV_{t}\left(\tau\right)-\Esp_{F_{\rho}}\big[c\Esp\left[\tau|t<\tau\right]\big]\\
 & =SV_{t}\left(\tau_{2}\right)-\Esp_{F_{\rho}}\big[\p\big(x_{\rho}=0|t<\tau\big)SV_{\rho|0}\left(\tau_{2}\right)-c\Esp\left[\tau|t<\tau\right]\big].
\end{aligned}
\label{V0_decomp}
\end{equation}
Further, for each realization of $\rho$: 
\begin{equation}
\begin{aligned} & \Esp\left[\tau|t<\tau\right]\\
= & \p\left(x_{\rho}=0|t<\tau\right)\Esp\left[\tau|x_{\rho}=0\right]\\
 & +\p\big(x_{\rho}=1|t<\tau\big)\Esp\left[\tau|x_{\rho}=1\right]+\p\big(x_{\rho}=2|t<\tau\big)\Esp\left[\tau|x_{\rho}=2\right]\\
= & \p\left(x_{\rho}=0|t<\tau\right)\rho\\
 & +\p\big(x_{\rho}=1|t<\tau\big)\Esp\left[\tau_{2}\land T|x_{\rho}=1\right]+\p\big(x_{\rho}=2|t<\tau\big)\Esp\left[\tau_{2}\land T|x_{\rho}=2\right]\\
= & \p\big(x_{\rho}=0|t<\tau\big)\rho+\Esp\left[\tau_{2}\land T|t<\tau\right]-\p\big(x_{\rho}=0|t<\tau\big)\Esp\left[\tau_{2}\land T|x_{\rho}=0\right]\\
= & \Esp\left[\tau_{2}\land T|t<\tau\right]-\p\big(x_{\rho}=0|t<\tau\big)\big(\Esp\left[\tau_{2}\land T|x_{\rho}=0\right]-\rho\big),
\end{aligned}
\label{Etau_decomp}
\end{equation}
where the second equality uses the full probability formula.

Plugging (\ref{Etau_decomp}) into (\ref{V0_decomp}) yields 
\[
\begin{aligned} & SV_{t}\left(\tau_{2}\right)-\Esp_{F_{\rho}}\left[c\Esp\left[\tau_{2}\land T|t<\tau\right]\right]\\
 & -\Esp_{F_{\rho}}\left[\p\left(x_{\rho}=0|t<\tau\right)\left(SV_{\rho|0}\left(\tau_{2}\right)-c\Esp\left[\tau_{2}\land T-\rho|x_{\rho}=0\right]\right)\right]\\
= & V_{t}\left(\tau_{2}\right)-\Esp_{F_{\rho}}\left[\p\left(x_{\rho}=0|t<\tau\right)V_{\rho|0}\left(\tau_{2}\right)\right],\forall t\geq0.
\end{aligned}
\]

\end{proof}

I proceed to formulating the agent's problem. The agent's objective can be represented
as

\[
\begin{array}{c}
c\Esp\left[\tau\right]=SV\left(\tau\right)-V\left(\tau\right).\end{array}
\]
Using Lemma \ref{decomposition}, 
\begin{equation}
\begin{aligned} & SV\left(\tau\right)-V\left(\tau\right)\\
= & SV\left(\tau_{2}\right)-V\left(\tau_{2}\right)-\Esp_{F_{\rho}}\left[\p\left(x_{\rho}=0\right)\left(V_{\rho|0}\left(\tau_{2}\right)-SV_{\rho|0}\left(\tau_{2}\right)\right)\right]\\
= & SV\left(\tau_{2}\right)-V\left(\tau_{2}\right)-c\Esp_{F_{\rho}}\left[\p\left(x_{\rho}=0\right)\Esp\left[\tau_{2}\land T-\rho|x_{\rho}=0\right]\right].
\end{aligned}
\label{Case_c_obj_w_rho}
\end{equation}
The individual rationality constraint for the principal can be expressed
as 
\begin{equation}
V_{t}\left(\tau\right)\geq0,\forall t\geq0 \iff V_{t}\left(\tau_{2}\right)\geq\Esp_{F_{\rho}}\left[\p\left(x_{\rho}=0|t<\tau\right)V_{\rho|0}\left(\tau_{2}\right)\right],\forall t\geq0. \label{CaseC_constr}
\end{equation}
Finally, (\ref{Case_c_obj_w_rho}) yields the objective and (\ref{CaseC_constr})
yields the individual rationality constraint for the agent's problem 
\begin{equation}
\begin{aligned} & \min_{F_{\rho}}\left\{\Esp_{F_{\rho}}\left[\p\left(x_{\rho}=0\right)\Esp\left[\tau_{2}\land T-\rho|x_{\rho}=0\right]\right]\right\}\\
 & \text{s.t.}\,\Esp_{F_{\rho}}\left[\p\left(x_{\rho}=0|t<\tau\right)\left(c\Esp\left[\tau_{2}\land T-\rho|x_{\rho}=0\right]-SV_{\rho|0}\left(\tau_{2}\right)\right)\right]\geq-V_{t}\left(\tau_{2}\right),\forall t\geq0.
\end{aligned}
\label{CaseC_fullOP}
\end{equation}

I proceed in two steps: first, I formulate and \emph{solve the relaxed version of} (\ref{CaseC_fullOP}); second, I demonstrate that the solution to the relaxed problem \emph{satisfies the full system of constraints in} (\ref{CaseC_fullOP}). The relaxed problem has the principal's individual rationality constraint only for $t=0$:

\begin{equation}
\begin{aligned} & \min_{F_{\rho}}\left\{\Esp_{F_{\rho}}\left[\p\left(x_{\rho}=0\right)\Esp\left[\tau_{2}\land T-\rho|x_{\rho}=0\right]\right]\right\}\\
 & \text{s.t.}\,\Esp_{F_{\rho}}\left[\p\left(x_{\rho}=0\right)\left(c\Esp\left[\tau_{2}\land T-\rho|x_{\rho}=0\right]-SV_{\rho|0}\left(\tau_{2}\right)\right)\right]\geq-V\left(\tau_{2}\right).
\end{aligned}
\label{CaseC_relaxedOP}
\end{equation}

The Lagrangian function for the problem is 
\[
\begin{aligned}\mathcal{L}= & \Esp_{F_{\rho}}\left[\p\left(x_{\rho}=0\right)\Esp\left[\tau_{2}\land T-\rho|x_{\rho}=0\right]\right]\\
 & -\mu\left(\Esp_{F_{\rho}}\left[\p\left(x_{\rho}=0\right)\left(c\Esp\left[\tau_{2}\land T-\rho|x_{\rho}=0\right]-SV_{\rho|0}\left(\tau_{2}\right)\right)\right]+V\left(\tau_{2}\right)\right),
\end{aligned}
\]
where $\p\left(x_{\rho}=0\right)=e^{-\lambda \rho}$, $\Esp\left[\tau_{2}\land T-\rho|x_{\rho}=0\right]$
is given by (\ref{S_function}), $SV_{\rho|0}\left(\tau\right)$ is given
by (\ref{L_function}).

I obtain the F.O.C., which needs to hold for each value of $\rho$
that has a positive probability in $F_{\rho}$: 
\begin{equation}
e^{-\lambda T}\left(c\left(2e^{-\lambda\left(T-\rho\right)}-1\right)\left(\mu-1\right)-\mu\lambda v\left(e^{-\lambda\left(T-\rho\right)}-1\right)\right)=0.\label{FOC}
\end{equation}
The derivative of the left-hand side of (\ref{FOC}) w.r.t. $\rho$ is
given by $e^{-\lambda \rho}\lambda\left(2c+\mu\left(\lambda v-2c\right)\right)$.
As $\kappa^{FI}\left(T,\lambda\right)<\frac{1}{2}$, the derivative
is positive. Thus, there exists at most one $\rho$ that satisfies the
FOC (\ref{FOC}). Thus, the optimal $F_{\rho}$ is degenerate. I denote it with $S_{0}^{A}$, the interim deadline.

I proceed with characterizing the optimal $S_{0}^{A}$: 
\begin{equation}
\begin{aligned} & \min_{S\in\left[0,T\right]}\left\{\p\left(x_{S}=0\right)\Esp\left[\tau_{2}\land T-S|x_{S}=0\right]\right\}\\
 & \text{s.t.}\,\p\left(x_{S}=0\right)\left(c\Esp\left[\tau_{2}\land T-S|x_{S}=0\right]-SV_{S|0}\left(\tau_{2}\right)\right)\geq-V\left(\tau_{2}\right).
\end{aligned}
\label{OP_S_INT_red}
\end{equation}
The system of F.O.C. is given by

\[
\begin{cases}
\begin{aligned} & e^{-\lambda T}c\left(2e^{-\lambda\left(T-S\right)}-1\right)\left(\mu-1\right)\\
 & -e^{-\lambda T}\mu\lambda v\left(e^{-\lambda\left(T-S\right)}-1\right)
\end{aligned}
 & \begin{aligned} & \geq0\,\text{ if }S=0\\
 & =0\,\text{ if }S\in\left(0,T\right)\\
 & \leq0\,\text{ if }S=T
\end{aligned}
\\
\\
\begin{aligned} & \frac{c}{\lambda}e^{-\lambda T}\left(2\left(e^{-\lambda\left(T-S\right)}-1\right)-\lambda\left(T-S\right)\right)\\
 & -ve^{-\lambda T}\left(\left(e^{-\lambda\left(T-S\right)}-1\right)-\lambda\left(T-S\right)\right)+V\left(\tau_{2}\right)\geq0
\end{aligned}
 & =0\,\text{ if }\mu>0.
\end{cases}
\]
Assume $\mu=0$. In this case, the first F.O.C. wrt $S$ yields
$-ce^{-\lambda T}\left(2e^{-\lambda\left(T-S\right)}-1\right)$.
The expression is negative for all $S\in\left(0,T\right)$. Thus,
$\mu>0$, and optimal $S$ solves the binding constraint. Thus,
I proceed with inspecting the corresponding equation given by 
\begin{equation}
\begin{aligned} & \frac{c}{\lambda}e^{-\lambda T}\left(2\left(e^{-\lambda\left(T-S\right)}-1\right)-\lambda\left(T-S\right)\right)\\
 & -ve^{-\lambda T}\left(\left(e^{-\lambda\left(T-S\right)}-1\right)-\lambda\left(T-S\right)\right)\\
= & -V\left(\tau_{2}\right),
\end{aligned}
\label{binding_constr}
\end{equation}
where $V\left(\tau_{2}\right)$ is given by (\ref{V_tau2}).

The solution to (\ref{binding_constr}) is given by 
\begin{equation}
S=\frac{1}{\lambda}\left[\gamma+\mathcal{W}\left(-\gamma e^{-\gamma}\right)\right],\label{S_INT_gamma}
\end{equation}
where $\gamma=e^{\lambda T}\frac{1-2\kappa}{1-\kappa}$
and $\mathcal{W}(.)$ denotes the Lambert $W$ function.

Denote the $0$ and $-1$ branches of the Lambert $W$ function by
$\mathcal{W}_{0}(.)$ and $\mathcal{W}_{-1}(.)$. $\kappa\in\left(0,\frac{1}{2}\right)$,
thus, $\gamma>0$. (\ref{S_INT_gamma}) depends on $\gamma$ and for
each $\gamma\neq1$ corresponds to two points as the Lambert $W$
function has two branches. The values of (\ref{S_INT_gamma}) as a
function of $\gamma$ are presented in Figure \ref{Lambert}. They
are given by 
\[
S=\begin{cases}
\left(\frac{1}{\lambda}\left[\gamma+\mathcal{W}_{-1}\left(-\gamma e^{-\gamma}\right)\right],0\right), & \text{if }\gamma<1\\
\left(0,\frac{1}{\lambda}\left[\gamma+\mathcal{W}_{0}\left(-\gamma e^{-\gamma}\right)\right]\right), & \text{if }\gamma>1\\
0, & \text{if }\gamma=1.
\end{cases}
\]

\begin{figure}[H]
\captionsetup{justification=centering} \centering \includegraphics[width=0.6\textwidth]{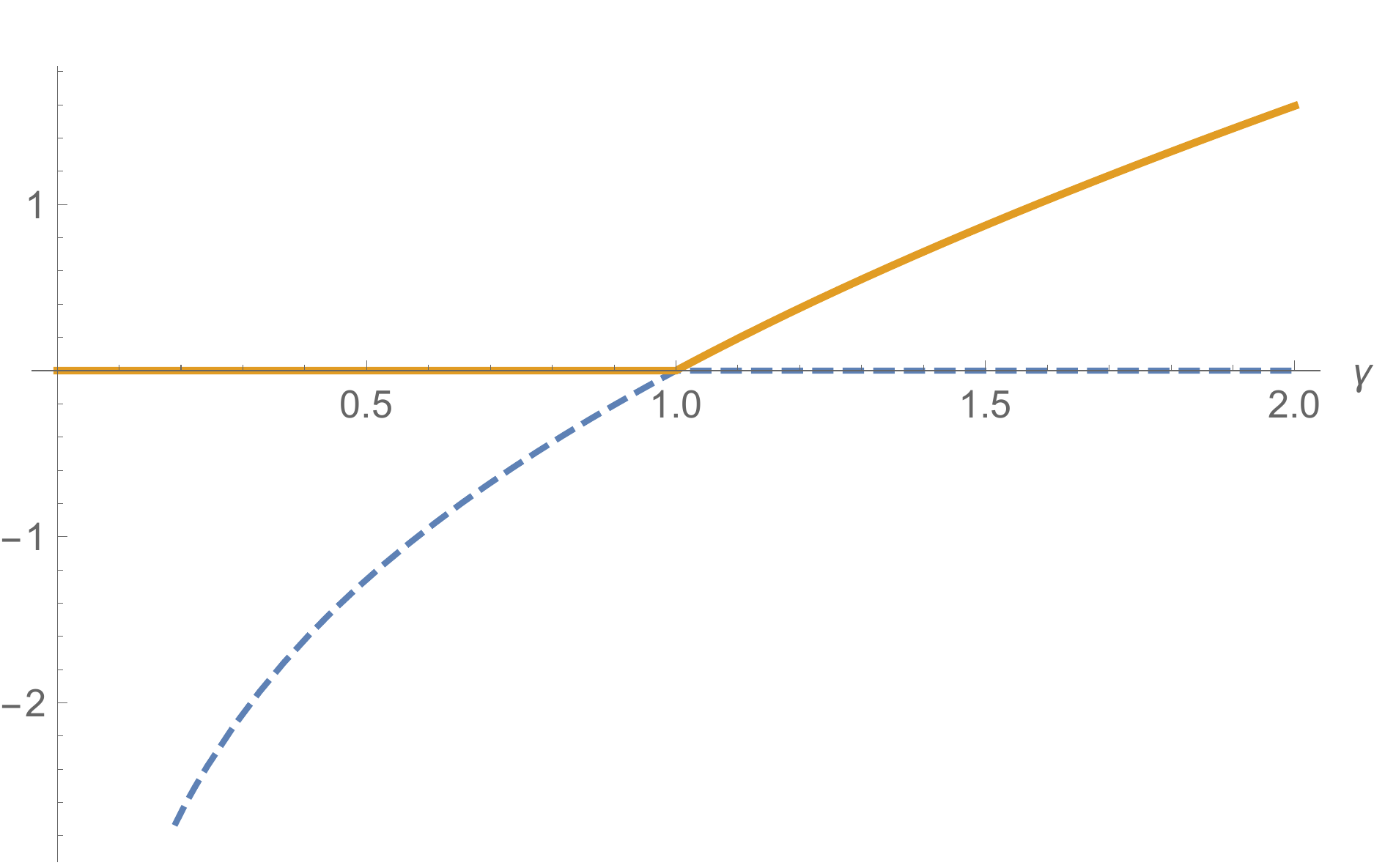}
\caption{Roots of equation (\ref{binding_constr}) as a function of the parameter
$\gamma$: \protect \protect \protect \\
 root corresponding to branch $0$ of the Lambert $W$ function -
\textbf{thick}; \protect \protect \protect \\
 root corresponding to branch $-1$ of the Lambert $W$ function -
\textbf{dashed}.}
\label{Lambert} 
\end{figure}

$\gamma$ is decreasing in $\kappa$, and $\gamma_{|\kappa=\kappa^{FI}}=1$.
As $\kappa\leq\kappa^{FI}$, which corresponds to $\gamma\geq1$,
the solution to (\ref{binding_constr}) is given by 
\[
\begin{array}{cc}
S_{A}=0,\,\,\,\,\,\,\,\,\, & S_{B}=\frac{1}{\lambda}\left[\gamma+\mathcal{W}_{0}\left(-\gamma e^{-\gamma}\right)\right].\end{array}
\]
As the objective of (\ref{OP_S_INT_red}) is decreasing in $S$ and
$S_{B}>S_{A}$, the solution to (\ref{OP_S_INT_red}) is given by
\begin{equation}
S_{0}^{A}=\frac{1}{\lambda}\left[\gamma+\mathcal{W}_{0}\left(-\gamma e^{-\gamma}\right)\right],\gamma=e^{\lambda T}\frac{1-2\kappa}{1-\kappa}.\label{S_INT_Lambert}
\end{equation}

Finally, I can describe the solution to (\ref{CaseC_relaxedOP}): $\tau$ is the stopping
time such that stopping occurs either at the moment of completion
of the second stage of the project or at $S_{0}^{A}$, conditional on
the absence of the completion of the first stage of the project, i.e.
\begin{equation}
\tau=\begin{cases}
S_{0}^{A}, & \text{if }x_{S_{0}^{A}}=0\\
\tau_{2}\land T, & \text{otherwise},
\end{cases}\label{tau_iii_relaxed_prob}
\end{equation}
where $S_{0}^{A}$ is given by (\ref{S_INT_Lambert}).

\emph{I proceed with the second part of the proof:} I demonstrate that (\ref{tau_iii_relaxed_prob}) satisfies the full system of constraints in (\ref{CaseC_fullOP}), and thus solves (\ref{CaseC_fullOP}). To do this, I need to demonstrate that  $V_{t}\left(\tau\right)\geq0,$
for all $t\in[0,\tau)$.
If the recommendation mechanism $\tau$ is given
by (\ref{tau_iii_relaxed_prob}), then, for $t<S_{0}^{A}$ the absence
of stopping at some $t$ reveals that $x_{t}\neq2$. Thus, 
\[
q_{1}\left(t\right)=\frac{p_{1}\left(t\right)}{p_{1}\left(t\right)+p_{0}\left(t\right)}=\frac{\lambda t}{1+\lambda t},\,\forall t<S_{0}^{A}.
\]
Hence, $\dot{q}_{1}\left(t\right)>0,\text{ for all }t<S_{0}^{A}$. Further,
for $t\geq S_{0}^{A}$, the absence of stopping reveals that $x_{t}=1$.
Thus, $q_{1}\left(t\right)=1,$ for all $t\geq S_{0}^{A}.$

Writing out $V_{t}\left(\tau\right)$ based on (\ref{app_V_dot})
yields 
\begin{equation}
\dot{V_{t}}\left(\tau\right)=\lambda q_{1}\left(t\right)V_{t}\left(\tau\right)+v\lambda\left(\kappa-q_{1}\left(t\right)\right).\label{V_dot_rec_proof}
\end{equation}
$q_{1}\left(0\right)=0$ and $\dot{q}_{1}\left(t\right)>0,\text{ for all }t<S_{0}^{A}$.
I define $\tilde{t}$ as the solution of $\frac{\lambda t}{1+\lambda t}=\kappa$.
$q_{1}\left(t\right)<\kappa$, for all $t\in\left[0,\tilde{t}\land S_{0}^{A}\right]$.

I argue that $V\left(\tau\right)\geq0\Rightarrow V_{t}\left(\tau\right)\geq0,$
for all $t\in\left(0,\tilde{t}\land S_{0}^{A}\right)$. Assume that
this is not true, then $\exists\hat{t}$ such that $\hat{t}\coloneqq\inf\left\{ t\in\left(0,\tilde{t}\land S_{0}^{A}\right):V_{t}(\tau)<0\right\} $.
As $V_{t}(\tau)$ is continuous in $t$, it follows that $V_{\hat{t}}(\tau)=0$,
and by the mean value theorem there must be $\overline{t}\in\left(0,\hat{t}\right)$
such that $\dot{V_{\overline{t}}}\left(\tau\right)\leq0$. But this
is in contradiction with the fact that $V_{\overline{t}}(\tau)\geq0$
and \ref{V_dot_rec_proof}.

Consider now $t\in[\tilde{t}\land S_{0}^{A},\tau)$. The continuation
value can be written as

\begin{equation}
\begin{array}{c}
V_{t}\left(\tau\right)=\Esp\left[\int_{t}^{\tau}v\lambda\left(q_{1}\left(z\right)-\kappa\right)dz\,|\,t<\tau\right].\end{array}\label{ic_iii_V_integr2}
\end{equation}
As $\kappa<\frac{1}{2}$ and $q_{1}\left(t\right)=1,$ for all $t\in[S_{0}^{A},\tau)$,
it holds that $q_{1}\left(t\right)\geq\kappa$, $\forall t\in[\tilde{t}\land S_{0}^{A},\tau)$.
Thus, it can be seen from (\ref{ic_iii_V_integr2}) that $V_{t}\left(\tau\right)\geq0$,
$\forall t\in[\tilde{t}\land S_{0}^{A},\tau)$.


\end{proof}

\begin{proof}[Proof of Proposition \ref{prop:prof_share1}]

I assume it is not the case that $\alpha=1$ and $\beta=0$ as, otherwise, agent is indifferent and discloses no information. I start with proving existence of $\tilde{\kappa}$ and then proceed
to proving that when the project is promising, an investment schedule,
in which stopping never occurs in state $0$, is optimal. Proving
existence of $\tilde{\kappa}$ follows the steps of the proof of Lemma
\ref{kappa_tilde}. The principal's expected payoff is given by 
\[
V\left(\tau\right)=\alpha\p\left(x_{\tau}=2\right)v\Esp\left[e^{-r\tau}|\tau_{2}\leq\tau\right]-\Esp\left[\int_{0}^{\tau}e^{-rs}ds\right]c.
\]
$\tilde{\kappa}$ solves $V\left(\tau_{2}\right)=0$, or, equivalently
\begin{equation}
\alpha\p\left(x_{\tau_{2}\land T}=2\right)v\Esp\left[e^{-r\cdot\tau_{2}\land T}|\tau_{2}\leq T\right]=\Esp\left[\int_{0}^{\tau_{2}\land T}e^{-rs}ds\right]c,\label{eq:robust_IR}
\end{equation}
where $\p\left(x_{\tau_{2}\land T}=2\right)=p_{2}\left(T\right)$.
Solving (\ref{eq:robust_IR}) for $\kappa$ yields 
\[
\tilde{\kappa}\left(T,\lambda,r,\alpha\right)=\frac{1}{\lambda\alpha}\frac{\p\left(x_{\tau_{2}\land T}=2\right)\Esp\left[e^{-r\cdot\tau_{2}\land T}|\tau_{2}\leq T\right]}{\Esp\left[\int_{0}^{\tau_{2}\land T}e^{-rs}ds\right]}.
\]
Finally, $V\left(\tau\right)$ decreases in $\kappa$. Thus, if $\kappa<\tilde{\kappa}\left(T,\lambda,r,\alpha\right)$,
then an investment schedule $\tau=\tau_{2}\land T$ satisfies the
principal's individual rationality constraint.

Consider now the agent's expected payoff $W\left(\tau\right)$ given by
\[
W\left(\tau\right)=\left(1-\alpha\right)\p\left(x_{\tau}=2\right)v\Esp\left[e^{-r\tau}|\tau_{2}\leq\tau\right]+\Esp\left[\int_{0}^{\tau}e^{-rs}ds\right]\beta c.
\]

Consider the case $\kappa\leq\tilde{\kappa}\left(T,\lambda,r,\alpha\right)$.
Consider an investment schedule $\tau$ given by
(\ref{mech_with_S}), i.e., such that stopping happens either immediately at the
moment of the second stage completion, or in state $0$ at a possibly
random interim deadline. Further, consider an alternative investment schedule
$\hat{\tau}=\tau_{2}\land T$. Given the two investment schedules,  $\p\left(x_{\hat{\tau}}=2\right)>\p\left(x_{\tau}=2\right)$.
Further, $\Esp\left[e^{-r\hat{\tau}}|\tau_{2}\leq\hat{\tau}\right]=\Esp\left[e^{-r\tau}|\tau_{2}\leq\tau\right]$
and $\Esp\left[\int_{0}^{\hat{\tau}}e^{-rs}ds\right]>\Esp\left[\int_{0}^{\tau}e^{-rs}ds\right]$.
As $W\left(\hat{\tau}\right)>W\left(\tau\right)$ and $\kappa<\tilde{\kappa}\left(T,\lambda,r,\alpha\right)$,
the agent prefers to implement an investment schedule $\hat{\tau}$
rather than $\tau$.

Consider now the case $\kappa>\tilde{\kappa}\left(T,\lambda,r,\alpha\right)$.
The application of the arguments from the proof of Lemma \ref{lemma_effic_sch2}
establishes the result.


\end{proof}

\subsection{Disclosure of project completion with a deterministic delay}

\label{app_delay}

\begin{proposition}\label{Lambda} Assume $\kappa\in(0,\kappa^{NI}\left(T,\lambda\right)]$
and $T>\bar{S}^{NI}$. The optimal mechanism provides no information until
$t=\bar{S}^{NI}$. At each $t\geq \bar{S}^{NI}$, it generates a recommendation
to stop iff the second stage of the project was completed at date
$\pi\left(t\right)$ in the past, where 
\[
\pi\left(t\right)=-\frac{1}{\lambda}\left(1+\frac{1}{\lambda}\mathcal{W}_{-1}(-\frac{1}{\kappa}e^{-1-\lambda t}\lambda t)\right),
\]
where $\mathcal{W}_{-1}(.)$ denotes the $-1$ branch of Lambert $W$
function.

\end{proposition}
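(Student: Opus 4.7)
\textbf{Proof plan for Proposition \ref{Lambda}.} The strategy is to construct an obedient investment schedule that (i) almost surely stops only after the second stage of the project is reached and (ii) leaves the principal exactly indifferent between stopping and continuing at every date $t\geq \bar{S}^{NI}$. Given these two properties, Lemma \ref{lemma_opt_sch1} delivers optimality. The plan is first to translate the mechanism into an explicit stopping rule, then to compute the posterior belief about the first stage at each date, and finally to pin down $\pi(t)$ from the indifference condition, solving a Lambert-$W$ equation.

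First I would recast the mechanism as follows: let $\pi:[\bar{S}^{NI},T]\to[0,T]$ be strictly increasing with $\pi(\bar{S}^{NI})=0$, and set $\tau=\pi^{-1}(\tau_2)$ if $\tau_2\leq \pi(T)$ and $\tau=T$ otherwise. Under this schedule a stop recommendation at $t$ reveals $\tau_2=\pi(t)\leq t$, so $\tau\geq \tau_2\land T$ almost surely, which already gives property (1) of Lemma \ref{lemma_opt_sch1}. Next I would compute the principal's posterior at $t$ conditional on no stop recommendation having arrived. For $t<\bar{S}^{NI}$ no information is released, so $q_n(t)=p_n(t)$ and Lemma \ref{NI_obed} ensures $V_t(\tau)=V_t^{NI}\geq 0$. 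For $t\geq \bar{S}^{NI}$, ``no stop so far'' is equivalent to $\tau_2>\pi(t)$, whence by Bayes' rule
\[
q_1(t)=\frac{p_1(t)}{1-p_2(\pi(t))}=\frac{\lambda t\, e^{-\lambda t}}{e^{-\lambda\pi(t)}(1+\lambda \pi(t))}.
\]

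I would then pin down $\pi(t)$ by the indifference requirement $V_t(\tau)=0$ for all $t\in[\bar{S}^{NI},\tau)$. From equation (\ref{app_V_dot}), $V_t(\tau)\equiv 0$ on this interval forces $q_1(t)=\kappa$, which together with the posterior above yields $\lambda t\, e^{-\lambda t}=\kappa\, e^{-\lambda\pi(t)}(1+\lambda \pi(t))$. Setting $y=1+\lambda\pi(t)$ turns this into $y e^{-y}=\frac{\lambda t}{\kappa} e^{-1-\lambda t}$, which inverts through the Lambert $W$ function to give $y=-\mathcal{W}_{-1}\!\bigl(-\tfrac{\lambda t}{\kappa} e^{-1-\lambda t}\bigr)$ and hence the stated closed form for $\pi(t)$. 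To check consistency at the junction, I would plug in $t=\bar{S}^{NI}$ and use the no-information FOC $\lambda \bar{S}^{NI} e^{-\lambda \bar{S}^{NI}}=\kappa$, so that the argument of $\mathcal{W}_{-1}$ becomes $-e^{-1}$ and $\mathcal{W}_{-1}(-e^{-1})=-1$ gives $\pi(\bar{S}^{NI})=0$, matching the boundary.

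Finally I would verify the two obedience conditions of Lemma \ref{lemm_obedience}: $V_t(\tau)\geq 0$ holds on $[0,\bar{S}^{NI})$ via the no-information benchmark (Lemma \ref{NI_obed}) and holds with equality on $[\bar{S}^{NI},\tau)$ by construction, while $V_{\tau}^{NI}<0$ holds because the recommendation to stop fires only after $\tau_2$. Since the schedule satisfies both conclusions of Lemma \ref{lemma_opt_sch1} (it stops weakly after $\tau_2\land T$ almost surely, and $V(\tau)=V^{NI}$), it is optimal. The main obstacle is the correct selection of the Lambert branch: the relevant regime is $y\geq 1$, where $y\mapsto ye^{-y}$ is decreasing, so the $-1$ branch is the right choice; I would verify that the argument $-\tfrac{\lambda t}{\kappa} e^{-1-\lambda t}$ stays in $[-e^{-1},0)$ throughout $[\bar{S}^{NI},T]$, which follows from monotonicity of $z\mapsto ze^{-z}$ on $[1,\infty)$ and $\lambda \bar{S}^{NI}\geq 1$ (guaranteed by $\kappa\leq \kappa^{NI}(T,\lambda)$ together with $T>\bar{S}^{NI}$).
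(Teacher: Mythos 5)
Your proposal is correct and follows essentially the same route as the paper's proof: compute the posterior $q_1(t)$ induced by learning only that $\tau_2>\pi(t)$, impose $q_1(t)=\kappa$ so that the principal's continuation value stays at zero on $[\bar{S}^{NI},\tau)$, invert via the $-1$ branch of the Lambert $W$ function, and verify obedience on $[0,\bar{S}^{NI})$ through the no-information benchmark. Your boundary check at $t=\bar{S}^{NI}$ is a worthwhile addition — it shows $\pi(\bar{S}^{NI})=0$ only for the form $\pi(t)=-\frac{1}{\lambda}\left(1+\mathcal{W}_{-1}\left(-\frac{\lambda t}{\kappa}e^{-1-\lambda t}\right)\right)$, indicating that the extra factor $\frac{1}{\lambda}$ multiplying $\mathcal{W}_{-1}$ in the stated expression is a typo.
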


The mechanism from Proposition \ref{Lambda} does not recommend stopping
until the second stage of the project is completed, and thus maximizes
the total surplus. The mechanism makes the principal's individual
rationality constraint bind, $V_{\bar{S}^{NI}}\left(\tau\right)=0$. The
absence of a stopping recommendation after $t=\bar{S}^{NI}$ induces posterior
beliefs $q_{1}\left(t\right)=\kappa,\forall t\geq \bar{S}^{NI}$. Note that
the principal's expected instantaneous payoff within $\Delta t$ is
given by 
\[
v\cdot q_{1}\left(t\right)\lambda\Delta t-c\cdot\Delta t=v\lambda\Delta t\left(q_{1}\left(t\right)-\kappa\right).
\]
No information is provided until $\bar{S}^{NI}$ and after $\bar{S}^{NI}$ the mechanism
keeps the principal's expected instantaneous payoff precisely at $0$,
$\forall t\geq \bar{S}^{NI}$. As a result, the principal's continuation
value is kept at $0$ for all $t\in[\bar{S}^{NI},\tau)$.

The delay is given by $t-\pi\left(t\right)$. At the beginning of
the disclosure, $t=\bar{S}^{NI}$, the delay is $\bar{S}^{NI}$. To keep the belief
regarding state $1$ constant, the delay decreases for all $t\in(\bar{S}^{NI},\tau)$.

\begin{proof}[Proof of Proposition \ref{Lambda}]

Posterior beliefs at date $\pi$ induced by the disclosure of the
absence of second stage completion are given by 
\begin{eqnarray*}
q_{0}\left(\pi\right) & = & \frac{p_{0}\left(\pi\right)}{p_{0}\left(\pi\right)+p_{1}\left(\pi\right)},\\
q_{1}\left(\pi\right) & = & \frac{p_{1}\left(\pi\right)}{p_{0}\left(\pi\right)+p_{1}\left(\pi\right)}.
\end{eqnarray*}
As no other evidence is provided during $(\pi,t]$, the beliefs evolve
according to

\begin{eqnarray*}
q_{0}\left(s\right) & = & \frac{e^{-\lambda s}}{1+\lambda\pi},\\
q_{1}\left(s\right) & = & \frac{e^{-\lambda s}\lambda\left(s+\pi\right)}{1+\lambda\pi},
\end{eqnarray*}
where $s\geq\pi$.

The belief regarding state $1$ at current date $t$ is given by 
\begin{equation}
q_{1}\left(t\right)=\frac{e^{-\lambda\left(t-\pi\right)}\lambda t}{1+\lambda\pi}.\label{q1_delay}
\end{equation}
The dynamic of the state is the same as in the no-information benchmark
until $t=\bar{S}^{NI}$. Therefore, 
\[
q_{0}\left(\bar{S}^{NI}\right)=p_{0}\left(\bar{S}^{NI}\right)=\frac{\kappa}{\lambda \bar{S}^{NI}}\quad\text{and}\quad q_{1}\left(\bar{S}^{NI}\right)=p_{1}\left(\bar{S}^{NI}\right)=\kappa.
\]
The dynamics for $t\geq \bar{S}^{NI}$ then is $q_{1}\left(t\right)=\kappa$,
$\dot{q}_{1}\left(t\right)=0$. Solving from (\ref{q1_delay}), 
\[
\pi=-\frac{1}{\lambda}\left(1+\frac{1}{\lambda}\mathcal{W}_{-1}(-\frac{1}{\kappa}e^{-1-\lambda t}\lambda t)\right).
\]

The recommendation mechanism $\tau$ is obedient. $\tau\geq\tau_{2}\land T$
implies that the recommendation to stop comes only if the second stage
of the project has already been completed, and thus immediate stopping
is clearly optimal for the principal. The recommendation not to stop
is also obedient. $V_{t}\left(\tau\right)\geq0,\forall t\in[0,\bar{S}^{NI})$
is formally demonstrated in the proof of obedience for Proposition
\ref{delayed_tau2}. I proceed by showing that $V_{t}\left(\tau\right)=0,$
$\forall t\in[\bar{S}^{NI},\tau)$. Writing out $V_{t}\left(\tau\right)$
in the recursive form yields 
\[
\begin{aligned}V_{t}\left(\tau\right) & =\left(v\lambda q_{1}\left(t\right)-c\right)\Delta t+\left(1-\lambda q_{1}\left(t\right)\Delta t\right)V_{t+\Delta t}\left(\tau\right)\\
 & =v\lambda\left(q_{1}\left(t\right)-\kappa\right)\Delta t+\left(1-\lambda q_{1}\left(t\right)\Delta t\right)V_{t+\Delta t}\left(\tau\right).
\end{aligned}
\]
As $q_{1}\left(t\right)=\kappa$, $\forall t\in[\bar{S}^{NI},\tau)$, it
becomes 
\[
V_{t}\left(\tau\right)=\left(1-\lambda q_{1}\left(t\right)\Delta t\right)V_{t+\Delta t}\left(\tau\right),\forall t\in[\bar{S}^{NI},\tau).
\]
Differentiating both sides w.r.t. $\Delta t$ yields 
\[
0=-\lambda q_{1}\left(t\right)V_{t+\Delta t}\left(\tau\right)+\dot{V}_{t+\Delta t}\left(\tau\right).
\]
This differential equation together with the boundary condition $V_{T}(\tau)=0$
has a unique solution $V_{t}(\tau)=0$ for all $t\in\left[\bar{S}^{NI},T\right]$.

\end{proof}

\subsection{The case of no project completion deadline}

\label{section_Tinfty}

Importantly, the presence of a hard project deadline $T$ serves as
one of the necessary and sufficient conditions for the agent to commit
to an interim reporting deadline. Without a hard deadline $T$, the
principal's incentives under full information are different. Recall
from Lemma \ref{lemma_FI} the principal's incentive to continue investing
decreases in the length of absence of the first stage completion.
In the case $T\rightarrow\infty$, the continuation value $V_{t|1}^{FI}$
is constant and given by $v\left(1-\kappa\right)$. As a result, the
principal's incentive to continue investing given the absence of stage
completion does not change over time. Thus, if the principal opts
in, he never chooses to stop investing before the completion of the
second stage occurs. As a result, setting an interim deadline stops
serving as an agent's tool to incentivize the principal's investment.
The agent's information policy in the case of no project deadline
is given in Lemma \ref{T_infty}.

\begin{lemma}\label{T_infty}

Assume that $T\rightarrow\infty$. In that case, if $\kappa<\frac{1}{2}$,
then the agent uses the information policy presented in Proposition
\ref{prop:opt_discl}, Case 2.

\end{lemma}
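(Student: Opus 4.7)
The plan is to take the limit $T \to \infty$ in the four cutoff ratios from Proposition \ref{prop:opt_discl} and show that the ``interim deadline'' region collapses, leaving Case 2 (postponed disclosure of second-stage completion) as the optimal policy throughout the range of parameters in which the principal can be persuaded to invest.

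First I would compute the four limits using the closed forms derived in the proofs of Lemmas \ref{no_feedb}, \ref{lemma_FI}, \ref{kappa_tilde}, and Proposition \ref{prop:opt_discl}:
\begin{equation*}
\lim_{T\to\infty}\kappa^{ND}(T,\lambda) = \lim_{T\to\infty}e^{-\lambda T}\lambda T = 0,
\end{equation*}
\begin{equation*}
\lim_{T\to\infty}\kappa^{NI}(T,\lambda) = \kappa_0 \doteq 0.298426,
\end{equation*}
\begin{equation*}
\lim_{T\to\infty}\tilde{\kappa}(T,\lambda) = \lim_{T\to\infty}\frac{1-e^{\lambda T}+\lambda T}{2-2e^{\lambda T}+\lambda T} = \tfrac{1}{2},
\end{equation*}
\begin{equation*}
\lim_{T\to\infty}\kappa^{FI}(T,\lambda) = \lim_{T\to\infty}\frac{1-e^{-\lambda T}}{2-e^{-\lambda T}} = \tfrac{1}{2}.
\end{equation*}
Thus the entire subinterval on which Case 3 (immediate disclosure plus interim deadline) applies, $(\tilde{\kappa},\kappa^{FI})$, shrinks to the empty set, and the non-disclosure subregion $\kappa\le\kappa^{ND}$ also collapses to $\{0\}$. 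Only Case 2, covering $\kappa\in(0,\tfrac{1}{2})$, survives as a candidate optimal regime whenever the principal is willing to invest.

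Second, I would verify that the ``postponed disclosure of completion'' construction from Proposition \ref{delayed_tau2} still yields a well-defined, obedient, and optimal investment schedule when $T=\infty$. The candidate schedule is $\tau = S^{\star}\lor \tau_2$, where $S^{\star}$ is selected so that the principal's binding individual rationality constraint is $V(\tau)=V^{NI}$ (with $V^{NI}=0$ on the subrange $\kappa>\kappa^{NI}$ and $V^{NI}=\int_{0}^{\bar{S}^{NI}}(v\lambda p_1(s)-c)ds$ on $\kappa\le\kappa^{NI}$). Existence of $S^{\star}$ follows exactly as in the proof of Proposition \ref{delayed_tau2}: the function $g(S)=\int_{\bar{S}^{NI}}^{S}(\kappa-p_1(z))dz$ is nondecreasing with $g(\bar{S}^{NI})=0$ and tends to $+\infty$ as $S\to\infty$ (because $p_1(z)\to 0$ exponentially, so the integrand is bounded below away from $0$), while the complementary posterior-weighted term $k(S)=\int_{S}^{\infty}(q_1(z)-\kappa)(1-\mathbb{P}(x_z=2))dz$ is positive at $\bar{S}^{NI}$, nonincreasing, and vanishes at infinity; the intermediate value theorem gives a unique $S^{\star}<\infty$. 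Obedience of the resulting direct recommendation mechanism is verified exactly as in Proposition \ref{delayed_tau2}: before $S^{\star}$ the belief dynamics coincide with the no-information case, and from $S^{\star}$ onward absence of a stop recommendation implies $x_t\in\{0,1\}$ with $q_1(t)=\lambda t/(1+\lambda t)$ strictly increasing and eventually above $\kappa$, keeping $V_t(\tau)\ge 0$.

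Third, I would argue optimality by the same two-term decomposition $W(\tau)=\mathbb{P}(x_\tau=2)v - V(\tau)$ used in Lemma \ref{lemma_opt_sch1}. Since $\tau\ge\tau_2$ almost surely under the candidate schedule, total surplus attains its upper bound $\lim_{T\to\infty}\mathbb{P}(x_T=2)v = v$, while the principal's individual rationality constraint binds at the binding date, driving the principal's surplus to its lower bound $V^{NI}$. Any alternative schedule that stops in state $0$ or $1$ with positive probability strictly lowers total surplus without relaxing individual rationality beyond what is already exploited; this is the $T\to\infty$ analogue of the argument ruling out stopping in state $1$ in Lemma \ref{lemma_effic_sch2}, and the agent has no reason to stop in state $0$ because, as noted in the text preceding the lemma, the constant continuation value $V_{t|1}^{FI}=v(1-\kappa)$ means that once opted in, the principal never wants to stop before completion, so an interim deadline provides no additional slack in the $t=0$ constraint while it strictly reduces total surplus.

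The main obstacle I expect is rigorously handling the $T=\infty$ limit of the objective and the integrals defining $V(\tau)$: one must ensure that $\mathbb{E}[\tau]<\infty$ (so that the agent's payoff is finite) and that the dominated convergence arguments used to pass the Case 2 construction to the limit are valid. This is delivered by the exponential decay of $p_1(z)$ and $1-\mathbb{P}(x_z=2)$, which makes every relevant integral absolutely convergent on $[0,\infty)$ and pins down $S^{\star}$ uniquely.
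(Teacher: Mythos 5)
Your proposal is correct, but it takes a genuinely different route from the paper. The paper's proof is a short, direct argument entirely inside the full-information benchmark at $T=\infty$: it computes the state values $v_{2}=v$ and, from the stationarity of the state-$1$ continuation problem, $v_{1}=v\left(1-\kappa\right)$, deduces that the principal is willing to invest in state $0$ if and only if $\kappa\leq\frac{1}{2}$, and then observes that this condition is time-invariant --- so once the principal opts in he never wants to stop before $\tau_{2}$, information about $\tau_{1}$ is never decision-relevant, an interim deadline ceases to be an implementable or useful instrument, and the agent's problem collapses to the Case 2 problem of disclosing only $\tau_{2}$ with optimal postponement. You instead compute the $T\rightarrow\infty$ limits of the four cutoffs (all of which you get right: $\kappa^{ND}\rightarrow0$, $\tilde{\kappa}\rightarrow\frac{1}{2}$, $\kappa^{FI}\rightarrow\frac{1}{2}$, so the Case 3 band vanishes) and then re-run the Case 2 machinery --- existence of $S^{*}$ via the $g$/$k$ intermediate-value argument, obedience, and the surplus decomposition $W\left(\tau\right)=\p\left(x_{\tau}=2\right)v-V\left(\tau\right)$ --- directly at $T=\infty$. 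Correctly, you treat the limit computation only as motivation and do the real work at $T=\infty$, since the finite-$T$ partition does not by itself transfer to the infinite-horizon problem. What your approach buys is an explicit check that the finite-$T$ constructions survive the passage to the limit (absolute convergence of the integrals, $\Esp\left[\tau\right]<\infty$), which the paper leaves implicit; what the paper's approach buys is brevity and a cleaner economic message --- without a hard deadline there is simply no date at which a ``no milestone yet'' report would make the principal want to quit, so bad news has no persuasive value. Your third step would be tighter if you led with that obedience observation (a stop recommendation in state $0$ violates $V_{\tau}^{NI}<0$ from Lemma \ref{lemm_obedience} when $\kappa<\frac{1}{2}$) rather than the somewhat vaguer claim that stopping in state $0$ ``provides no additional slack,'' but you do invoke the constant continuation value $v\left(1-\kappa\right)$, so the essential point is present.
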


\begin{proof}[Proof of Lemma \ref{T_infty}]

Under full information and the absence of an exogenous deadline, the
principal assigns value $v_{x}$ to each state $x\in\left\{ 0,1,2\right\} $.
Clearly, $v_{2}=v$ as the principal stops immediately and gets $v$.
In state $1$, at each $t$ the principal gets $v\Delta t$ with probability
$\lambda\Delta t$ and pays $c\Delta t$. As a result, the principal's
continuation value is constant. Assume that $\kappa<1$, as otherwise
$c\geq\lambda v$ and the principal chooses not to invest in state
$1$. As the principal's continuation value in state $1$ does not
change over time, 
\[
0=\lambda\cdot(v_{2}-v_{1})-c,
\]
and so 
\[
v_{1}=v-\frac{c}{\lambda}=v(1-\kappa).
\]
Thus, the principal wants to invest in state $0$ if $c\leq\lambda v_{1}$,
i.e., $\kappa\leq\frac{1}{2}$.

Finally, as the information regarding $\tau_{1}$ is not decision-relevant
for the principal, for $\kappa<\frac{1}{2}$, the agent chooses the
information policy that discloses only the completion of the
second stage of the project and optimally postpones the disclosure
to make the principal's individual rationality constraint bind.

\end{proof} 
\end{document}